\newcommand{\reals}{\mathbb{R}}
\newtheorem{lemma}{Lemma}
\newtheorem{theorem}{Theorem}
\newtheorem{definition}{Definition}
\newtheorem{corollary}{Corollary}
\newlang{\SETH}{\mathrm{SETH}}
\newlang{\NSETH}{\mathrm{NSETH}}
\newlang{\cnfsat}{\textsc{cnfsat}}
\newcommand{\ksat}[1][k]{\lang{\textsc{$#1$-sat}}}
\DeclareMathOperator{\insrt}{insert}
\DeclareMathOperator{\delete}{delete}
\DeclareMathOperator{\idx}{index}
\DeclareMathOperator{\deleteMin}{delete\_min}
\DeclareMathOperator{\rank}{rank}
\DeclareMathOperator{\head}{head}
\DeclareMathOperator{\nxt}{next}
\DeclareMathOperator{\maj}{THR}
\DeclareMathOperator{\val}{val}
\DeclareMathOperator{\sgn}{sgn}
\DeclareMathOperator{\Query}{Query}
\author{Marvin K\"unnemann\thanks{Part of this research was conducted while being affiliated with Max Planck Institute for Informatics, Saarbr\"ucken, Germany and visiting the Simons Institute for the Theory of Computing, Berkeley.} \qquad Daniel Moeller\thanks{This research was partially supported by the Army Research Office grant number W911NF-15-1-0253.} \footnotemark[3] \qquad Ramamohan Paturi\footnotemark[2] \thanks{This research is supported by the Simons Foundation and was partially conducted while visiting the Simons Institute for the Theory of Computing, Berkeley. This research is supported by NSF
grant CCF-1213151 from the
Division  of Computing and Communication Foundations.
Any opinions,
findings and conclusions or
recommendations expressed in this material are those
of the authors and do
not necessarily reflect the
views of the National Science Foundation. The
conference version is available at Springer via
http://dx.doi.org/10.1007/978-3-319-34171-2\_21} \qquad Stefan Schneider\footnotemark[3]
  \\ \\
  Department of Computer Science and Engineering\\UC, San Diego\\La Jolla, CA 92093}
\title{Subquadratic Algorithms for Succinct Stable Matching}
\begin{document}
\maketitle

\begin{abstract}
  We consider the stable matching problem when the preference lists
  are not given explicitly but are represented in a succinct way and
  ask whether the problem becomes computationally easier and
  investigate other implications. We give subquadratic algorithms for
  finding a stable matching in special cases of natural succinct
  representations of the problem, the $d$-attribute, $d$-list,
  geometric, and single-peaked models. We also present algorithms for
  verifying a stable matching in the same models. We further show that
  for $d = \omega(\log n)$ both finding and verifying a stable
  matching in the $d$-attribute and $d$-dimensional geometric models
  requires quadratic time assuming the Strong Exponential Time
  Hypothesis. This suggests that these succinct models are not
  significantly simpler computationally than the general case for
  sufficiently large $d$.
\end{abstract}

\section{Introduction}
The stable matching problem has applications that vary from
coordinating buyers and sellers to assigning students to public
schools and residents to hospitals
\cite{Gusfield1989,knuth1997stable,Roth1990}. Gale and Shapley \cite{GaleShapley1962}
proposed a quadratic time \emph{deferred acceptance} algorithm for
this problem which has helped clear matching markets in many real-world settings. For arbitrary preferences, the deferred acceptance algorithm is optimal and even verifying that a given matching is stable requires quadratic
time
\cite{ng1990lower,segal2007communication,gonczarowski2015stable}. This is reasonable since representing all participants' preferences requires quadratic space. However,
in many applications the preferences are not arbitrary and can
have more structure. For example, top doctors are likely to be
universally desired by residency programs and students typically seek
highly ranked schools. In these cases
participants can represent their preferences succinctly. It is
natural to ask whether the same quadratic time bounds apply with compact and
structured preference models that have subquadratic representations. This will provide a more nuanced understanding of where the complexity lies: Is stable matching inherently complex, or is the complexity merely a result of the large variety of possible preferences? To this end, we examine several restricted
preference models with a particular focus on two originally proposed by Bhatnagar et
al. \cite{bhatnagar2008sampling}, the $d$-attribute and $d$-list
models. Using a wide range of techniques we provide algorithms and conditional hardness results for several settings of these models.

In the \emph{$d$-attribute} model, we assume that there are $d$ different
attributes (e.g.~income, height, sense of humor, etc.) with a fixed, possibly objective, ranking of the men for each attribute. Each woman's
preference list is based on a linear combination of the attributes of
the men, where each woman can have different weights for each
attribute. Some women may care more about, say, height whereas others
care more about sense of humor. Men's preferences are defined
analogously. This model is applicable in large settings, such as
online dating systems, where participants lack the resources to form
an opinion of every other participant. Instead the system can rank the
members of each gender according to the $d$ attributes and each
participant simply needs to provide personalized weights for the
attributes. The combination of attribute values and weights implicitly
represents the entire preference
matrix. Bogomolnaia and Laslier \cite{bogomolnaia2007euclidean} show that representing all
possible $n\times n$ preference matrices requires $n-1$
attributes. Therefore it is reasonable to expect that when $d \ll n-1$,
we could beat the worst case quadratic lower bounds for the general
stable matching problem.

In the \emph{$d$-list} model, we assume that there are $d$ different
rankings of the men. Each women selects one of the $d$ lists as her
preference list. Similarly, each man chooses one of $d$ lists of women
as his preference list. This model captures the setting where members
of one group (i.e.~student athletes, sorority members, engineering
majors) may all have identical preference lists. Mathematically, this model is actually a special case of the
$d$-attribute model where each participant places a positive weight on
exactly one attribute. However, its motivation is distinct and we can
achieve improved results for this model.

Chebolu et al.~prove that approximately counting stable matchings in
the $d$-attribute model for $d \geq 3$ is as hard as the general case
\cite{chebolu2010complexity}. Bhatnagar et al. showed that sampling
stable matchings using random walks can take exponential time even for
a small number of attributes or lists but left it as an open question
whether subquadratic algorithms exist for these models
\cite{bhatnagar2008sampling}.

We show that faster algorithms exist for finding a stable matching in some special cases of these
models. In particular, we provide subquadratic algorithms for the
$d$-attribute model, where all values and weights are from a small
set, and the one-sided $d$-attribute model, where one side of the
market has only one attribute. These results show we can achieve
meaningful improvement over the general setting for some restricted
preferences.

While we only provide subquadratic algorithms to find
stable matchings in special cases of the attribute model, we have
stronger results concerning verification of stable matchings. We
demonstrate optimal subquadratic stability testing algorithms for the
$d$-list and boolean $d$-attribute settings as well as a subquadratic
algorithm for the general $d$-attribute model with constant $d$. These
algorithms provide a clear distinction between the attribute model and
the general setting. Moreover, these results raise the question of
whether verifying and finding a stable matching are equally hard
problems for these restricted models, as both require quadratic time in the general case.

Additionally, we show that the stable matching problem in the $d$-attribute
model for $d = \omega(\log n)$ cannot be solved in subquadratic time
under the Strong Exponential Time Hypothesis ($\SETH$)
\cite{ImpagliazzoPaturi2001, ImpagliazzoPaturiZane2001}. We show
$\SETH$-hardness for both finding and verifying a stable matching and for checking if a given pair is in any or all stable matchings,
even when the weights and attributes are boolean. This adds the stable
matching problem to a growing list of $\SETH$-hard problems, including
Fr\'{e}chet distance \cite{Bringmann2014}, edit distance
\cite{BackursIndyk2015}, string matching
\cite{AbboudBackursWilliams2015}, $k$-dominating set
\cite{PatrascuWilliams2010}, orthogonal vectors \cite{Williams2004},
and vector domination \cite{ImpagliazzoPaturiSchneider2013}. Thus the
quadratic time hardness of the stable matching problem in the general
case extends to the more restricted and succinct $d$-attribute
model. This limits the space of models where we can hope to find
subquadratic algorithms.

We further present several results in related succinct preference areas. Single-peaked preferences are commonly used to model preferences in social choice theory because of their simplicity and because they often approximate reality. Essentially, single-peaked preferences require that everyone agree on a common spectrum along which all alternatives can be ranked. However, each individual may have a different ideal choice and prefers the ``closest'' alternatives. A typical example is the political spectrum where candidates fall somewhere between liberal and conservative. In this setting, voters tend to prefer the candidates that are closer to their own ideals. As explained below, these preferences can be succinctly represented. Bartholdi and Trick \cite{BartholdiTrick1986} present a subquadratic time algorithm for stable roommates (and stable matching) with narcissistic, single-peaked preferences. In the narcissistic case, the participants are located at their own ideals. This makes sense in some applications but is not always realistic. We provide a subquadratic algorithm to verify if a given matching is stable in the general single-peaked preference model. Chung uses a slightly different model of single-peaked preferences where a stable roommate matching always exists \cite{Chung2000}. In this model the participants would rather be unmatched than matched with someone further away from their ideal than they are themselves, leading to incomplete preference lists.

We extend our algorithms and lower bounds for the attribute model
to the geometric model where preference orders are formed according to
euclidean distances among a set of points in multi-dimensional space.
Arkin et al.~\cite{ArkinBaeEfrat2009} derive a subquadratic algorithm
for stable roommates with narcissistic geometric preferences in
constant dimensions. Our algorithms do not require the preferences to be narcissistic.

It is worth noting that all of our verification and hardness results apply to the stable roommates problem as well. This problem is identical to stable matching except we remove the bipartite distinction between the participants \cite{Gusfield1989}. Unlike with bipartite stable matching, there need not always exist a stable roommate matching. However Irving discovered an algorithm that produces a stable matching or identifies that none exists in quadratic time \cite{Irving1985}. Since finding a stable roommate matching is strictly harder than finding a stable matching, this is also optimal. Likewise, verification is equally hard for both stable roommates and stable matching, as we can simply duplicate every participant and treat the roommate matching as bipartite. Therefore, our results show that verification can be done more efficiently for the stable roommates problem when the preferences are succinct.

Finally, we address the issue of strategic behavior in these restricted models. It is often preferable for a market-clearing mechanism to incentivize truthful behavior from the participants so that the outcome faithfully captures the optimal solution. Particularly in matching markets, this objective complements the desire for a stable matching where participants have incentives to cooperate with the outcome. Roth \cite{Roth1982} showed that there is no strategy proof mechanism to find a stable matching in the general preferences setting. Additionally, if a mechanism outputs the man-optimal stable matching, the women can manipulate it to obtain the woman-optimal solution by truncating their preference lists \cite{Roth1982,GaleSotomayor1985}. Even if the women are required to rank all men, they can still achieve more preferable outcomes in some instances \cite{TeoSethuramanTan2001,KobayashiMatsui2010}. However, in the $d$-attribute, $d$-list, single-peaked, and geometric preference models, there are considerably fewer degrees of freedom for preference misrepresentation. Nevertheless, we show that there is still no strategy proof mechanism to find a stable matching for any of these models with $d\geq 2$ and non-narcissistic preferences.

Dabney and Dean \cite{dabney2010adaptive} study an alternative
succinct preference representation where there is a canonical
preference list for each side and individual deviations from this list
are specified separately. They provide an adaptive $O(n+k)$ time
algorithm for the special one-sided case, where $k$ is the number of
deviations.

\section{Summary of Results}
Section \ref{sec:findsmall} gives an $O(C^{2d} n (d + \log n))$ time algorithm for
finding a stable matching in the $d$-attribute model if both the attributes
and weights are from a set of size at most $C$. This gives a strongly
subquadratic algorithm (i.e.~$O(n^{2-\varepsilon})$ for $\varepsilon
>0$) if $d < \frac1{2\log C} \log n$. 

Section \ref{sec:onesided} considers an asymmetric case, where one
side of the matching market has $d$ attributes, while the other side
has a single attribute. We allow both the weights and attributes to be
arbitrary real values. Our algorithm for finding a stable matching in
this model has time complexity $\tilde{O}(n^{2-1/\lfloor d/2
  \rfloor})$, which is strongly subquadratic for constant $d$. 

In Section \ref{sec:verifyreal} we consider the problem of verifying
that a given matching is stable in the $d$-attribute model with real
attributes and weights. The time complexity of our algorithm is
$\tilde{O}(n^{2-1/2d})$, which is again strongly subquadratic for
constant $d$. 

Section \ref{sec:lists} gives an $O(dn)$ time algorithm for verifying a stable
matching in the $d$-list model. This is
linear in its input size and is therefore optimal.

In Section \ref{sec:verifyboolean} we give a randomized $\tilde{O}(n^{2-1/O(c
  \log^2(c))})$ time algorithm for $d = c \log n$ for verifying a
stable matching in the $d$-attribute model when both the weights and
attributes are boolean. This algorithm is strongly subquadratic for $d
= O(\log n)$.

In Section \ref{sec:hardness} we give a conditional lower bound for
the three problems of finding and verifying a stable matching in the
$d$-attribute model as well as the stable pair problem. We show that
there is no strongly subquadratic algorithm for any of these problems
when $d = \omega(\log n)$ assuming the Strong Exponential Time
Hypothesis. For the stable pair problem we give further evidence that
even nondeterminism does not give a subquadratic algorithm.

Finally in Section \ref{sec:otherpref} we consider the related
preference models of single-peaked and geometric preferences. We
extend our algorithms and lower bounds for the attribute model to the
geometric model and give an $O(n \log n)$ algorithm for verifying a
stable matching with single-peaked preferences.

\section{Preliminaries}
\label{sec:preliminaries}

A \emph{matching market} consists of a set of men $M$ and a set of
women $W$ with $|M| = |W| = n$. We further have a permutation of $W$
for every $m \in M$, and a permutation of $M$ for every $w \in W$,
called \emph{preference lists}. Note that representing a general
matching market requires size $\Omega(n^2)$.

For a perfect bipartite matching $\mu$, a \emph{blocking pair} with
respect to $\mu$ is a pair $(m,w) \not\in \mu$ where $m \in M$ and $w
\in W$, such that $w$ appears before $\mu(m)$ in $m$'s preference list
and $m$ appears before $\mu(w)$ in $w$'s preference list. A perfect bipartite
matching is called \emph{stable} if there are no blocking pairs. In settings where ties in the preference lists are possible, we consider weakly stable matchings where $(m,w)$ is a blocking pair if and only if both strictly prefer each other to their partner. For simplicity, we assume all preference lists are complete though our results trivially extend to cases with incomplete lists.

Gale's and Shapley's deferred acceptance algorithm
\cite{GaleShapley1962} works as follows. While there is an
unmatched man $m$, have $m$ \emph{propose} to his most preferred
woman who has not already rejected him. A woman accepts a proposal if
she is unmatched or if she prefers the proposing man to her current
partner, leaving her current partner unmatched. Otherwise, she rejects
the proposal. This process finds a stable matching in time $O(n^2)$.

A matching market in the $d$-attribute model consists of $n$ men and
$n$ women as before. A participant $p$ has attributes $A_i(p)$ for $1
\leq i \leq d$ and weights $\alpha_i(p)$ for $1 \leq i \leq d$. For a
man $m$ and woman $w$, $m$'s \emph{value} of $w$ is given by
$\val_m(w) = \langle \alpha(m), A(w) \rangle = \sum_{i=1}^d
\alpha_i(m) A_i(w)$. $m$ ranks the women in decreasing order of
value. Symmetrically, $w$'s value of $m$ is $\val_w(m) = \sum_{i=1}^d
\alpha_i(w)A_i(m)$. Note that representing a matching market in the
$d$-attribute model requires size $O(dn)$. Unless otherwise specified,
both attributes and weights can be negative.

A matching market in the $d$-list model is a matching market where
both sides have at most $d$ distinct preference lists. Describing a
matching market in this model requires $O(dn)$ numbers.

Throughout the paper, we use $\tilde{O}$ to suppress polylogarithmic
factors in the time complexity.

\section{Finding Stable Matchings}

\subsection{Small Set of Attributes and Weights}
\label{sec:findsmall}

We first present a stable matching algorithm for the $d$-attribute
model when the attribute and weight values are limited to a set of constant size. In particular, we assume that the number of
possible values for each attribute and weight for all participants is
bounded by a constant $C$.

\begin{algorithm}[H]
Group the women into sets $S_i$ with a set for each of the $C' = O(C^{2d})$ types of women. ($O(C^d)$ possible attribute values and $O(C^d)$ possible weight vectors.)\\

Associate an empty min-heap $h_i$ with each set $S_i$.\\
\For{each man $m$}{
	Create $m$'s preference list of sets $S_i$.\\
	$\idx (m)\gets 1$
}
\While{there is a man $m$ who is not in any heap}{
	Let $S_i$ be the $\idx(m)$ set on $m$'s list.\\
	\If{$|h_i| < |S_i|$}{
		$h_i.\insrt(m)$
	}
	\Else{
		\If{$\val_{S_i}(m) > \val_{S_i}(h_i.min)$}{
			$h_i.\deleteMin()$\\
			$h_i.\insrt(m)$\\
		}
	}
	$\idx(m) \gets \idx(m) + 1$\\
}
\For{$i = 1$ to $C'$}{
$\mu\gets \mu\bigcup$ Arbitrarily pair women in $S_i$ with men in $h_i$.\\
}
\Return{$\mu$}
\caption{Small Constant Attributes and Weights}
\label{alg:smallstablematching}
\end{algorithm}

\begin{theorem}\label{thm:smallstablematching}
  There is an algorithm to find a stable matching in the
  $d$-attribute model with at most a constant $C$ distinct attribute and weight 
  values in time $O(C^{2d}n(d + \log n))$.
\end{theorem}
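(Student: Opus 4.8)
The plan is to recognize Algorithm~\ref{alg:smallstablematching} as Gale--Shapley deferred acceptance run on a ``collapsed'' instance with only $C' \le C^{2d}$ kinds of women, and then to adapt the classical correctness analysis to this many-to-one setting. The fact that legitimizes the collapse is that two women sharing both an attribute vector and a weight vector are interchangeable: for every man $m$ the value $\val_m(w) = \langle \alpha(m), A(w)\rangle$ depends only on $w$'s attributes, and a woman ranks the men using only her own attributes and weights. So partitioning the women into the $C'$ types $S_i$ (one per attribute-vector/weight-vector pair) discards nothing: each man's order over women is induced by an order over the $S_i$ (value-ties among types broken arbitrarily), and the problem is exactly many-to-one stable matching in which $S_i$ is a post of capacity $|S_i|$, with the min-heap $h_i$, keyed by $\val_{S_i}(\cdot)$, holding its current tentative assignees.

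I would next handle termination, the perfect-matching property, and the running time together. A heap that has reached its capacity stays at capacity, since afterwards it is modified only by a delete-min immediately followed by an insert. Consequently, whenever a man $m$ is processed with $\idx(m)$ at the last type of his list, that type cannot be full --- otherwise all $C'$ heaps would be full and would already hold $\sum_i |S_i| = n$ men with $m$ not among them --- so $m$ is inserted. Hence no index ever exceeds $C'$, the loop makes at most $nC'$ iterations (each $O(\log n)$ heap work plus an $O(1)$ key comparison), and it halts with every man in exactly one heap and, since $\sum_i |S_i| = n$, every heap exactly full; the arbitrary within-type pairing then produces a perfect matching $\mu$. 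Preprocessing builds one sorted list of $C'$ types per man using $O(C'd)$ arithmetic for the inner products and $O(C'\log C')$ for the sort, and $\log C' = O(d)$ since $C$ is constant, so the total is $O(C^{2d}n(d+\log n))$.

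It remains to show $\mu$ is (weakly) stable. Suppose $(m,w) \notin \mu$ is a blocking pair, $w$ of type $i$; by the convention for ties, $m$ strictly prefers $w$ to $\mu(m)$ and $w$ strictly prefers $m$ to $\mu(w)$. Because $m$'s values depend only on attribute vectors, $S_i$ has strictly larger value for $m$ than $\mu(m)$'s type and therefore precedes it on $m$'s sorted list, so $m$ proposed to $S_i$ at some point. From the moment that proposal was resolved --- whether $m$ was rejected outright or inserted and later evicted --- $h_i$ has been full, and a delete-min-then-insert only raises the heap's minimum key, so every man that $h_i$ contains from then on has $\val_{S_i}$-value at least $\val_{S_i}(m)$. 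In particular $\mu(w)$, who lies in $h_i$ at termination, satisfies $\val_w(\mu(w)) = \val_{S_i}(\mu(w)) \ge \val_{S_i}(m) = \val_w(m)$, so $w$ weakly prefers $\mu(w)$ to $m$ --- contradicting that $(m,w)$ blocks.

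The step I expect to need the most care is this last one, precisely because of ties: distinct types may share an attribute vector, so a man can be genuinely indifferent between two types and the heap keys can be equal, and one must check that the heap minimum is \emph{non}-decreasing (not strictly increasing) and combine this with the rule that a blocking pair requires \emph{both} parties to strictly improve. Everything else is the textbook Gale--Shapley argument transported to the many-to-one setting; the only genuinely new ingredient is the easy-but-essential observation that collapsing equal-type women is lossless.
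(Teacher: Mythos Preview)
Your proposal is correct and follows essentially the same approach as the paper: both collapse women of identical attribute/weight type into a single post and observe that Algorithm~\ref{alg:smallstablematching} is man-proposing deferred acceptance on the resulting many-to-one instance. The paper dispatches correctness in one sentence by invoking the known stability of many-to-one deferred acceptance, whereas you spell out the termination and blocking-pair arguments in detail (including the tie-handling subtlety); this is extra care rather than a different route.
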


\begin{proof}
Consider Algorithm \ref{alg:smallstablematching}. First observe that each man is indifferent between the women in a given set $S_i$ because each woman has identical attribute values. Moreover, the women in a set $S_i$ share the same ranking of the men, since they have identical weight vectors. Therefore, since we are looking for a stable matching, we can treat each set of women $S_i$ as an individual entity in a many to one matching where the capacity for each $S_i$ is the number of women it contains.

With these observations, the stability follows directly from the stability of the standard deferred acceptance algorithm for many-one stable matching. Indeed, each man proposes to the sets of women in the order of his preferences and each set of women tentatively accepts the best proposals, holding onto no more than the available capacity.

The grouping of the women requires $O(C^{2d} + dn)$ time to initialize the groups and place each woman in the appropriate group. Creating the men's preference lists requires $O(dC^{2d}n)$ time to evaluate and sort the groups of women for every man. The while loop requires $O(C^{2d}n(d + \log n))$ time since each man will propose to at most $C^{2d}$ sets of women and each proposal requires $O(d + \log n)$ time to evaluate and update the heap. This results in an overall running time of $O(C^{2d}n(d+\log n))$.
\end{proof}

As long as $d < \frac{1}{2\log C} \log n$, the time complexity in Theorem \ref{thm:smallstablematching} will be subquadratic. It is worth noting that the algorithm and proof actually do not rely on any restriction of the men's attribute and weight values. Thus, this result holds whenever one side's attributes and weight values come from a set of constant size.

\subsection{One-Sided Real Attributes}
\label{sec:onesided}
In this section we consider a one-sided attribute model with real
attributes and weights. In this model, women have $d$ attributes and
men have $d$ weights, and the preference list of a man is given by the
weighted sum of the women's attributes as in the two-sided attribute
model. On the other hand there is only one attribute for the men. The
women's preferences are thus determined by whether they have a
positive or negative weight on this attribute. For simplicity, we
first assume that all women have a positive weight on the men's
attribute and show a subquadratic algorithm for this case. Then we
extend it to allow for negative weights.

To find a stable matching when the women have a global preference list
over the men, we use a greedy approach: process the men from the most
preferred to the least preferred and match each man with the highest
unmatched woman in his preference list. This general technique is not specific to the attribute model but
actually works for any market where one side has a single global
preference list. (e.g.~\cite{dabney2010adaptive} uses a similar
approach for their algorithm.) The complexity lies in repeatedly
finding which of the available women is most preferred by the current
top man.

This leads us to the following algorithm: for every woman $w$ consider
a point with $A(w)$ as its coordinates and organize the set of
points into a data structure. Then, for the men in order of preference, query the set of
points against a direction vector consisting of the man's weight and
find the point with the largest distance along this direction. Remove that point and repeat.

The problem of finding a maximal point along a direction is typically
considered in its dual setting, where it is called the \emph{ray
  shooting problem}. In the ray shooting problem we are given $n$
hyperplanes and must maintain a data structure to answer queries. Each query consists of a vertical ray and the data structure returns the first hyperplane hit by that ray.

The relevant results are in Lemma \ref{lem:ray} which follows from several papers for different values of $d$. For an overview of the ray shooting problem and related range query problems, see \cite{AgarwalErickson1999}.

\begin{lemma}[\cite{HershbergerSuri1995, DobkinKirkpatrick1985,
    AgarwalArgeErickson1998, MatousekSchwarzkopf1992}]
  \label{lem:ray}
  Given an $n$ point set in $\reals^d$ for $d \geq 2$, there is a data
  structure for ray shooting queries with preprocessing time
  $\tilde{O}(n)$ and query time $\tilde{O}(n^{1-1/\lfloor d/2
    \rfloor})$. The structure supports deletions with amortized update
  time $\tilde{O}(1)$.
\end{lemma}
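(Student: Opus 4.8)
This lemma is essentially a citation-gluing exercise: the statement asserts the existence of a ray-shooting data structure with near-linear preprocessing, query time $\tilde O(n^{1-1/\lfloor d/2\rfloor})$, and amortized $\tilde O(1)$ deletions, and it attributes this to four known papers. So the "proof" is really a matter of assembling the right ingredients from the literature and explaining how they combine, rather than proving a new geometric fact from scratch.

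Let me think about what the actual components are. The key object is the dual of the maximization-along-a-direction problem: given $n$ hyperplanes in $\reals^d$, preprocess so that given a vertical ray we can find the first hyperplane hit. The standard approach for this is via the partition-tree / simplicial-partition machinery of Matoušek, combined with cutting-based data structures. For $d \geq 2$, the relevant bound comes from the fact that ray shooting in an arrangement of $n$ hyperplanes can be answered in time $\tilde O(n^{1-1/\lfloor d/2\rfloor})$ with near-linear space — this is the "low space" end of the space-time tradeoff, and the exponent $1-1/\lfloor d/2\rfloor$ is exactly what one gets from the Matoušek partition tree applied to the appropriate substructure (e.g., via a parametric-search reduction to halfspace range searching / simplex range searching, or directly via the results of Agarwal–Arge–Erickson on ray shooting, and Matoušek–Schwarzkopf for the $d=3$ or small-$d$ refinements). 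For $d=2$ specifically, Hershberger–Suri give an optimal $O(\log n)$-query structure, matching $n^{1-1/1}=n^0$, and Dobkin–Kirkpatrick handle the low-dimensional convex-polytope cases.

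So the plan would be: First, state the duality transform that turns "find the point of a planar set maximizing the inner product with a given direction" into "shoot a vertical ray at an arrangement of hyperplanes and report the first one hit," verifying that deletions of points correspond to deletions of hyperplanes. Second, invoke the partition-tree based ray-shooting structure: cite that for $d\geq 2$ one can build, in $\tilde O(n)$ time and space, a structure answering ray-shooting queries in $\tilde O(n^{1-1/\lfloor d/2\rfloor})$ time — pointing to \cite{AgarwalArgeErickson1998, MatousekSchwarzkopf1992} for the general-dimensional bound and \cite{HershbergerSuri1995, DobkinKirkpatrick1985} for the small-dimensional base cases. Third, address dynamization: because the structure is built on a hierarchical partition (a tree of bounded degree and logarithmic depth where each node stores a bounded-size local structure), a deletion only requires updating $O(\log n)$ nodes plus rebalancing via the standard logarithmic-method / partial-rebuilding trick, giving amortized $\tilde O(1)$ update time. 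I would note that one typically allows the structure to degrade slightly and periodically rebuilds, which preserves the query bound up to the $\tilde O$ factors.

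The main obstacle — really the only subtlety worth flagging — is that the four cited papers do not all phrase their results in exactly this "$\tilde O(n)$ preprocessing, $\tilde O(n^{1-1/\lfloor d/2\rfloor})$ query, $\tilde O(1)$ amortized deletion" form simultaneously; some give static structures, some give space-time tradeoffs, some are dimension-specific. So the honest work is to check that the near-linear-space endpoint of each tradeoff yields the claimed query exponent and that each admits efficient deletions (which they do, since they are all partition/cutting based). Since the excerpt explicitly says the lemma "follows from several papers for different values of $d$" and defers to the survey \cite{AgarwalErickson1999}, I would keep the proof short: give the duality reduction explicitly, then cite the assembled bounds for the static structure, then invoke the standard dynamization for partition-tree structures to get the deletion bound, and remark that the per-query polylogarithmic overhead from dynamization is absorbed into $\tilde O$.
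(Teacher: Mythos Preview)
Your proposal is more than the paper actually does: the paper provides no proof of this lemma at all. It is stated purely as a citation result, with the surrounding text remarking that ``the relevant results are in Lemma~\ref{lem:ray} which follows from several papers for different values of $d$'' and pointing to the survey \cite{AgarwalErickson1999} for background. There is no duality argument, no discussion of partition trees, and no treatment of dynamization in the paper; the lemma is treated as a black box imported from the cited works.

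Your plan to spell out the duality reduction, identify which cited paper supplies which dimensional regime, and argue the deletion bound via standard dynamization of partition-tree structures is a reasonable and correct expansion, and it is exactly the kind of gloss one would write if asked to justify the lemma rather than merely cite it. Just be aware that you are supplying content the paper deliberately omits; if the goal is to match the paper, the appropriate ``proof'' is simply to note that the bound is quoted from \cite{HershbergerSuri1995, DobkinKirkpatrick1985, AgarwalArgeErickson1998, MatousekSchwarzkopf1992} and refer the reader to \cite{AgarwalErickson1999}.
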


For $d=1$, queries can trivially be answered in constant time. We use this data structure to provide an algorithm when there is a global list for one side of the market.

\begin{lemma}\label{lem:onesided}
  For $d \geq 2$ there is an algorithm to find a stable matching in the
  one-sided $d$-attribute model with real-valued attributes and weights in time
  $\tilde{O}(n^{2-1/\lfloor d/2 \rfloor})$ when there is a single preference 
  list for the other side of the market.
\end{lemma}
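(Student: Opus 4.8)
The plan is to implement the greedy algorithm described in the text and verify both its correctness and its running time using the ray-shooting data structure of Lemma \ref{lem:ray}. First I would argue correctness of the greedy approach in the abstract setting: when one side — say the men — shares a single global preference list $m_1 \succ m_2 \succ \dots \succ m_n$, then in \emph{every} stable matching $m_1$ must be matched to his most preferred woman among all women (otherwise that woman, who ranks $m_1$ first among all men or at least above her own partner, together with $m_1$ forms a blocking pair). Iterating this observation, the unique greedy assignment — match $m_1$ to his favorite woman $w$, delete both, recurse — produces a stable matching, and in fact the only stable matching. This is a short exchange/induction argument; the one subtlety is handling ties in a man's preference list (several women with equal $\val_m$), where I would note that the \emph{weak} stability notion from the preliminaries makes any choice among tied women acceptable, so the algorithm may break ties arbitrarily.

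Next I would translate each greedy step into a geometric query. Map every woman $w$ to the point $A(w) \in \reals^d$. When it is man $m$'s turn, we must find the surviving woman maximizing $\val_m(w) = \langle \alpha(m), A(w)\rangle$, i.e.\ the point extremal in direction $\alpha(m)$. As the text indicates, this linear-programming-type query is the dual of a ray-shooting query: in the dual, the $n$ points become $n$ hyperplanes, and finding the point extremal in a given direction corresponds to shooting a vertical ray and reporting the first hyperplane hit (the extremal point lies on the upper or lower envelope depending on the sign pattern of $\alpha(m)$; I would set up the dual transform so that "highest value" maps consistently to one envelope, handling the two cases symmetrically). I would invoke Lemma \ref{lem:ray} to get a structure with $\tilde{O}(n)$ preprocessing, $\tilde{O}(n^{1-1/\lfloor d/2\rfloor})$ query time, and $\tilde{O}(1)$ amortized deletion. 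Each of the $n$ men then costs one query plus one deletion, for a total of $\tilde{O}(n) + n\cdot\tilde{O}(n^{1-1/\lfloor d/2\rfloor}) + n\cdot\tilde{O}(1) = \tilde{O}(n^{2-1/\lfloor d/2\rfloor})$, which is the claimed bound.

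I expect the main obstacle to be purely bookkeeping: making the point-to-hyperplane duality precise so that "the woman maximizing $\langle\alpha(m),A(w)\rangle$" is genuinely answered by a single ray-shooting query of the form guaranteed by Lemma \ref{lem:ray}, including the fact that the query direction $\alpha(m)$ can have arbitrary (in particular negative) coordinates, which forces a case split on which envelope to query, and possibly a lifting of the points to a paraboloid or an affine chart to cast the problem in the exact form of the cited data structures. The matching-theoretic part is genuinely easy and the running-time arithmetic is immediate once the query cost is in hand; everything nontrivial is concentrated in the geometric reduction, and even that is standard (see \cite{AgarwalErickson1999}). I would therefore keep the correctness argument brief and spend the bulk of the proof spelling out the reduction to ray shooting and citing Lemma \ref{lem:ray}.
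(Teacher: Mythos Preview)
Your proposal is correct and follows essentially the same approach as the paper: greedy assignment of men in the order given by the women's shared list, each greedy step implemented as a ray-shooting query via point--hyperplane duality, with Lemma~\ref{lem:ray} supplying the $\tilde{O}(n^{1-1/\lfloor d/2\rfloor})$ query cost. The paper handles the sign/degeneracy issue you flag by normalizing so that the last nonzero weight $\alpha_{\dim(m)}(m)\in\{-1,1\}$ and maintaining a separate $d'$-dimensional structure for each $d'\le d$; your ``case split on which envelope to query'' is the same idea, and your correctness argument for the greedy step is actually more explicit than the paper's, which simply asserts it.
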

\begin{proof}
  For a man $m$, let $\dim(m)$ denote the index of the last non-zero
  weight, i.e.~$\alpha_{\dim(m)+1}(m) = \cdots = \alpha_d(m) = 0$. We
  assume $\dim(m) > 0$, as otherwise $m$ is indifferent among all
  women and we can pick any woman as $\mu(m)$. We assume without loss
  of generality $\alpha_{\dim(m)}(m) \in \{-1,1\}$.  For each $d'$
  such that $1 \leq d' \leq d$ we build a data structure consisting of
  $n$ hyperplanes in $\reals^{d'}$. For each woman $w$, consider the
  hyperplanes
  \begin{equation}
    \label{eq:hypdef}
    H_{d'}(w) = \left\{x_{d'} = \sum_{i=1}^{d'-1} A_i(w) x_i - A_{d'}(w)\right\}
  \end{equation}
  and for each $d'$ preprocess the set of all hyperplanes according to Lemma
  \ref{lem:ray}. Note that $H_{d'}(w)$ is the dual of the
  point $(A_1(w), \ldots, A_{d'}(w))$.

  For a man $m$ we can find his most preferred partner by querying the
  $\dim(m)$-dimensional data structure. Let $s =
  \alpha_{\dim(m)}(m)$. Consider a ray $r(m) \in \reals^{\dim(m)}$
  originating at
  
  \begin{equation}
    ({-\frac{\alpha_1(m)}{s}}, \ldots,
    {-\frac{\alpha_{\dim(m)-1}(m)}{s}}, -s\cdot\infty)
  \end{equation}
  
\noindent
in the direction $(0, \ldots, 0,s)$. If $\alpha_{\dim(m)} = 1$ we find the
lowest hyperplane intersecting the ray, and if $\alpha_{\dim(m)} = -1$ we find
the highest hyperplane. We claim that the first hyperplane $r(m)$ hits
corresponds to $m$'s most preferred woman. Let woman $w$ be preferred
over woman $w'$,
i.e.~$\val_m(w) = \sum_{i=1}^{\dim(m)} A_i(w) \alpha_i(m) \geq
\sum_{i=1}^{\dim(m)} A_i(w') \alpha_i(m) = \val_m(w')$.
Since the ray $r(m)$ is vertical in coordinate $x_{d'}$, it is
sufficient to evaluate the right-hand side of the definition in
equation \ref{eq:hypdef}. Indeed we have $\val_m(w) \geq \val_m(w')$
if and only if
  \begin{equation}
    \sum_{i=1}^{\dim(m)-1} {-A_i(w)} \frac{\alpha_i(m)}s -
  A_{\dim(m)}(w) \leq \sum_{i=1}^{\dim(m)-1} {-A_i(w')} \frac{\alpha_i(m)}s -
  A_{\dim(m)}(w')
  \end{equation}
  when $s = 1$ and
  \begin{equation}
    \sum_{i=1}^{\dim(m)-1} {-A_i(w)} \frac{\alpha_i(m)}s -
  A_{\dim(m)}(w) \geq \sum_{i=1}^{\dim(m)-1} {-A_i(w')} \frac{\alpha_i(m)}s -
  A_{\dim(m)}(w')
  \end{equation}
  when $s = -1$.

  Note that the query ray is dual to the set of hyperplanes with normal vector
  $(\alpha_1(m), \ldots, \alpha_d(m))$.

  Now we pick the highest man $m$ in the (global) preference list,
  consider the ray as above and find the first hyperplane
  $H_{\dim(m)}(w)$ hit by the ray. We then match the pair $(m,w)$,
  remove $H(w)$ from all data structures and repeat. Correctness
  follows from the correctness of the greedy approach when all women
  share the same preference list and the properties of the halfspaces
  proved above.

  The algorithm preprocesses $d$ data structures, then makes $n$ queries
  and $dn$ deletions. The time is dominated by the $n$ ray queries each requiring time $\tilde{O}(n^{1-1/\lfloor d/2 \rfloor})$. Thus the total time complexity is bounded by $\tilde{O}(n^{2-1/\lfloor d/2 \rfloor})$, as claimed.
\end{proof} 

\begin{algorithm}[H]
// For points in $P \in \reals^{d}$ we use the notation $(x_1,
\ldots, x_{d})$ to refer to its coordinates. \\

Input: matching $\mu$\\
\For{$d' = 1$ to $d$}{
  \For{each woman $w$}{
    $H(w) \gets \{x_d = \sum_{i=1}^{d-1} A_i(w) x_i - A_{d'}(w)\}$ \\
    $H_{d'} \gets H_{d'} \cup H(w)$
  }
  $H_{d'}$.preprocess()
}

\For{each man $m$ in order of preference} {
  $s \gets \alpha_{\dim(m)}(m)$ \\
  $r(m) \gets (-\frac{\alpha_1(m)}{s}, \ldots,
  -\frac{\alpha_{\dim(m)-1}(m)}{s}, \infty\cdot s) + t \cdot
  (0, \ldots, 0,-s)$ \\
  $H(w) \gets \Query(H_{\dim(m)}, r(m))$ \\
  $\mu \gets \mu \cup (m,w)$ \\
  
  \For{$d' = 1$ to $d$}{
    $H_{d'} \gets H_{d'} - H_{d'}(w)$
  }
}
\Return{$\mu$}
\caption{One-Sided Stable Matching}
\label{alg:onesided}
\end{algorithm}

Note that for $d=1$ there is a trivial linear time algorithm for the problem.

We use the following lemma to extend the above algorithm to account
for positive and negative weights for the women. It deals with
settings where the women choose one of two lists ($\sigma_1,\sigma_2$)
as their preference lists over the men while the men's preferences can
be arbitrary.

\begin{lemma}\label{lem:listflip}
  Suppose there are $k$ women who use $\sigma_1$. If the top $k$ men
  in $\sigma_1$ are in the bottom $k$ places in $\sigma_2$, then the
  women using $\sigma_1$ will only match with those men and the $n-k$
  women using $\sigma_2$ will only match with the other $n-k$ men in
  the woman-optimal stable matching.
\end{lemma}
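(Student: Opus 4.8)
The plan is to argue by strong induction on the steps of the woman-proposing deferred acceptance algorithm (which yields the woman-optimal stable matching) that no woman using $\sigma_1$ ever proposes to a man outside the top-$k$ block of $\sigma_1$, and symmetrically that no woman using $\sigma_2$ ever proposes to a man in that block. Call the top $k$ men of $\sigma_1$ the \emph{special} men $T$; by hypothesis these are exactly the bottom $k$ men of $\sigma_2$. The key structural observation is that as long as every $\sigma_1$-woman is tentatively matched to (or still has available) a special man, and every $\sigma_2$-woman is matched to (or still has available) a non-special man, the algorithm cannot cross this boundary: a $\sigma_1$-woman only moves down her list after being rejected, and she will exhaust all of $T$ before reaching any non-special man; a $\sigma_2$-woman reaches the special men only after being rejected by all $n-k$ non-special men.

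First I would set up the invariant precisely: at every point in the execution, (i) each $\sigma_1$-woman who has begun proposing has only proposed to men in $T$, and (ii) each $\sigma_2$-woman who has begun proposing has only proposed to men in $M \setminus T$. I would then show the invariant is maintained. Suppose toward contradiction that the first violation occurs when some $\sigma_1$-woman $w$ proposes to a man $m \notin T$; then $w$ has been rejected by all of $T$, so at that moment all $k$ men in $T$ are tentatively matched to women each of whom they prefer to $w$. But each such tentatively matched woman must herself be a $\sigma_1$-woman (by invariant (ii), no $\sigma_2$-woman has proposed to a special man yet), so we have $k$ distinct $\sigma_1$-women currently "occupying" $T$, in addition to $w$ — that is, at least $k+1$ distinct $\sigma_1$-women, contradicting that only $k$ women use $\sigma_1$. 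The symmetric argument (counting non-special men and $\sigma_2$-women) rules out a $\sigma_2$-woman being the first to cross. Hence the invariant holds throughout, and since the algorithm terminates in a perfect matching, the $k$ women using $\sigma_1$ are matched exactly into $T$ and the remaining $n-k$ women into $M\setminus T$.

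I would then note that because the output is the woman-optimal stable matching, the statement as phrased follows directly. The main obstacle I anticipate is handling the bookkeeping of "has begun proposing" cleanly — a woman may be rejected and re-enter the proposing pool many times, and men's tentative partners change over time — so the counting argument must be phrased in terms of the set of men a woman \emph{could still} propose to, not a static snapshot. Once the invariant is stated with that care, maintenance is a short case analysis on which type of woman makes the boundary-crossing proposal, and the pigeonhole count ($k+1$ women of one type, or $n-k+1$ of the other) closes each case.
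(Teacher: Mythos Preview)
Your proof is correct and takes essentially the same approach as the paper: both run woman-proposing deferred acceptance and obtain a contradiction by a pigeonhole count on who is currently holding the men in $T$. Your packaging via a joint invariant (considering the first violation of either type) is slightly more direct than the paper's chain---the paper takes the first $\sigma_1$-crossing, infers an earlier $\sigma_2$-crossing, and then an even earlier $\sigma_1$-crossing---but the underlying argument is identical.
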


\begin{proof}
  Consider the operation of the woman-proposing deferred acceptance
  algorithm for finding the woman-optimal stable matching. Suppose the
  lemma is false so that at some point a woman using $\sigma_1$
  proposed to one of the last $n-k$ men in $\sigma_1$. Let $w$ be the
  first such woman. $w$ must have been rejected by all of the top $k$,
  so at least one of those men received a proposal from a woman, $w'$,
  using $\sigma_2$. However, since the top $k$ men in $\sigma_1$ are
  the bottom $k$ men in $\sigma_2$, $w'$ must have been rejected by
  all of the top $n-k$ men in $\sigma_2$. But there are only $n-k$
  women using $\sigma_2$, so one of the top $n-k$ men in $\sigma_2$
  must have already received a proposal from a woman using
  $\sigma_1$. This is a contradiction because $w$ was the first woman
  using $\sigma_1$ to propose to one of the bottom $n-k$ men in
  $\sigma_1$ (which are the top $n-k$ men in $\sigma_2$).
\end{proof}

We can now prove the following theorem where negative values are
allowed for the women's weights.

\begin{theorem}\label{thm:onesided}
  For $d \geq 2$ there is an algorithm to find a stable matching in
  the one-sided $d$-attribute model with real-valued attributes and
  weights in time $\tilde{O}(n^{2-1/\lfloor d/2 \rfloor}).$
\end{theorem}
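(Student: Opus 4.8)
The plan is to reduce the general case (women may have positive or negative weight on the men's single attribute) to the positive-weight case handled in Lemma~\ref{lem:onesided}, using Lemma~\ref{lem:listflip} to show that the two groups of women effectively match among disjoint halves of the men. First I would partition the women into $W^+$, those with positive weight on the men's attribute, and $W^-$, those with negative weight (women with zero weight are indifferent and can be appended at the end). Let $k = |W^+|$. Since the men's attribute values induce a single global ranking $\sigma$ of the men, the women in $W^+$ all use $\sigma$ as their preference list and the women in $W^-$ all use the reversed list $\sigma^{\mathrm{rev}}$. This is exactly the two-list setting of Lemma~\ref{lem:listflip}, but the hypothesis of that lemma --- that the top $k$ men of $\sigma_1$ lie in the bottom $k$ positions of $\sigma_2$ --- does not hold in general, because $\sigma^{\mathrm{rev}}$ is the full reversal, not an arbitrary second list. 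So the first real step is to \emph{force} the hypothesis: I would relabel/truncate so that the $k$ men in the top-$k$ positions of $\sigma$ are precisely the ones I route to $W^+$. Concretely, take $M^+$ to be the top $k$ men under $\sigma$ and $M^- = M \setminus M^+$; then in $\sigma^{\mathrm{rev}}$ the set $M^+$ occupies exactly the bottom $k$ positions. Lemma~\ref{lem:listflip} then applies directly and tells us that in the woman-optimal stable matching every woman in $W^+$ is matched to a man in $M^+$ and every woman in $W^-$ to a man in $M^-$.

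Given this decomposition, the problem splits into two independent instances of the one-sided model with a \emph{single} global list on the women's side: the instance $(W^+, M^+)$ where all $k$ women share the list $\sigma$ restricted to $M^+$, and the instance $(W^-, M^-)$ where all $n-k$ women share $\sigma^{\mathrm{rev}}$ restricted to $M^-$. In each sub-instance every woman has a positive (respectively, after a trivial sign flip of the single attribute, again positive) weight, so Lemma~\ref{lem:onesided} applies and produces a stable matching for that half in time $\tilde{O}(|W^\pm|^{2 - 1/\lfloor d/2\rfloor})$. I would then output the union of the two sub-matchings (together with the arbitrary assignment of the zero-weight women). The running time is dominated by the two calls to the Lemma~\ref{lem:onesided} algorithm, which is $\tilde{O}(n^{2-1/\lfloor d/2\rfloor})$ overall; the partitioning, sorting the men by their single attribute, and forming the restricted data structures cost only $\tilde{O}(n)$ each.

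The correctness argument has two pieces. The cross-group stability --- no blocking pair with one endpoint in $W^+$ and the other in $M^-$ (or $W^-$ and $M^+$) --- follows from Lemma~\ref{lem:listflip}: a woman in $W^+$ is matched above all of $M^-$ in her own list, so she never blocks with a man in $M^-$, and symmetrically for $W^-$. The within-group stability is exactly what Lemma~\ref{lem:onesided} guarantees for each sub-instance. One subtlety I would need to check carefully: Lemma~\ref{lem:listflip} is stated for the woman-optimal stable matching, and I should confirm that the greedy/ray-shooting algorithm of Lemma~\ref{lem:onesided}, run on a sub-instance, returns a matching that is consistent with being the restriction of a global woman-optimal stable matching --- or, more simply, observe that once the men are partitioned into $M^+$ and $M^-$ as above, \emph{any} stable matching of the combined instance must respect the partition (the Lemma~\ref{lem:listflip} argument works for an arbitrary stable matching, not just the woman-optimal one, since it only uses the existence of a blocking pair), so it suffices that each sub-matching is internally stable, which Lemma~\ref{lem:onesided} provides.

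I expect the main obstacle to be the bookkeeping around making the hypothesis of Lemma~\ref{lem:listflip} hold: one has to be careful that designating $M^+$ as the top-$k$ men and $M^-$ as the rest is consistent with the men's \emph{own} preferences playing no role in forcing a different split, and that after restricting each global list to its half we still have a valid one-sided instance of the form Lemma~\ref{lem:onesided} expects (in particular that the sign-flip needed to turn $W^-$'s negative weights into positive ones does not disturb the ray-shooting reduction). Once that is set up, everything else is routine.
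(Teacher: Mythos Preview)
Your approach is essentially the paper's: partition the women by the sign of their weight, invoke Lemma~\ref{lem:listflip} to split the men into the top $k$ and bottom $n-k$, and then run Lemma~\ref{lem:onesided} on each half. Your worry that the hypothesis of Lemma~\ref{lem:listflip} ``does not hold in general'' is unfounded here and the paragraph you spend ``forcing'' it is unnecessary: since $\sigma_2 = \sigma^{\mathrm{rev}}$ is the exact reversal of $\sigma_1 = \sigma$, the top $k$ men of $\sigma$ are automatically the bottom $k$ of $\sigma^{\mathrm{rev}}$, so the lemma applies directly with no relabelling. Your extra discussion of cross-group stability (that a woman in $W^+$ is matched into $M^+$ and therefore already prefers her partner to every man in $M^-$, and symmetrically) is correct and in fact spells out a point the paper leaves implicit; it also resolves your own concern about Lemma~\ref{lem:listflip} being stated only for the woman-optimal matching, since once the partition $M^+ \cup M^-$ is fixed any internally stable sub-matchings combine to a globally stable matching.
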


\begin{proof}
  Suppose there are $k$ women who have a positive weight on the men's
  attribute. Since the remaining $n-k$ women's preference list is the
  reverse, we can use Lemma \ref{lem:listflip} to split the problem
  into two subproblems. Namely, in the woman-optimal stable matching
  the $k$ women with a positive weight will match with the top $k$
  men, and the $n-k$ women with a negative weight will match with the
  bottom $n-k$ men. Now the women in each of these subproblems all
  have the same list. Therefore we can use Lemma \ref{lem:onesided} to
  solve each subproblem. Splitting the problem into subproblems can be
  done in time $O(n)$ so the running time follows immediately from
  Lemma \ref{lem:onesided}.
\end{proof}

\begin{table}[H]
  \caption{Two-list preferences where no participant receives their top choice in the stable matching}
  \label{tab:preferences}
  \begin{center}
    \begin{tabular}{c|c c c|c}
      $\sigma_1$ & $\sigma_2$ & & $\pi_1$ & $\pi_2$\\
      \hline
      $m_1$ & $m_3$ & & $w_1$ & $w_3$\\
      $m_2$ & $m_5$ & & $w_2$ & $w_5$\\
      $m_3$ & $m_1$ & & $w_3$ & $w_1$\\
      $m_4$ & $m_4$ & & $w_4$ & $w_4$\\
      $m_5$ & $m_2$ & & $w_5$ & $w_2$\\
    \end{tabular}
    \quad
    \begin{tabular}{c|c c c|c}
      Man & List & & Woman & List\\
      \hline
      $m_1$ & $\pi_1$ & & $w_1$ & $\sigma_2$\\
      $m_2$ & $\pi_1$ & & $w_2$ & $\sigma_2$\\
      $m_3$ & $\pi_2$ & & $w_3$ & $\sigma_1$\\
      $m_4$ & $\pi_1$ & & $w_4$ & $\sigma_2$\\
      $m_5$ & $\pi_2$ & & $w_5$ & $\sigma_1$\\
    \end{tabular}
  \end{center}
\end{table}

As a remark, this ``greedy'' approach where we select a man, find his
most preferred available woman, and permanently match him to her will
not work in general. Table \ref{tab:preferences} describes a simple
2-list example where the unique stable matching is
$\{(m_1,w_2),(m_2,w_3),(m_3,w_5),(m_4,w_4),(m_5,w_1)\}$. In this instance, no
participant is matched with their top choice. Therefore, the above
approach cannot work for this instance. This illustrates to some
extent why the general case seems more difficult than the one-sided
case.

An alternative model of a greedy approach that is based on work by
Davis and Impagliazzo in \cite{DavisImpagliazzo2009} also will not
work. In this model, an algorithm can view each of the lists and the
preferences of the women. It can then (adaptively) choose an order in
which to process the men. When processing a man, he must be assigned a
partner (not necessarily his favorite available woman) once and for
all, based only on his choice of preference list and the preferences
of the previously processed men. This model is similar to online
stable matching \cite{KhullerMitchellVazirani1994} except that it
allows the algorithm to choose the processing order of the men. Using
the preferences in Table \ref{tab:greedylist} and minor modifications
to them, we can show that no greedy algorithm of this type can
successfully produce a stable matching. Indeed, the unique stable
matching of the preference scheme below is
$\mu = \{(m_1,w_3),(m_2,w_1),(m_3,w_2)\}$. However, changing the
preference list for whichever of $m_1$ or $m_2$ is processed later
will form a blocking pair with the stable partner of the other. If
$m_1$ uses $\pi_1$, $(m_1,w_1)$ blocks $\mu$ and if $m_2$ uses
$\pi_2$, $(m_2,w_3)$ blocks $\mu$. Therefore, no algorithm can succeed
in assigning stable partners to these men without first knowing the
preference list choice of all three.

\begin{table}[H]
  \caption{Two-list preferences where a greedy approach will not work}
  \label{tab:greedylist}
  \begin{center}
    \begin{tabular}{c|c c c|c}
      $\sigma_1$ & $\sigma_2$ & & $\pi_1$ & $\pi_2$\\
      \hline
      $m_1$ & $m_2$ & & $w_1$ & $w_3$\\
      $m_2$ & $m_1$ & & $w_2$ & $w_2$\\
      $m_3$ & $m_3$ & & $w_3$ & $w_1$\\
    \end{tabular}
    \quad
    \begin{tabular}{c|c c c|c}
      Man & List & & Woman & List\\
      \hline
      $m_1$ & $\pi_2$ & & $w_1$ & $\sigma_1$\\
      $m_2$ & $\pi_1$ & & $w_2$ & $\sigma_2$\\
      $m_3$ & $\pi_1$ & & $w_3$ & $\sigma_2$\\
    \end{tabular}
  \end{center}
\end{table}

\subsection{Strategic Behavior}
As mentioned earlier, strategic behavior in the general preference setting allows for participants to truncate or rearrange their lists. However, in the $d$-attribute and $d$-list models, we assume that the attributes or lists are fixed, so that the only manipulation the participants are allowed is to misrepresent their weight vectors or which list they choose. Despite this limitation, there is still no strategy proof mechanism for finding a stable matching when $d\geq 2$.

\begin{theorem}\label{thm:liststrategy}
  For $d \geq 2$ there is no strategy proof algorithm to find a stable matching in the $d$-list model.
\end{theorem}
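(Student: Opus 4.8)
The plan is to adapt the classical impossibility for stable mechanisms (Roth \cite{Roth1982}) to the succinct setting, the essential point being that a witnessing market can be chosen to use only two distinct preference lists on each side. Concretely, I would exhibit an explicit $d$-list instance $I$ with $d=2$ — a short table of two men's lists $\pi_1,\pi_2$, two women's lists $\sigma_1,\sigma_2$, and an assignment of each participant to one of these lists (in the style of Tables~\ref{tab:preferences} and~\ref{tab:greedylist}) — engineered so that $I$ has exactly two stable matchings: the man-optimal $\mu_M$ and the woman-optimal $\mu_W$, with $\mu_M\neq\mu_W$. (One obtains such an instance by embedding the standard $2\times 2$ ``Roth'' preference pattern, in which each of the two lists per side is already used, and enlarging it with a few additional participants so that the list structure leaves enough room for the manipulation below while still keeping exactly two stable matchings.) Since any algorithm $f$ that finds a stable matching must return a stable matching of $I$, we have $f(I)\in\{\mu_M,\mu_W\}$, and we argue by cases.

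Suppose first $f(I)=\mu_M$. Because $\mu_M$ is man-optimal it is woman-pessimal, so some woman $w$ strictly prefers $\mu_W(w)$ to $\mu_M(w)$; say $w$'s true list is $\sigma_1$. Let $I'$ be the instance in which $w$ instead reports the other available women's list $\sigma_2$ (a deviation allowed precisely because $d\ge 2$). I would then show, by a finite verification running the deferred acceptance algorithm from both sides on $I'$ (and using Lemma~\ref{lem:listflip}-style reasoning about how the two groups of women rebalance when $w$ switches lists), that in $I'$ \emph{every} stable matching assigns $w$ a partner that she, according to her true list $\sigma_1$, strictly prefers to $\mu_M(w)$. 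Consequently $f(I')$, being stable in $I'$, leaves $w$ strictly better off than $f(I)$, so $f$ is not strategy proof. If instead $f(I)=\mu_W$, the same argument applies with the roles of men and women exchanged: since $\mu_W$ is man-pessimal, some man $m$ with true list $\pi_1$ can report $\pi_2$ and is thereby matched, in every stable matching of the modified instance, to a woman he prefers to $\mu_W(m)$. Either way $f$ can be manipulated, which proves the theorem; the same instance (viewed as a $d$-list roommates market by duplicating participants) handles the roommates variant, and allowing $d>2$ changes nothing since the extra lists can simply go unused.

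The main obstacle is the middle step: guaranteeing that the manipulator's gain survives the mechanism's freedom to output \emph{any} stable matching of the deviated instance. In Roth's general-model argument this robustness is obtained by truncation — the deviator declares her target partner to be her last acceptable man, and Gale--Sotomayor \cite{GaleSotomayor1985} then eliminate every worse stable outcome — but truncation is not available here, since a $d$-list report must be a complete ranking. The remedy is to use complete, \emph{permuted} reports in the spirit of Teo--Sethuraman--Tan \cite{TeoSethuramanTan2001} and Kobayashi--Matsui \cite{KobayashiMatsui2010}, and to design $I$ so precisely that after the single allowed list change the deviated market has a \emph{unique} stable matching (equivalently, a uniquely determined stable partner for the manipulator), which is the good partner. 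Pinning this down for the explicit instance is a bounded case check, but getting the instance right so that both the ``$f$ chooses $\mu_M$'' and the ``$f$ chooses $\mu_W$'' deviations work within the two-lists-per-side constraint is where the real work lies.
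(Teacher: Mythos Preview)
Your plan is correct and matches the paper's proof almost exactly: the paper exhibits an explicit $2$-list instance with four men and four women (Table~\ref{tab:liststrategy}) having precisely the man-optimal and woman-optimal stable matchings, shows that one woman switching lists collapses the deviated instance to a \emph{unique} stable matching equal to the original woman-optimal one, and concludes by the same symmetry/case-split you describe. The ``main obstacle'' you flag is handled in the paper simply by writing down the concrete table and checking it; no appeal to Lemma~\ref{lem:listflip} or to \cite{TeoSethuramanTan2001,KobayashiMatsui2010} is needed.
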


\begin{proof}
Table \ref{tab:liststrategy} describes true preferences that can be manipulated by the women. Observe that there are two stable matchings: the man-optimal matching $\{(m_1,w_1),(m_2,w_2),(m_3,w_3),(m_4,w_4)\}$ and the woman-optimal matching $\{(m_1,w_2),(m_2,w_3),(m_3,w_1),(m_4,w_4)\}$. However, if $w_2$ used list $\sigma_2$ instead of $\sigma_1$, then there is a unique stable matching which is $\{(m_1,w_2),(m_2,w_3),(m_3,w_1),(m_4,w_4)\}$, the woman-optimal stable matching from the original preferences. Therefore, any mechanism that does not always output the woman optimal stable matching can be manipulated by the women to their advantage. By symmetry, any mechanism that does not always output the man-optimal matching could be manipulated by the men. Thus there is no strategy-proof mechanism for the $d$-list setting with $d\geq 2$.
\end{proof}

\begin{table}[H]
  \caption{Two-list preferences that can be manipulated}
  \label{tab:liststrategy}
  \begin{center}
    \begin{tabular}{c|c c c|c}
      $\sigma_1$ & $\sigma_2$ & & $\pi_1$ & $\pi_2$\\
      \hline
      $m_1$ & $m_3$ & & $w_1$ & $w_3$\\
      $m_2$ & $m_1$ & & $w_2$ & $w_1$\\
      $m_3$ & $m_4$ & & $w_3$ & $w_2$\\
      $m_4$ & $m_2$ & & $w_4$ & $w_4$\\
    \end{tabular}
    \quad
    \begin{tabular}{c|c c c|c}
      Man & List & & Woman & List\\
      \hline
      $m_1$ & $\pi_1$ & & $w_1$ & $\sigma_2$\\
      $m_2$ & $\pi_1$ & & $w_2$ & $\sigma_1$\\
      $m_3$ & $\pi_2$ & & $w_3$ & $\sigma_1$\\
      $m_4$ & $\pi_2$ & & $w_4$ & $\sigma_1$\\
    \end{tabular}
  \end{center}
\end{table}

Since the $d$-list model is a special case of the $d$-attribute model, we immediately have the following result from Theorem \ref{thm:liststrategy}.

\begin{corollary}
  For $d \geq 2$ there is no strategy proof algorithm to find a stable matching in the $d$-attribute model.
\end{corollary}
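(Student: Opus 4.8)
The statement to prove is the corollary: for $d \geq 2$ there is no strategy proof algorithm to find a stable matching in the $d$-attribute model. This follows immediately from Theorem~\ref{thm:liststrategy} because the $d$-list model is a special case of the $d$-attribute model.

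\paragraph{Plan.}
The plan is to reduce to the $d$-list case already handled in Theorem~\ref{thm:liststrategy}. First I would recall the structural observation made in the paper: the $d$-list model embeds into the $d$-attribute model by placing, for each of the $d$ lists, all the weight on a single distinct attribute. Concretely, given a $d$-list instance with man-lists $\sigma_1,\dots,\sigma_d$ and woman-lists $\pi_1,\dots,\pi_d$, assign attribute $i$ to the men so that their ordering by $A_i$ matches $\pi_i$ (and symmetrically assign attribute $i$ to the women so that ordering by $A_i$ matches $\sigma_i$); then a participant who in the list model chooses list number $i$ is represented in the attribute model by the weight vector $e_i$ (the $i$-th standard basis vector). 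With these weights, $\val$-orderings coincide exactly with the chosen list, so stable matchings of the two instances are identical, and the manipulation available in the list model (switching which list one selects) corresponds precisely to switching one's weight vector from $e_i$ to $e_j$ — a legal manipulation in the attribute model, where weights are exactly what participants may misreport.

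\paragraph{Key steps.}
(1) Fix the explicit embedding of the $d$-list instance of Table~\ref{tab:liststrategy} into a $d$-attribute instance as above, checking that the induced preference orders match the prescribed lists. (2) Observe that any strategy proof mechanism for the $d$-attribute model would, restricted to these embedded instances, yield a mechanism for the underlying $d$-list instances that is strategy proof against list-switching manipulations, since a list switch is realized by a weight-vector switch. (3) Invoke Theorem~\ref{thm:liststrategy}: no such mechanism exists for the $d$-list model, so in particular the $d$-attribute mechanism can be manipulated on the embedded instance, a contradiction. (4) Note that the argument needs only $d=2$ distinct attributes (matching the $d\geq2$ hypothesis), and nothing about the magnitudes of the attribute values beyond realizing the two orderings.

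\paragraph{Main obstacle.}
There is essentially no serious obstacle; the only thing to be careful about is the bookkeeping in step~(2): one must confirm that the \emph{only} manipulations used in the proof of Theorem~\ref{thm:liststrategy} (a woman changing which list she picks) are indeed among the manipulations permitted in the $d$-attribute model (a participant changing her weight vector), so that a manipulation of the list instance transfers verbatim to a manipulation of the attribute instance. Since the embedding maps "choose list $i$" to "use weights $e_i$" bijectively on the relevant weight vectors, this is immediate, and the corollary follows.
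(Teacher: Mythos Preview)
Your proposal is correct and follows exactly the approach the paper takes: the paper simply notes that the $d$-list model is a special case of the $d$-attribute model and invokes Theorem~\ref{thm:liststrategy}. Your write-up is more explicit about the embedding and about why a list switch is a legal weight manipulation, which is fine; the only minor slip is that in your embedding description the roles of $\pi_i$ and $\sigma_i$ are swapped (attribute $i$ of the men should realize the ordering $\sigma_i$, not $\pi_i$), but this is purely notational and does not affect the argument.
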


Of course in the 1-list setting there is a trivial unique stable matching. Moreover, in the one-sided $d$-attribute model our algorithm is strategy proof since the women are receiving the woman-optimal matching and each man receives his best available woman, so misrepresentation would only give him a worse partner.

\section{Verification}

We now turn to the problem of verifying whether a given matching is
stable. While this is as hard as finding a stable matching in the
general setting, the verification algorithms we present here are more
efficient than our algorithms for finding stable matchings in the
attribute model.

\subsection{Real Attributes and Weights}
\label{sec:verifyreal}

In this section we adapt the geometric approach for finding a stable
matching in the one-sided $d$-attribute model to the problem of
verifying a stable matching in the (two-sided) $d$-attribute model. We
express the verification problem as a \emph{simplex range searching
  problem} in $\reals^{2d}$, which is the dual of the ray shooting
problem. In simplex range searching we are given $n$ points and answer
queries that ask for the number of points inside a simplex. In our
case we only need degenerate simplices consisting of the intersection
of two halfspaces. Simplex range searching queries can be done in
sublinear time for constant $d$.

\begin{lemma}[\cite{Matousek1992}]
  \label{lem:range}
  Given a set of $n$ points in $\reals^d$, one can process it for simplex
  range searching in time $O(n \log n)$, and then answer queries in
  time $\tilde{O}(n^{1-\frac1d})$.
\end{lemma}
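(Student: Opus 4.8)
The plan is to realize the data structure as a \emph{partition tree} and to answer each query by a recursive traversal whose branching factor is controlled by a low-crossing-number guarantee. Since a simplex in $\reals^d$ is the intersection of $d+1$ halfspaces, a range count over a query simplex can be assembled from counts reported at the nodes visited during a single descent, so it suffices to build one hierarchical structure that behaves well against all bounding hyperplanes simultaneously. I would therefore isolate two ingredients: a structural partition theorem that keeps the recursion shallow, and the query/preprocessing bookkeeping that converts it into the claimed bounds. Throughout, $d$ is a constant, so factors depending only on $d$ are absorbed into the $O(\cdot)$ notation.

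The structural heart is the \emph{Partition Theorem}: for any parameter $r$, the $n$ points can be split into $O(r)$ classes, each of size $\Theta(n/r)$ and each enclosed in a (possibly relatively open) simplex, such that every hyperplane crosses at most $O(r^{1-1/d})$ of these simplices. Applying this recursively at each node — storing at each child the cardinality of its class together with its enclosing simplex — yields a balanced tree of depth $O(\log_r n)$ using $O(n)$ total storage. This is the object whose construction must fit inside the $O(n\log n)$ preprocessing budget.

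For a query simplex with bounding hyperplanes $h_1,\dots,h_{d+1}$, I would descend from the root and, at each node, classify every child simplex as entirely inside the query region, entirely outside, or crossed by some $h_j$. Fully-inside children contribute their stored counts at once, fully-outside children are discarded, and only crossed children are recursed into. By the low-crossing property each $h_j$ crosses $O(r^{1-1/d})$ child simplices, and since there are only $d+1 = O(1)$ bounding hyperplanes, each node spawns $O(r^{1-1/d})$ recursive calls. The query cost then obeys $Q(n) = O(r^{1-1/d})\,Q(n/r) + O(r)$. For a large constant $r$ this already gives $Q(n) = O(n^{1-1/d+\varepsilon})$ with $\varepsilon\to 0$ as $r\to\infty$; shrinking the $\varepsilon$ to a polylogarithmic factor (by letting $r$ grow slowly with $n$, or by bootstrapping the partition construction) yields the $\tilde{O}(n^{1-1/d})$ of the statement.

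The main obstacle is proving the Partition Theorem with crossing number $O(r^{1-1/d})$ and — decisively for the $O(n\log n)$ preprocessing claim — constructing it within that budget. The low-crossing guarantee must hold against \emph{all} hyperplanes, an infinite family, so the standard route is to fix a polynomial-size \emph{test set} of hyperplanes that $\varepsilon$-approximates the crossing behavior of all hyperplanes (via an $\varepsilon$-approximation in the dual range space, using bounded VC dimension), then build a partition with low crossing number against the test set by extracting the $O(r)$ classes one at a time under a multiplicative-weights / reweighting scheme, and finally transfer the guarantee from the test set to every hyperplane. Making each class-extraction step fast — using cuttings to locate a good enclosing simplex quickly — and amortizing over the $O(\log_r n)$ levels is what holds total construction at $O(n\log n)$. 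This efficiency of construction, rather than the mere existence of the partition, is where I expect the delicate work to lie.
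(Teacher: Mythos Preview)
Your proposal is a faithful sketch of Matou\v{s}ek's partition-tree argument, but note that the paper does not supply a proof of this lemma at all: it is stated with a citation to \cite{Matousek1992} and used as a black box. So there is nothing in the paper to compare your argument against; your outline simply reconstructs (correctly, at the level of a sketch) the cited source.
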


For $1 \leq d' \leq d$ we use the notation
$(x_1, \ldots, x_{d}, y_1, \ldots, y_{d'-1}, z)$ for points in
$\reals^{d + d'}$. We again let $\dim(w)$ be the index of $w$'s last
non-zero weight, assume without loss of generality
$\alpha_{\dim(w)} \in \{-1, 1\}$, and let
$\sgn(w) = \sgn(\alpha_{\dim(w)})$. We partition the set of women into
$2d$ sets $W_{d',s}$ for $1 \leq d' \leq d$ and $s \in \{-1, 1\}$
based on $\dim(w)$ and $\sgn(w)$. Note that if $\dim(w) = 0$, then $w$
is indifferent among all men and can therefore not be part of a blocking
pair. We can ignore such women.

For a woman $w$, consider the point
\begin{equation}
  P(w) = (A_1(w), \ldots, A_{d}(w), \alpha_1(w), \ldots,
  \alpha_{\dim(w)-1}(w), \val_{w}(m))
\end{equation}
where $m=\mu(w)$ is the partner of $w$ in the input matching $\mu$. For a set $W_{d',s}$ we let $P_{d',s}$ be the set of points $P(w)$ for
$w \in W_{d',s}$. The basic idea is to construct a simplex for every
man and query it against all sets $P_{d',s}$. 

Given $d'$,$s$, and a man $m$, let $H_1(m)$ be the halfspace
$\left\{\sum_{i=1}^{d} \alpha_i(m) x_i > \val_m(w)\right\}$ where $w=\mu(m)$. For $w' \in W_{d',s}$ we have
$P(w') \in H_1(m)$ if and only if $m$ strictly prefers $w'$ to $w$. Further let $H_2(m)$ be the halfspace $\left\{\sum_{i=1}^{d'-1} A_i(m)
  y_i + A_{d'}(m) s > z\right\}$. For $w' \in W_{d',s }$ we have $P(w')
\in H_2(m)$ if and only if $w'$ strictly prefers $m$ to $\mu(w')$. Hence $(m,w')$ is a blocking pair if and only if $P(w') \in H_1(m) \cap
H_2(m)$.

Using Lemma \ref{lem:range} we immediately have an algorithm to verify a stable matching.

\begin{theorem}
  \label{thm:verifyreal}
  There is an algorithm to verify a stable matching in the
  $d$-attribute model with real-valued attributes and weights in time
  $\tilde{O}(n^{2-1/2d})$
\end{theorem}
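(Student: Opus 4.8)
The plan is to combine the geometric reduction already set up before the theorem statement with the simplex range searching data structure of Lemma~\ref{lem:range}. The preceding discussion establishes the key structural fact: a pair $(m,w')$ with $w' \in W_{d',s}$ is blocking if and only if $P(w') \in H_1(m) \cap H_2(m)$, where $H_1(m)$ encodes ``$m$ strictly prefers $w'$ to $\mu(m)$'' and $H_2(m)$ encodes ``$w'$ strictly prefers $m$ to $\mu(w')$''. So the matching $\mu$ is stable precisely when, for every man $m$ and every pair $(d',s)$, the set $P_{d',s}$ contains no point in the intersection of these two halfspaces. Thus verification reduces to $2dn$ emptiness queries, each over a set of at most $n$ points in $\reals^{d+d'} \subseteq \reals^{2d}$, where the query region is a degenerate simplex (an intersection of two halfspaces).

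First I would build, for each of the $2d$ classes $W_{d',s}$, the point set $P_{d',s} \subseteq \reals^{d+d'}$ and preprocess it for simplex range searching via Lemma~\ref{lem:range}; since $d + d' \leq 2d$, the preprocessing cost is $O(n\log n)$ per structure, and $O(dn\log n)$ in total, which is absorbed into $\tilde{O}(n^{2-1/2d})$. Next, for each man $m$ and each class $(d',s)$, I would form the halfspaces $H_1(m)$ and $H_2(m)$ as defined above and query the structure for $P_{d',s}$ to count (or merely detect) points in $H_1(m) \cap H_2(m)$; if any query returns a nonzero count, report that $\mu$ is not stable, otherwise report stable. Correctness is immediate from the blocking-pair characterization already proved. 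For the running time: there are $2dn = O(n)$ queries (treating $d$ as constant), each taking $\tilde{O}(n^{1-1/(d+d')}) = \tilde{O}(n^{1-1/2d})$ time by Lemma~\ref{lem:range} applied in dimension $d+d' \leq 2d$. Multiplying gives $\tilde{O}(n^{2-1/2d})$ total, which dominates the preprocessing.

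One minor technical point to address is that Lemma~\ref{lem:range} is stated for counting points in a (full-dimensional) simplex, whereas here the query region is the intersection of only two halfspaces — an unbounded, degenerate simplex. This is handled by the standard observation that a halfspace intersection query is no harder than a simplex query (one can clip with far-away bounding hyperplanes, or simply note that range-searching structures for simplices support halfspace range counting as a special case with the same bounds); alternatively one can decompose into a constant number of genuine simplex queries. A second point is the strictness of the inequalities defining $H_1$ and $H_2$: since we only need to detect whether the count is zero or positive, and the standard data structures handle open and closed halfspaces identically (or with an infinitesimal perturbation), this causes no difficulty.

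I do not anticipate a serious obstacle here — the substantive work (the reduction to halfspace-intersection emptiness in $\reals^{d+d'}$) is exactly what the paragraphs preceding the theorem carry out, so the proof is essentially an invocation of Lemma~\ref{lem:range} together with a running-time accounting. If anything, the only thing requiring a sentence of care is confirming that the exponent $1-1/(d+d')$ in the worst case $d'=d$ yields $1-1/2d$, so that summing over all $m$ gives the claimed $\tilde{O}(n^{2-1/2d})$ bound uniformly over the $2d$ classes.
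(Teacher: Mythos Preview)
Your proposal is correct and follows essentially the same approach as the paper's own proof: partition the women into the $2d$ classes $W_{d',s}$, preprocess each point set $P_{d',s}$ for simplex range searching via Lemma~\ref{lem:range}, and then for every man $m$ query $H_1(m)\cap H_2(m)$ against each structure, reporting instability if any query is nonempty. The paper's running-time accounting is identical to yours (preprocessing $O(n\log n)$, then $2dn$ queries at $\tilde{O}(n^{1-1/2d})$ each); your additional remarks about handling the degenerate two-halfspace query and the strict inequalities are reasonable clarifications that the paper leaves implicit.
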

\begin{proof}
  Partition the set of women into sets $W_{d',s}$ for $1 \leq d' \leq
  d$ and $s \in \{-1,1\}$ and for $w \in W_{d',s}$ construct $P(w) \in \reals^{d + d'}$ as above. Then preprocess the sets according to Lemma \ref{lem:range}.
  For each man $m$ query $H_1(m) \cap H_2(m)$ against the points
  in all sets. By the definitions of $H_1(m)$ and $H_2(m)$, there is a blocking pair if and only if for some man $m$ there is a point $P(w) \in H_1(m) \cap H_2(m)$ in one of the sets $P_{d',s}$.

  The time to preprocess is $O(n \log n)$. There are $2dn$ queries of
  time $\tilde{O}(n^{1-1/2d})$. Hence the whole process
  requires time $\tilde{O}(n^{2-1/2d})$ as claimed.
\end{proof}

\begin{algorithm}[H]
// For points in $P \in \reals^{d+d'}$ we use the notation $(x_1,
\ldots, x_{d}, y_1, \ldots, y_{d'-1},z)$ to refer to its coordinates. \\

Input: matching $\mu$\\
\For{each woman $w$}{
  $m \gets \mu(w)$\\
  $P(w) \gets (A_1(w), \ldots, A_{d}(w), \alpha_1(w), \ldots, \alpha_{d}(w),
  \val_{w}(m))$ \\
  $P_{\dim(w), \sgn(w)} \gets W_{\dim(w), \sgn(w)} \cup P(w)$ 
}

\For{$d' =1$ to $d$ and $s \in \{-1,1\}$} {
  $P_{d',s}$.preprocess()\\
  \For{each man $m$}{
    $w \gets \mu(m)$\\
    $H_1(m) \gets \left\{\sum_{i=1}^{d} \alpha_i(m) x_i > \val_m(w)\right\}$\\
    $H_2(m) \gets \left\{\sum_{i=1}^{d'-1} A_i(m) y_i + A_{d'}(m)
    \cdot s > z\right\}$\\
    \If{$\Query(P_{d',s}, H_1(m) \cap H_2(m)) > 0$}{
      \Return{$\mu$ is not stable}
    }
  }
}
\Return{$\mu$ is stable}
\caption{Verify Stable Matching with Reals}
\label{alg:testreals}
\end{algorithm}

\subsection{Lists}
\label{sec:lists}

When there are $d$ preference orders for each side, and each
participant uses one of the $d$ lists, we provide a more efficient
algorithm. Here, assume $\mu$ is the given matching between $M$ and
$W$. Let $\{\pi_i\}_{i=1}^d$ be the set of $d$ permutations on the
women and $\{\sigma_i\}_{i=1}^d$ be the set of $d$ permutations on the
men. Define $\rank(w,i)$ to be the position of $w$ in permutation
$\pi_i$. This can be determined in constant time after $O(dn)$
preprocessing of the permutations. Let $\head(\pi_i,j)$ be the first
woman in $\pi_i$ who uses permutation $\sigma_j$ and $\nxt(w,i)$ be
the next highest ranked woman after $w$ in permutation $\pi_i$ who
uses the same permutation as $w$ or $\bot$ if no such woman
exists. These can also be determined in constant time after $O(dn)$
preprocessing by splitting the lists into sublists, with one sublist
for the women using each permutation of men. The functions $\rank$, $\head$, and
$\nxt$ are defined analogously for the men.

\begin{algorithm}[H]
\For{$i = 1$ to $d$}{
	\For{$j = 1$ to $d$}{
		$w \gets \head(\pi_i,j)$.\\
		$m \gets \head(\sigma_j,i)$.\\
		\While{$m\neq \bot$ and $w\neq \bot$}{
			\If{$\rank(w,i) > \rank(\mu(m),i)$}{
				$m \gets \nxt(m,j)$.
			}
			\Else{
				\If{$\rank(m,j) > \rank(\mu(w),j)$}{
					$w \gets \nxt(w,i)$.
				}
				\Else{
					\Return{$(m,w)$ is a blocking pair.}
				}
			}
		}
	}
}
\Return{$\mu$ is stable.}
\caption{Verify $d$-List Stable Matching}
\label{alg:listtestalgorithm}
\end{algorithm}

\begin{theorem}
  There is an algorithm to verify a stable matching in the $d$-list
  model in $O(dn)$ time.
\end{theorem}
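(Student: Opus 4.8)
The plan is to establish correctness and running time of Algorithm~\ref{alg:listtestalgorithm} separately. For the running time, note that after the $O(dn)$ preprocessing described above, each of the $d^2$ iterations of the inner loop runs a two-pointer scan over the sublist of $\pi_i$ consisting of women using $\sigma_j$ and the sublist of $\sigma_j$ consisting of men using $\pi_i$. In each step of the \texttt{while} loop we either advance $m$ via $\nxt(m,j)$ or advance $w$ via $\nxt(w,i)$ or halt, so the loop for fixed $(i,j)$ runs in time $O(|\{m : m \text{ uses } \pi_i, \mu(m) \text{ uses } \sigma_j\}| + |\{w : w \text{ uses } \sigma_j, \mu(w) \text{ uses } \pi_i\}|)$ up to constant factors; here I use $\rank$, $\head$, $\nxt$ each in $O(1)$. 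Summing over all pairs $(i,j)$, each man is counted once (for the pair $(i,j)$ where $i$ is his list index and $j$ is $\mu(m)$'s list index) and similarly each woman once, giving $O(dn + n) = O(dn)$ total. I would spell this amortization out carefully since it is the crux of why the algorithm beats the naive $O(d^2 n)$ bound.

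For correctness, I would argue both directions. If the algorithm outputs a pair $(m,w)$, then at that moment $m$ uses $\pi_i$, $w$ uses $\sigma_j$, and the two failed \texttt{if} tests give $\rank(w,i) \le \rank(\mu(m),i)$ and $\rank(m,j) \le \rank(\mu(w),j)$, i.e.\ $m$ weakly prefers $w$ to $\mu(m)$ and $w$ weakly prefers $m$ to $\mu(w)$; one checks $(m,w)\neq\mu$ (equality would force both ranks equal in their respective lists, hence $w=\mu(m)$), so under the weak-stability convention this is genuinely a blocking pair. The more delicate direction is completeness: if some blocking pair $(m^\ast, w^\ast)$ exists with $m^\ast$ using $\pi_i$ and $w^\ast$ using $\sigma_j$, I must show the $(i,j)$-iteration reports \emph{some} blocking pair (not necessarily $(m^\ast,w^\ast)$). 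The key invariant to maintain is that throughout the \texttt{while} loop for $(i,j)$, every man strictly above the current $m$ in $\sigma_j$ (among those using $\pi_i$) and every woman strictly above the current $w$ in $\pi_i$ (among those using $\sigma_j$) has already been ``cleared,'' meaning it cannot be part of a blocking pair whose other endpoint is at or below the current pointers. When we advance $m \gets \nxt(m,j)$ because $\rank(w,i) > \rank(\mu(m),i)$, this $m$ prefers $\mu(m)$ to $w$ and to every woman ranked below $w$ in $\pi_i$ that uses $\sigma_j$, so $m$ is in no blocking pair with any not-yet-scanned woman of list $j$; symmetrically for advancing $w$. Hence if the loop exits with $m=\bot$ or $w=\bot$ without reporting, no blocking pair with $m$ using $\pi_i$ and $w$ using $\sigma_j$ survives — contradicting the existence of $(m^\ast,w^\ast)$.

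I expect the main obstacle to be making the monotonicity invariant precise enough to handle the interleaving of the two pointer advances: one must verify that advancing one pointer never ``skips over'' a blocking pair involving an element the other pointer has yet to reach. The clean way to phrase it is: define $m$ to be \emph{dominated} (in iteration $(i,j)$) if $\mu(m)$ uses some list and $m$ ranks $\mu(m)$ above every woman using $\sigma_j$ that is currently at-or-below the $w$-pointer in $\pi_i$; show that at every step the $m$-pointer points to the topmost non-dominated man using $\pi_i$ in $\sigma_j$ relative to the current $w$-pointer, and dually. Then the first time neither advance is possible, the current $(m,w)$ satisfies both weak-preference inequalities, and one shows any hypothetical blocking pair must have both coordinates at-or-below the current pointers, forcing its man-coordinate to equal the current $m$ or be dominated — in either case the algorithm would have caught it. I would write the induction on the number of \texttt{while}-iterations, with the base case $m=\head(\sigma_j,i)$, $w=\head(\pi_i,j)$ being immediate. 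Combining the two parts yields the theorem.
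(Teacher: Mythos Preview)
Your correctness argument follows the same two-pointer idea as the paper, just stated more explicitly via an invariant; that part is fine and essentially matches the paper's proof.

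Your running-time analysis, however, contains an error. You claim that the while loop for fixed $(i,j)$ runs in time proportional to
\[
|\{m : m \text{ uses } \pi_i,\ \mu(m) \text{ uses } \sigma_j\}| \;+\; |\{w : w \text{ uses } \sigma_j,\ \mu(w) \text{ uses } \pi_i\}|,
\]
but the restriction on $\mu(m)$ (resp.\ $\mu(w)$) is unjustified: the $m$-pointer ranges over \emph{all} men using $\pi_i$, in $\sigma_j$-order, and whether it advances depends only on where $\mu(m)$ sits in $\pi_i$ relative to the current $w$, not on which list $\mu(m)$ uses. Concretely, if every man uses $\pi_i$ but only a single woman $w_0$ uses $\sigma_j$, your bound is $O(1)$, yet the loop may step the $m$-pointer across $\Theta(n)$ men before terminating. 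So your assertion that ``each man is counted once (for the pair $(i,j)$ where $i$ is his list index and $j$ is $\mu(m)$'s list index)'' undercounts by a factor of $d$.

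The correct per-iteration bound is simply $|\{m : m \text{ uses } \pi_i\}| + |\{w : w \text{ uses } \sigma_j\}|$. Summing over all $(i,j)$, each man using $\pi_i$ is scanned once for each of the $d$ choices of $j$, giving $d$ contributions per man and hence $O(dn)$ total, and symmetrically for the women. This is precisely the paper's accounting (``once for each of the $d$ lists in which they appear''). The theorem is unaffected, but you should replace your amortization by this straightforward $d$-times-per-participant count.
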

\begin{proof}
  We claim that algorithm \ref{alg:listtestalgorithm} satisfies the
  theorem. Indeed, if the algorithm returns a pair $(m,w)$ where $m$
  uses $\pi_i$ and $w$ uses $\sigma_j$, then $(m,w)$ is a blocking
  pair because $w$ appears earlier in $\pi_i$ than $\mu(m)$ and $m$
  appears earlier in $\sigma_j$ than $\mu(w)$.

On the other hand, suppose the algorithm returns that $\mu$ is stable
but there is a blocking pair, $(m,w)$, where $m$ uses $\pi_i$ and $w$
uses $\sigma_j$. The algorithm considers permutations $\pi_i$ and
$\sigma_j$ since it does not terminate early. Clearly if the algorithm
evaluates $m$ and $w$ simultaneously when considering permutations
$\pi_i$ and $\sigma_j$, it will detect that $(m,w)$ is a blocking
pair. Therefore, the algorithm either moves from $m$ to $\nxt(m,j)$
before considering $w$ or it moves from $w$ to $\nxt(w,i)$ before
considering $m$. In the former case, $\rank(\mu(m),i) < \rank(w',i)$
for some $w'$ that comes before $w$ in $\pi_i$. Therefore $m$ prefers
$\mu(m)$ to $w$. Similarly, in the latter case, $\rank(\mu(w),j) <
\rank(m',i)$ for some $m'$ that comes before $m$ in $\sigma_j$ so $w$
prefers $\mu(w)$ to $m$. Thus $(m,w)$ is not a blocking pair and we
have a contradiction.

The \textbf{for} and \textbf{while} loops proceed through all men and
women once for each of the $d$ lists in which they appear. Since 
at each step we are either proceeding to the next man
or the next woman unless we find a blocking pair, the algorithm requires time 
$O(dn)$. This is optimal since the input size is $dn$.
\end{proof}

\subsection{Boolean Attributes and Weights}
\label{sec:verifyboolean}

In this section we consider the problem of verifying a stable matching
when the $d$ attributes and weights are restricted to boolean values
and $d = c \log n$. The algorithm closely follows an algorithm for the
maximum inner product problem by Alman and Williams
\cite{AlmanWilliams2015}. The idea is to express the existence of a
blocking pair as a probabilistic polynomial with a bounded number of
monomials and use fast rectangular matrix multiplication to evaluate
it. A probabilistic polynomial for a function $f$ is a polynomial $p$
such that for every input $x$
\begin{equation}
  \label{eq:3}
  \Pr[f(x) \neq p(x)] \leq \frac13
\end{equation}

We use the following tools in our algorithm. $\maj_d$ is the threshold function that outputs $1$ if at least $d$ of its inputs are $1$.

\begin{lemma}[\cite{AlmanWilliams2015}]
\label{lem:maj}
  There is a probabilistic polynomial for $\maj_d$ on $n$ variables
  and error $\varepsilon$ with degree $O(\sqrt{n \log(1/\varepsilon)})$.
\end{lemma}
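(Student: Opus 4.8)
The plan is to reconstruct the standard recursive construction behind \cite{AlmanWilliams2015}. Write $s := \sum_{i} x_i$ and note that $\maj_d$ depends only on $s$. Two elementary facts are used. (i) If we are promised that $s$ lies in a known window of $W$ consecutive integers, then Lagrange interpolation of $\maj_d$ restricted to that window yields a univariate polynomial in $s$ of degree at most $W$ that is \emph{exactly} correct on every input meeting the promise. (ii) (Hoeffding for sampling without replacement.) If $T$ is a uniformly random subset of exactly half of the $m$ variables and $s_T := \sum_{i \in T} x_i$, then $\Pr[\,|2 s_T - s| > E\,] \le \beta$ for $E = O(\sqrt{m \log(1/\beta)})$. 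Crucially the bound to be proved is independent of $d$, so I treat the threshold as an extra free parameter and induct on the number of variables $m$; assume also $\varepsilon \le \tfrac13$.

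\emph{Construction on $m$ variables, threshold $t$, sampling-error budget $\beta_m$.} Draw $T$ as in (ii) and set $E = O(\sqrt{m\log(1/\beta_m)})$. Compare the estimate $2 s_T$ to $t$: in the ``clearly above'' region $2 s_T \ge t + E$ output $1$, in the ``clearly below'' region $2 s_T \le t - E$ output $0$, and in the ``uncertain'' window $|2 s_T - t| < E$ output $R(s)$, where $R$ is the degree-$O(E)$ interpolation polynomial from (i) for $\maj_t$ on the $O(E)$ integers nearest $t$. On the good sampling event one checks that $s \ge t$ throughout the first region, $s < t$ throughout the second, and $s$ lies in $R$'s window throughout the third, so this recipe is correct there. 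The region indicators are threshold functions of the $m/2$ variables of $T$, so realize them by two probabilistic threshold polynomials $Q_{\mathrm{lo}}, Q_{\mathrm{hi}}$ in $x_T$ (thresholds $\approx (t \mp E)/2$, error $\beta_m$), built by the \emph{same} construction recursively, and set
\[
  P(x) \;=\; Q_{\mathrm{hi}}(x_T) \;+\; \bigl(Q_{\mathrm{lo}}(x_T) - Q_{\mathrm{hi}}(x_T)\bigr)\cdot R\!\left(\textstyle\sum_i x_i\right).
\]
A short case check shows $P(x) = \maj_t(x)$ on the event that the sample is good and $Q_{\mathrm{lo}}, Q_{\mathrm{hi}}$ are both correct on $x$. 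The recursion stops once $m$ is small enough (roughly $m = O(\log(1/\varepsilon) + \log n)$) that the exact degree-$m$ polynomial is the cheaper choice.

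\emph{Accounting.} Let $D(m), e(m)$ denote degree and error. Since $\deg R = O(\sqrt{m\log(1/\beta_m)})$ and $\deg Q_{\mathrm{lo}}, \deg Q_{\mathrm{hi}} \le D(m/2)$, we get $D(m) \le D(m/2) + O(\sqrt{m\log(1/\beta_m)})$, while a union bound over the bad sample and the two recursive failures gives $e(m) \le \beta_m + 2\,e(m/2)$. At recursion depth $j$ we have $m = n/2^j$, with $O(\log n)$ levels. Picking a geometrically decaying budget $\beta_{n/2^j} \asymp \varepsilon\,4^{-j}$ makes the error telescope, $e(n) \le \sum_{j\ge0} 2^j \beta_{n/2^j} = O(\varepsilon)$, while $\log(1/\beta_{n/2^j}) = O(\log(1/\varepsilon) + j)$ and
\[
  D(n) \;=\; O\!\left(\sum_{j\ge0}\sqrt{\tfrac{n}{2^j}\bigl(\log(1/\varepsilon)+j\bigr)}\right) \;=\; O\!\left(\sqrt n\sum_{j\ge0}2^{-j/2}\bigl(\sqrt{\log(1/\varepsilon)}+\sqrt j\bigr)\right) \;=\; O\!\left(\sqrt{n\log(1/\varepsilon)}\right),
\]
because $\sum_j 2^{-j/2}$ and $\sum_j 2^{-j/2}\sqrt j$ converge. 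Rescaling $\varepsilon$ by a constant replaces the $O(\varepsilon)$ by $\varepsilon$.

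The delicate point is this accounting. Because each level branches into two sub-polynomials, a uniform per-level error budget would be amplified to $\approx \varepsilon n$ by the bottom level; the geometric schedule fixes this, but one must check that spending extra accuracy deep in the recursion---which widens the interpolation windows there---still leaves $\sum_j \sqrt{(n/2^j)(\log(1/\varepsilon)+j)}$ convergent. That balance is exactly why $\log(1/\varepsilon)$ ends up \emph{inside} the square root rather than multiplying it. The second thing that makes the construction work at all is that $R$ need only be correct on an $O(E)$-wide window---so it has degree $O(E)$ rather than $\Omega(m)$---which is legitimate precisely because step (ii) supplies, with probability $1-\beta_m$, the promise that $s$ lies in that window.
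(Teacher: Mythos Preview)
The paper does not prove this lemma: it is quoted verbatim from \cite{AlmanWilliams2015} and used as a black box, so there is nothing in the paper to compare your proposal against.  That said, what you wrote is a faithful reconstruction of the Alman--Williams recursive argument (random halving plus low-degree interpolation on the uncertain window), and your degree and error bookkeeping are correct.  One small wording slip: you describe $Q_{\mathrm{lo}}, Q_{\mathrm{hi}}$ as having ``error $\beta_m$'' but then (correctly) account for their failure probability as $e(m/2)$ in the recursion; the latter is what you actually need, so just fix the parenthetical.  Also be a touch careful with the boundary case $s=t$ in the ``clearly below'' region---you need $2s_T \le t - E - 1$ (or an equivalent one-unit slack) so that $s \le t-1$ rather than $s \le t$ on the good event---but this does not affect the asymptotics.
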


\begin{lemma}[\cite{Razborov1987, Smolensky1987}]
  \label{lem:razsmol}
  There is a probabilistic polynomial for the disjunction of $n$
  variables and error $\varepsilon$ with degree $O(\log(1/\varepsilon))$ 
\end{lemma}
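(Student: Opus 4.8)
The plan is to use the classical Razborov--Smolensky random-linear-form construction, working over $\mathbb{F}_2$. Set $t = \lceil \log_2(1/\varepsilon)\rceil$ and draw vectors $r_1,\dots,r_t \in \mathbb{F}_2^n$ independently and uniformly at random. For each $j$ let $\ell_j(x) = \sum_{i=1}^n (r_j)_i\, x_i$, a linear form of degree at most $1$, and define the random polynomial
\[
  p(x) \;=\; 1 - \prod_{j=1}^{t}\bigl(1 - \ell_j(x)\bigr).
\]
Its degree is at most $t = O(\log(1/\varepsilon))$, which is the claimed bound, so it only remains to verify the error guarantee, i.e.\ that $\Pr[p(x) \neq \bigvee_{i} x_i] \le \varepsilon$ for every fixed input $x$.

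First, if $x = 0^n$ then every $\ell_j(x) = 0$, the product equals $1$, and $p(x) = 0 = \bigvee_i x_i$ with certainty. Second, suppose $x \neq 0^n$ and fix a coordinate $i^\ast$ with $x_{i^\ast} = 1$. For a fixed $j$, reveal all entries of $r_j$ except $(r_j)_{i^\ast}$; then $\ell_j(x)$ equals a now-fixed bit plus $(r_j)_{i^\ast}\cdot x_{i^\ast} = (r_j)_{i^\ast}$, which is uniform in $\mathbb{F}_2$, so $\Pr[\ell_j(x) = 0] = \tfrac12$. Since $r_1,\dots,r_t$ are independent, $\Pr[\ell_1(x) = \dots = \ell_t(x) = 0] = 2^{-t} \le \varepsilon$; on the complementary event some factor $1 - \ell_j(x)$ vanishes, so $p(x) = 1 = \bigvee_i x_i$. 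Hence the error is at most $\varepsilon$ on every input, which also subsumes the $\tfrac13$ bound required by the definition of probabilistic polynomial. The same computation over any field $\mathbb{F}_q$ gives error $q^{-t}$ and thus the same degree bound; this is the form used in \cite{AlmanWilliams2015}, and it is what is needed in Section~\ref{sec:verifyboolean}.

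The argument is short and there is no serious obstacle. The one point that deserves care is the claim that a uniformly random $\mathbb{F}_2$-linear form evaluates to $0$ on a fixed nonzero input with probability exactly $\tfrac12$ --- this, rather than the weaker ``at most $1 - 2^{-w}$'' one might naively write for a weight-$w$ input, is precisely what makes the degree independent of $n$ --- together with the fact that the $t$ trials are mutually independent, which is why the conditioning that reveals the coordinate $i^\ast$ last is the correct way to organize the estimate. (If the application required the polynomial over $\mathbb{Z}$ or $\mathbb{R}$ the degree would in general acquire a $\log n$ factor, but the finite-field version above is all that is used here.)
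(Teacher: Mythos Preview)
Your argument is correct and is precisely the standard Razborov--Smolensky construction. Note, however, that the paper does not supply its own proof of this lemma: it is stated with citations \cite{Razborov1987, Smolensky1987} and used as a black box in Section~\ref{sec:verifyboolean}, so there is nothing in the paper to compare against beyond observing that your write-up matches the classical proof from those references.
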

 
\begin{lemma}[\cite{Williams2014}]
  \label{lem:eval}
  Given a polynomial $P(x_1, \ldots, x_m, y_1, \ldots y_m)$ with at
  most $n^{0.17}$ monomials and two
  sets $X, Y \subseteq \{0,1\}^m$ with $|X| = |Y| = n$, we can
  evaluate $P$ on all pairs $(x,y) \in X \times Y$ in time
  $\tilde{O}(n^2 + m \cdot n^{1.17})$. 
\end{lemma}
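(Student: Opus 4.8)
The plan is to reduce the batched evaluation to a single rectangular matrix multiplication. First I would put $P$ into separated form. Since $P$ has at most $N := n^{0.17}$ monomials and every variable occurs in a monomial only as a factor (the variables range over $\{0,1\}$, so we may assume each monomial is multilinear), I can write
\[
P(x_1,\dots,x_m,y_1,\dots,y_m) = \sum_{j=1}^{N} c_j \Bigl(\prod_{i \in S_j} x_i\Bigr)\Bigl(\prod_{i \in T_j} y_i\Bigr),
\]
where $S_j,T_j \subseteq \{1,\dots,m\}$ record which $x$- and $y$-variables appear in the $j$-th monomial and $c_j$ is its coefficient. This is the one structural fact that makes the reduction go through: each monomial splits cleanly into an $x$-part and a $y$-part.

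Second, I would build two matrices. Let $A$ be the $n \times N$ matrix whose row indexed by $x \in X$ has entries $A_{x,j} = \prod_{i \in S_j} x_i$, and let $B$ be the $N \times n$ matrix whose column indexed by $y \in Y$ has entries $B_{j,y} = c_j \prod_{i \in T_j} y_i$. Each of the at most $nN$ entries of $A$ (and of $B$) is a product of at most $m$ bits, so both matrices can be assembled in $O(m n N) = O(m n^{1.17})$ time. By construction $(AB)_{x,y} = \sum_{j} c_j (\prod_{i\in S_j} x_i)(\prod_{i\in T_j} y_i) = P(x,y)$, so the entire table of values we want is exactly the matrix product $AB$.

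Third, I would invoke fast rectangular matrix multiplication: by Coppersmith's algorithm, multiplying an $n \times n^{\alpha}$ matrix by an $n^{\alpha} \times n$ matrix takes $n^{2+o(1)}$ time for every $\alpha$ up to a constant threshold comfortably above $0.17$; since the inner dimension here is $N = n^{0.17}$, the product $AB$ is computed in $n^{2+o(1)} = \tilde{O}(n^2)$ time (any $\mathrm{poly}(N)$ arithmetic overhead per entry is negligible). Adding the cost of forming $A$ and $B$ gives the claimed $\tilde{O}(n^2 + m n^{1.17})$ bound.

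The only real point to be careful about is that the monomial count $n^{0.17}$ genuinely sits below the Coppersmith rectangular-multiplication threshold, so that the middle step costs $n^{2+o(1)}$ rather than something larger; everything else is bookkeeping. Note there is no probabilistic content in this lemma at all — it is exact evaluation of a fixed polynomial — so the randomness in the application enters only later, when $P$ is instantiated as a probabilistic polynomial (built from Lemmas~\ref{lem:maj} and~\ref{lem:razsmol}) for the blocking-pair predicate and one must separately argue that its number of monomials stays below $n^{0.17}$.
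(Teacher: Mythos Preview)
Your proposal is correct and is exactly the standard argument. Note, however, that the paper does not give its own proof of this lemma: it is quoted from \cite{Williams2014} and used as a black box. Your write-up reproduces the original proof from that reference --- separate each monomial into its $x$-part and $y$-part, form the $n \times n^{0.17}$ and $n^{0.17} \times n$ matrices, and invoke Coppersmith's rectangular matrix multiplication --- so there is nothing to compare against within the paper itself.
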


We construct a probabilistic polynomial that outputs
$1$ if there is a blocking pair. To minimize the degree of the polynomial, we pick a parameter $s$ and divide the men and women into sets
of size at most $s$. The polynomial takes the description of $s$ men $m_{1}, \ldots, m_{s}$ and $s$ women
$w_{1}, \ldots, w_{s}$ along with their respective partners as input, and
outputs $1$ if and only if there is a blocking pair $(m_i,w_j)$ among
the $s^2$ pairs of nodes with high probability.

\begin{lemma}
  \label{lem:probpol}
  Let $u$ be a large constant and $s=n^{1/u c \log^2 c}$. There
  is a probabilistic polynomial with the following inputs:
  \begin{itemize}
  \item The attributes and weights of $s$ men, $A(m_{1}), \ldots,
    A(m_{s}), \alpha(m_{1}), \ldots, \alpha(m_{s})$
  \item The attributes of the $s$ women that are matched with these men
    $A(\mu(m_1)), \ldots, A(\mu(m_s))$
  \item The attributes and weights of $s$ women, $A(w_{1}), \ldots,
    A(w_{s}), \alpha(w_{1}), \ldots, \alpha(w_{s})$
  \item The attributes of the $s$ men that are matched with these women
    $A(\mu(w_1)), \ldots, A(\mu(w_s))$
  \end{itemize}
  The output of the polynomial is $1$ if and only if there is a
  blocking pair with respect to the matching $\mu$ among the $s^2$
  pairs in the input. The number of monomials is at most $n^{0.17}$
  and the polynomial can be constructed efficiently.
\end{lemma}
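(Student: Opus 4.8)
The plan is to build the polynomial bottom-up, first expressing "$(m_i,w_j)$ is a blocking pair" as a small-degree probabilistic polynomial in the boolean attributes/weights of that single pair, and then combining over all $s^2$ pairs via a probabilistic OR, controlling the number of monomials by choosing $s$ small enough. The core observation is that, for a fixed man $m_i$, the condition that $m_i$ strictly prefers $w_j$ to $\mu(m_i)$ is exactly $\langle \alpha(m_i), A(w_j)\rangle > \langle \alpha(m_i), A(\mu(m_i))\rangle$; since all entries are boolean and there are $d = c\log n$ attributes, both inner products are integers in $\{0,\dots,d\}$, so this is a threshold comparison on $O(d)$ bits. Writing $\val_{m_i}(\mu(m_i))$ in binary (this is part of the input, or computable from $\alpha(m_i)$ and $A(\mu(m_i))$ by a small exact formula), the predicate "$\langle \alpha(m_i),A(w_j)\rangle > t$" for a fixed threshold $t$ is a $\maj$-type function of the $d$ AND-terms $\alpha_k(m_i)A_k(w_j)$ together with $O(\log d)$ constant bits encoding $t$; by Lemma~\ref{lem:maj} it has a probabilistic polynomial of degree $O(\sqrt{d\log(1/\varepsilon)})$ in those $O(d)$ variables. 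Symmetrically for the woman's side using $\val_{w_j}(\mu(w_j))$. The AND of the two sides is then a probabilistic polynomial of degree $O(\sqrt{d\log(1/\varepsilon)})$ as well (product of the two, or a fresh $\maj_2$).

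Next I would take the OR over all $s^2$ candidate pairs. Applying Lemma~\ref{lem:razsmol} with error parameter chosen so that the total error is below $1/3$: we need each of the $\Theta(s^2)$ inner blocking-pair polynomials to have error $\le \varepsilon$ with $s^2\varepsilon \le 1/6$, i.e. $\varepsilon = \Theta(1/s^2)$, plus the OR-polynomial itself to have error $\le 1/6$ over its $s^2$ inputs, which costs degree $O(\log s^2) = O(\log s)$. So each inner block has degree $O(\sqrt{d\log s}) = O(\sqrt{d\log s})$, and the composed polynomial — substitute the $s^2$ inner polynomials into the OR-polynomial on $s^2$ formal inputs — has total degree $O(\sqrt{d\log s}\cdot \log s) = O(\sqrt{d}\,(\log s)^{3/2})$ over the ground variables (the attributes and weights of the $s$ men and $s$ women, a total of $m' = O(sd)$ boolean variables).

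The main obstacle, and the step requiring the delicate parameter balancing, is bounding the number of monomials by $n^{0.17}$. A polynomial of degree $D$ over $N$ boolean variables has at most $\binom{N}{\le D} \le (eN/D)^D$ monomials. Here $N = O(sd) = O(sc\log n)$ and $D = O(\sqrt{d}\,(\log s)^{3/2})$. With $s = n^{1/(uc\log^2 c)}$ we get $\log s = \frac{\log n}{uc\log^2 c}$, hence $D = O\!\left(\sqrt{c\log n}\cdot\left(\tfrac{\log n}{uc\log^2 c}\right)^{3/2}\right)$; multiplying $D$ by $\log N = O(\log n)$ and checking $D\log N = o(0.17\log n)$ is what pins down how large the constant $u$ must be taken (the $\log^2 c$ in the denominator of $\log s$ is exactly what absorbs the $\sqrt{c}$ from the $\maj$ degree and the extra $\log$ factors, leaving a quantity that $u$ can drive below $0.17$). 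So the plan is: (i) fix $\varepsilon = \Theta(s^{-2})$; (ii) for each pair build the AND of two $\maj$-based probabilistic polynomials via Lemma~\ref{lem:maj}, noting the thresholds $\val_{m_i}(\mu(m_i))$, $\val_{w_j}(\mu(w_j))$ are determined by the stated inputs; (iii) combine with the Razborov–Smolensky OR of Lemma~\ref{lem:razsmol}; (iv) bound the degree by $O(\sqrt{d}\,\mathrm{polylog})$ and the monomial count by $(eN/D)^D$, and choose $u$ large enough that this is $\le n^{0.17}$; (v) observe a union bound gives total error $\le 1/3$, so the construction is a valid probabilistic polynomial, and every step is a polynomial-time symbolic computation, giving the "constructed efficiently" claim.
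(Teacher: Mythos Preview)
Your high-level plan is right—threshold polynomials for the two preference comparisons, AND them, then probabilistic OR over the $s^2$ pairs—but the monomial accounting in step (iv) fails. You bound the number of monomials by $\binom{N}{\le D}$ with $N = O(sd)$ total ground variables. Since $s = n^{\Theta(1)}$ (a small but positive power of $n$) and $D = \Theta(\log n)$, this gives $\binom{n^{\Theta(1)}}{\Theta(\log n)} = n^{\Theta(\log n)}$ monomials, not $n^{0.17}$; no choice of the constant $u$ rescues this, because $D\log(N/D)$ is quadratic in $\log n$ while you need it linear.

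The fix, and what the paper does, is to exploit that each blocking-pair polynomial $P_{ij}$ depends on only $O(d)$ of the ground variables (the attributes and weights of $m_i$, $w_j$, and their two partners), so each $P_{ij}$ has at most $\binom{O(d)}{D}$ monomials. Moreover, the Razborov--Smolensky OR with \emph{constant} error has \emph{constant} degree; your claim that the OR costs degree $O(\log s^2)$ misreads Lemma~\ref{lem:razsmol}, whose degree is $O(\log(1/\varepsilon))$ independent of the number of inputs. Substituting the $s^2$ polynomials $P_{ij}$ into a constant-degree OR therefore yields at most $\bigl(s^2\binom{O(d)}{D}\bigr)^{O(1)}$ monomials. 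Now $\binom{O(d)}{D} = \binom{O(c\log n)}{O(\log n/(\sqrt{u}\log c))} = n^{\delta\log(O(c)/\delta)}$ with $\delta = O(1/(\sqrt{u}\log c))$, and $s^2 = n^{O(1/(uc\log^2 c))}$; both exponents tend to $0$ as $u\to\infty$, which is exactly what lets you push the total below $n^{0.17}$. Without the per-pair $O(d)$-variable bookkeeping, the argument does not close.
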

\begin{proof}
  A pair $(m_i, w_j)$ is a blocking pair if and only if
  $\langle\alpha(m_i), A(\mu(m_i))\rangle < \langle\alpha(m_i), A(w_j)\rangle$ and 
  $\langle\alpha(w_j), A(\mu(w_j))\rangle < \langle\alpha(w_j),
  A(m_i)\rangle$. 
  Rewriting
  \begin{equation}
    \label{eq:2}
    F(x,y,a,b) := \langle x, y\rangle < \langle a, b \rangle =
   \maj_{d+1}\left(\neg(x_1 \land y_1), \ldots, \neg(x_d \land y_d), a_1 \land
     b_1, \ldots, a_d \land b_d\right)
  \end{equation}

  we have a blocking pair if and only if 
  \begin{equation}
    \label{eq:1}
    \bigvee_{\substack{i \in [1, s] \\ j \in [1, s]}}
    \big(F(\alpha(m_i), A(\mu(m_i)), \alpha(m_i), A(w_j)) \land F(\alpha(w_j), A(\mu(w_j)), \alpha(w_j), A(m_i))\big)
  \end{equation}

  Note that we can easily adapt this algorithm to
  finding strongly blocking pairs by defining $F(x,y,a,b)$ as $\langle
  x, y\rangle \leq \langle a, b \rangle$.

  Using Lemma \ref{lem:maj} with $\varepsilon = \frac{1}{s^3}$ and
  Lemma \ref{lem:razsmol} with $\varepsilon = 1/4$ we get a
  probabilistic polynomial of degree $a \sqrt{d \log s}$ for some
  constant $a$ and error $1/4 + 1/s < 1/3$. Furthermore, since we are
  only interested in boolean inputs we can assume the polynomial to be
  multilinear. For large enough $u$ we have $2d > a \sqrt{d \log(s)}$
  (i.e.~the degree is at most half of the number of variables)
  and the number of monomials is then bounded by
  $O\left(\left(s^2\binom{4d}{a\sqrt{d \log(s)}}\right)^2\right)$.

  Simplifying the binomial coefficient we have
  \begin{equation*}
    \binom{4d}{a \sqrt{d \log s}} = \binom{4 c \log n}{a \sqrt{(\log^2
        n)/u \log^2 c}} = \binom{4 c \log n}{a \log n / \sqrt{u}\log
      c}
  \end{equation*}

  Setting $\delta = a/(\sqrt{u}\log(c))$ we can upper bound this using
  Stirling's inequality by
  \begin{equation*}
    \binom{4 c \log n}{\delta \log n} \leq \left(\frac{(4 c \log n)\cdot
        e}{\delta}\right)^{\delta \log n} = n^{\delta \log(4ce/\delta)}
  \end{equation*}

  By choosing $u$ to be a large enough constant, we can make $\delta$ and the
  exponent arbitrarily small. The factor of $s^2$ only
  contributes a trivial constant to the exponent. Therefore we can bound
  the number of monomials by $n^{0.17}$.
\end{proof}

\begin{theorem}
  \label{thm:verifyboolean}
  In the $d$-attribute model with $n$ men and women, and $d = c \log
  n$ boolean attributes and weights, there is a randomized algorithm to decide
  if a given matching is stable in time $\tilde{O}(n^{2-1/O(c \log^2(c))})$ with 
  error probability at most $1/3$.
\end{theorem}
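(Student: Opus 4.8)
The plan is to assemble the probabilistic polynomial of Lemma \ref{lem:probpol} with the fast evaluation machinery of Lemma \ref{lem:eval}, following the Alman--Williams template. First I would partition the $n$ men into $n/s$ groups of size $s$ and likewise the $n$ women, where $s = n^{1/(uc\log^2 c)}$ as in Lemma \ref{lem:probpol}. For each of the $(n/s)^2$ pairs of groups $(G_M, G_W)$, I want to decide whether there is a blocking pair $(m,w)$ with $m \in G_M$, $w \in G_W$. By Lemma \ref{lem:probpol} this is exactly the output of a fixed probabilistic polynomial $P$ in the $2d$-bit descriptions (attributes, weights, and matched-partner attributes) of the $s$ men and $s$ women in the two groups; $P$ has at most $n^{0.17}$ monomials. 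Encoding each group of $s$ men as a single point $x \in \{0,1\}^{O(sd)}$ and each group of $s$ women as $y \in \{0,1\}^{O(sd)}$, Lemma \ref{lem:eval} evaluates $P$ on all $(n/s)^2$ group-pairs in time $\tilde O\big((n/s)^2 + sd \cdot (n/s)^{1.17}\big)$, provided $(n/s)^2$ monomials-worth of input fits the $n^{0.17}$ bound — here one must check that with $m' := O(sd)$ variables and $N := n/s$ "super-points," the polynomial still has at most $N^{0.17}$ monomials, which holds because the monomial count $n^{0.17}$ from Lemma \ref{lem:probpol} is $\le (n/s)^{0.17\cdot(1+o(1))}$ for the chosen $s$, after possibly enlarging $u$.

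Next I would bound the running time. The dominant term is $\tilde O((n/s)^2)$ from the matrix multiplication, which equals $\tilde O(n^2 / s^2) = \tilde O(n^{2 - 2/(uc\log^2 c)}) = \tilde O(n^{2 - 1/O(c\log^2 c)})$; the term $sd(n/s)^{1.17} = \tilde O(n^{1.17+o(1)})$ is lower order since $s, d$ are subpolynomial and $1.17 < 2 - 1/O(c\log^2 c)$ for the relevant range. Summing over any constant number of independent repetitions (to drive the error down, if desired) keeps the same asymptotic bound.

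Then I would handle correctness and error. For a single group-pair, $P$ errs with probability at most $1/3$ by Lemma \ref{lem:probpol}; to get a globally correct answer with constant probability I would either (i) observe that it suffices to amplify each group-pair's polynomial to error $1/(3(n/s)^2)$ by taking $O(\log n)$ independent copies and a majority — this only multiplies the degree by $O(\sqrt{\log n})$ and hence keeps the monomial count below $n^{0.17}$ after adjusting $u$ — or (ii) more cheaply, note the matching is stable iff every group-pair polynomial outputs $0$, so a one-sided-error version (amplify only to push the false-negative probability down, union-bounding over $(n/s)^2 = \mathrm{poly}(n)$ pairs) suffices; I would take route (i) since Lemma \ref{lem:maj} already gives us cheap amplification inside $F$. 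Finally, the algorithm declares $\mu$ unstable iff some evaluated entry is nonzero, and reports the certifying block, or $\mu$ stable otherwise.

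The main obstacle I anticipate is the bookkeeping that makes Lemma \ref{lem:eval} applicable: the lemma is stated for a polynomial whose two argument blocks $X, Y$ have $n$ elements and $m$ bits each, but here the "elements" are groups of $s$ participants and the bit-length $m' = O(sd)$ grows while the number of elements $N = n/s$ shrinks, so one must verify both that $N^{0.17} \ge (\text{number of monomials of }P)$ under the chosen parameters and that the additive $m' N^{1.17}$ term stays subquadratic — this is exactly the place where the specific value $s = n^{1/(uc\log^2 c)}$ and a large enough constant $u$ are forced, and it has to be argued carefully rather than by analogy.
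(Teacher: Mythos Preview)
Your proposal is correct and follows essentially the same approach as the paper: partition into $n/s$ groups of size $s$, apply Lemma~\ref{lem:probpol} to get the probabilistic polynomial, and batch-evaluate via Lemma~\ref{lem:eval}. One small difference: for error amplification the paper simply repeats the entire batch evaluation $O(\log n)$ times and takes the majority output per group-pair (then union-bounds), rather than building the amplification into the polynomial as in your option~(i); this sidesteps any worry about composing a threshold with non-boolean intermediate outputs, and you may find it cleaner to adopt. Your flagged ``main obstacle'' about the monomial bound $n^{0.17}$ versus the required $(n/s)^{0.17}$ is well spotted---the paper indeed elides it, and your fix (enlarge $u$ so the exponent in Lemma~\ref{lem:probpol} drops below $0.17(1-1/(uc\log^2 c))$) is exactly right.
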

\begin{proof}
  We again choose $s=n^{1/u c \log^2 c}$ and construct the probabilistic
  polynomial as in Lemma \ref{lem:probpol}. We then divide the men and
  women into $\lceil \frac{n}{s} \rceil$ groups of size at most $s$.

  For a group of men $m_{1}, \ldots, m_{s}$ we let the
  corresponding input vector be 
  \begin{equation*}
  A(m_{1}), \ldots, A(m_{s}),
  \alpha(m_{1}), \ldots, \alpha(m_{s}), A(\mu(m_1)), \ldots,
  A(\mu(m_s))
  \end{equation*}
  We set $X$ as the set of all input vectors for the
  $\lceil \frac{n}{s} \rceil$ groups. We define the set $Y$
  symmetrically for the input vectors corresponding to the $\lceil
  \frac{n}{s} \rceil$ groups of women.

  Using Lemma \ref{lem:eval} we evaluate the polynomial on all pairs
  $x \in X$, $y\in Y$ in time
  \begin{equation}
    \tilde{O}\left(\left(\frac{n}{s}\right)^2 +
    O(sd)\left(\frac{n}{s}\right)^{1.17}\right) =
  \tilde{O}\left(\left(\frac{n}{s}\right)^2\right) =
  \tilde{O}(n^{2-1/O(c \log^2(c))})
  \end{equation}
  The probability that the output is wrong for any fixed input pair is
  at most $1/3$. We repeat this process $O(\log n)$ times and take the
  threshold output for every pair of inputs, such that the error
  probability is at most $O\left(\frac{1}{n^2}\right)$ for any fixed
  pair of inputs. Using a union bound we can make the probability of
  error at most $1/3$ on any input.
\end{proof}

\section{Conditional Hardness}
\label{sec:hardness}
\subsection{Background}
The Strong Exponential Time Hypothesis has proved useful in
arguing conditional hardness for a large number of problems. We show $\SETH$-hardness for both verifying and finding a
stable matching in the $d$-attribute model, even if the weights and
attributes are boolean. The main step of the proof is a reduction from
the maximum inner product problem to the stable matching problem. The maximum inner product problem is known to be $\SETH$-hard. We give
the fine-grained reduction from $\cnfsat$ to the vector orthogonality
problem and from the vector orthogonality
problem to the maximum inner product problem for the sake of completeness.

\begin{definition}[\cite{ImpagliazzoPaturi2001, ImpagliazzoPaturiZane2001}]
  The \emph{Strong Exponential Time Hypothesis} ($\SETH$) stipulates that
  for each $\varepsilon > 0$ there is a $k$ such that $\ksat$ requires
  time $\Omega(2^{(1-\varepsilon) n})$.
\end{definition}

\begin{definition}
  For any $d$, the vector orthogonality problem
  is to decide if two input sets $U, V
  \subseteq \reals^d$ with $|U| = |V| = n$ have a pair $u \in U$, $v
  \in V$ such that $\langle u, v \rangle = 0$.

  The boolean vector orthogonality problem is the variant where $U, V
  \subseteq \{0,1\}^d$.
\end{definition}

\begin{definition}
  For any $d$ and input $l$, the maximum inner product problem
  is to decide if two input sets $U, V
  \subseteq \reals^d$ with $|U| = |V| = n$ have a pair $u \in U$, $v
  \in V$ such that $\langle u, v \rangle \geq l$.

  The boolean maximum inner product problem is the variant where $U, V
  \subseteq \{0,1\}^d$.
\end{definition}

\begin{lemma}[\cite{ImpagliazzoPaturiZane2001,Williams2004,AlmanWilliams2015}]
  Assuming $\SETH$, for any $\varepsilon > 0$, there is a $c$ such
  that solving the boolean maximum inner product problem on $d = c
  \log n$ dimensions requires time $\Omega(n^{2-\varepsilon})$.
\end{lemma}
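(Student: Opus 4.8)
The plan is to establish the $\SETH$-hardness of boolean maximum inner product by composing two fine-grained reductions: first from $\ksat$ to the boolean vector orthogonality problem (the split-and-list technique of Williams~\cite{Williams2004}), and then from boolean vector orthogonality to boolean maximum inner product. Throughout, the key parameter bookkeeping is that $\ksat$ on $N$ variables must be reduced to instances on $n = 2^{N/2}$ vectors in dimension $d = O(\log n)$, so that an $n^{2-\varepsilon}$ algorithm would yield a $2^{(1-\varepsilon/2)N}$ algorithm for $\ksat$, contradicting $\SETH$ once $\varepsilon$ is fixed and $k$ chosen accordingly.

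First I would describe the reduction from $\ksat$ to boolean vector orthogonality. Given a $k$-CNF formula $\phi$ with $N$ variables and $m = \mathrm{poly}(N)$ clauses, split the variables into two halves of size $N/2$. For each of the $2^{N/2}$ partial assignments to the first half, create a vector $u \in \{0,1\}^m$ whose $j$-th coordinate is $1$ iff that partial assignment fails to satisfy clause $C_j$; symmetrically build the set $V$ from partial assignments to the second half. Then $\langle u, v\rangle = 0$ exactly when every clause is satisfied by at least one of the two halves, i.e.\ the combined assignment satisfies $\phi$. This gives $|U| = |V| = n := 2^{N/2}$ vectors in dimension $m$; since $k$ is a constant we have $m = O(N^k) = O(\mathrm{polylog}\, n)$, which we then pad or bucket down to $d = c\log n$ by the standard trick (partitioning clauses into $c \log n$ groups and replacing each group-block by its OR, at the cost of a $2^{O(\text{block size})}$ blowup that is absorbed into $c$); alternatively one simply notes dimension $\mathrm{polylog}\,n$ already suffices to contradict $\SETH$ and cites that $d = c\log n$ is achievable.

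Second I would reduce boolean vector orthogonality to boolean maximum inner product. The clean way: given $U, V \subseteq \{0,1\}^d$, the inner product $\langle u, v\rangle$ ranges over $\{0,1,\dots,d\}$, and orthogonality is the event $\langle u,v\rangle = 0$. Replace each coordinate by its complement does not directly work since complementation is not linear in inner product, so instead append a gadget: for each $u$ form $u' = (u, 1, \mathbf{0})$-type vectors and for each $v$ form $v'$ so that $\langle u', v'\rangle = d - \langle u, v\rangle$ (achievable by replacing $u$ with $\bar u$ and $v$ with $v$ and adding coordinates recording $\sum v_i$; one checks $\langle \bar u, v\rangle = \sum_i v_i - \langle u,v\rangle$, then add a matching block encoding $\sum_i v_i$ on the $u$ side as all-ones). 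Then $\langle u, v\rangle = 0 \iff \langle u', v'\rangle = d$, so setting the threshold $l = d$ turns the orthogonality question into a maximum-inner-product question on vectors of dimension $O(d) = O(\log n)$ and the same $n$. Composing, an $O(n^{2-\varepsilon})$ algorithm for boolean max inner product at $d = c\log n$ solves $\ksat$ in time $2^{(1-\varepsilon/2)N}\mathrm{poly}(N)$, and choosing $\varepsilon' = \varepsilon/2$ in the definition of $\SETH$ fixes a $k$ violating the hypothesis.

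The main obstacle is purely the dimension accounting: the naive orthogonality reduction produces dimension $m = \mathrm{poly}(N) = \mathrm{polylog}(n)$ rather than exactly $c\log n$, and getting down to $d = c\log n$ while keeping the reduction fine-grained requires the clause-bucketing argument, where one must verify that grouping the $m$ clauses into $c\log n$ buckets and tabulating each bucket's satisfaction pattern only multiplies the running time by $2^{O(m/(c\log n))} = 2^{O(\mathrm{polylog} N / \log n)}$, which is $n^{o(1)}$ and hence harmless. I would therefore spend most of the care on stating precisely how $c$ depends on $\varepsilon$ and $k$ and confirming the chain of inequalities $2^{N/2} = n$, $d = c\log n$, and $n^{2-\varepsilon} = 2^{(1-\varepsilon/2)N}$ closes the loop against $\SETH$; the two reductions themselves are textbook, so I would present them compactly and cite \cite{Williams2004, AlmanWilliams2015} for the details already in the literature.
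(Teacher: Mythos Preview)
Your two-step chain (SAT $\to$ orthogonal vectors $\to$ max inner product) matches the paper's, but the dimension-reduction step has a genuine gap. You obtain vectors of dimension $m$ equal to the number of clauses, i.e.\ $m = O(N^k) = \mathrm{polylog}\,n$, and then propose clause-bucketing to reach $d = c\log n$, claiming the overhead is $2^{O(m/(c\log n))} = n^{o(1)}$. But $m/(c\log n) = \Theta\bigl((\log n)^{k-1}\bigr)$, so for $k\ge 2$ the overhead $2^{(\log n)^{k-1}}$ is super-polynomial in $n$, not $n^{o(1)}$ (in your final line you slipped from $\mathrm{poly}\,N$ to ``$\mathrm{polylog}\,N$'', which hides exactly this blow-up). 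No bucketing trick gets you from $\mathrm{poly}(N)$ clauses to $O(N)$ dimensions for free; the paper instead invokes the Sparsification Lemma~\cite{ImpagliazzoPaturiZane2001} up front, which rewrites $\ksat$ on $N$ variables as $2^{\delta N}$ instances each with only $c_{k,\delta} N$ clauses, so that split-and-list directly yields dimension $d = c_{k,\delta} N = 2c_{k,\delta}\log n$. That is the missing ingredient.

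Your second reduction (orthogonality $\to$ max inner product) differs from the paper's: you build a gadget so that $\langle u',v'\rangle = d - \langle u,v\rangle$ and set threshold $l=d$, whereas the paper partitions $U$ by Hamming weight into $U_0,\dots,U_d$ and observes that $u\in U_i$ is orthogonal to $v$ iff $\langle u,\neg v\rangle \ge i$, yielding $d{+}1$ max-inner-product instances. Both approaches are valid and cost only a constant-factor dimension increase or a factor-$d$ time increase; your gadget is cleaner once written out precisely (take $u' = (\bar u, 1^d)$ and $v' = (v, \bar v)$), though your in-line description of it is somewhat garbled.
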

\begin{proof}
  The proof is a series of reductions from $\ksat$ to boolean inner
  product. By the Sparsification Lemma \cite{ImpagliazzoPaturiZane2001} we can
  reduce $\ksat$ to a subexponential number of $\ksat$ instances with
  at most $d = c_k n$ clauses, where $c_k$ does not depend on
  $n$. Hence, assuming $\SETH$, for any $\varepsilon > 0$, there is a $c$
  such that $\cnfsat$ with $cn$ clauses requires time
  $\Omega(2^{(1-\varepsilon)n})$.

  We reduce $\cnfsat$ to the boolean vector orthogonality problem
  using a technique called \emph{Split and List}. Divide the variable
  set into two sets $S, T$ of size $\frac{n}{2}$ and for each
  set consider all $N = 2^{n/2}$ assignments to the
  variables. For every assignment we construct a $d$-dimensional vector
  where the $i$th position is $1$ if and only if the assignment does
  not satisfy the $i$th clause of the CNF formula. Let $U$ be the
  set of vectors corresponding to the assignments to $S$ and let $V$ be
  the set of vectors corresponding to $T$. A pair $u \in U$, $v \in V$
  is orthogonal if and only if the corresponding assignment satisfies
  all clauses. An algorithm for boolean vector orthogonality in
  dimension $d = c n = 2 c \log N$ and time $O(N^{2-\varepsilon}) =
  O(2^{(1-\varepsilon/2)n})$ would contradict $\SETH$. Hence
  assuming $\SETH$, for every $\varepsilon > 0$ there is a $c$ such
  that the boolean vector orthogonality problem with $d = c \log N$
  requires time $\Omega(N^{2-\varepsilon})$.

  Finally, we reduce the boolean vector orthogonality problem
  to the boolean maximum inner product problem by partitioning the set
  $U$ into sets $U_i$ for $0 \leq i \leq d$ where $U_i$ contains all
  vectors with Hamming weight $i$. Observe that a vector $v \in V$ is
  orthogonal to a vector $u \in U_i$ if and only if $\langle u,
  \neg v \rangle = i$, where $\neg v$ is the element-wise complement
  of $v$. Thus $U$ and $V$ have an orthogonal pair,
  if and only if there is an $i$ such that $U_i$ and $\neg V
  =\{\neg v \; |\; v \in V\}$ have a pair with inner product at
  least $i$. Therefore, for any $\varepsilon > 0$ there is a $c$ such
  that the maximum inner product problem on $d = c \log N$ dimensions
  requires time $\Omega(N^{2-\varepsilon})$ assuming $\SETH$.  
\end{proof}

\subsection{Finding Stable Matchings}
In this subsection we give a fine-grained reduction from the maximum
inner product problem to the problem of finding a stable matching in
the boolean $d$-attribute model. This shows that the stable matching
problem in the $d$-attribute model is $\SETH$-hard, even if we
restrict the attributes and weights to booleans.

\begin{theorem}
  \label{thm:hardnessfinding}
  Assuming $\SETH$, for any $\varepsilon > 0$, there is a $c$ such that
  finding a stable matching in the boolean $d$-attribute model
  with $d = c \log n$ dimensions requires time
  $\Omega(n^{2-\varepsilon})$.
\end{theorem}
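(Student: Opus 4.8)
The plan is to give a fine-grained reduction from the boolean maximum inner product problem --- which the preceding hardness lemma shows is $\SETH$-hard at dimension $d = c\log n$ --- to finding a stable matching in the boolean $d'$-attribute model with $d' = O(\log n)$. Given instances $U = \{u_1,\dots,u_n\}$, $V = \{v_1,\dots,v_n\} \subseteq \{0,1\}^d$ and a threshold $l$, I would build a market with $2n$ men $\{a_1,\dots,a_n\}\cup\{b_1,\dots,b_n\}$ and $2n$ women $\{c_1,\dots,c_n\}\cup\{e_1,\dots,e_n\}$, where the ``real men'' $a_i$ carry weight vectors derived from $u_i$, the ``real women'' $e_j$ carry attribute vectors derived from $v_j$, and $b_j$, $c_i$ are dummy partners. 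The target matching is $\mu_0 = \{(a_i,c_i)\} \cup \{(b_j,e_j)\}$, and the construction is arranged so that $\mu_0$ is stable exactly when no pair satisfies $\langle u_i,v_j\rangle \ge l$, whereas if some such pair exists then no stable matching can pair every $b_j$ with $e_j$. The decision procedure then runs the assumed stable-matching algorithm on this market and checks in $O(n)$ time whether $\mu(b_j)=e_j$ for all $j$.

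Next I would spell out the attribute/weight vectors as a concatenation of $O(\log n)$-size blocks, all over $\{0,1\}$: a block realizing $\val_{a_i}(e_j)$ as a scaled copy of $\langle u_i,v_j\rangle$; a constant all-ones ``threshold'' block placing $\val_{a_i}(c_i)$ strictly between the scaled values of $l-1$ and $l$, so that $a_i$ prefers $e_j$ to $c_i$ if and only if $\langle u_i,v_j\rangle \ge l$; two $O(\log n)$-size ``index-equality'' gadgets (encoding an index and its complement, so that the relevant inner product is maximized precisely on the matching index) that force $c_i$ to rank $a_i$ first among the real men and $e_j$ to rank $b_j$ first among the dummy men; and an $O(1)$-size ``real versus dummy'' block giving every real man a large additive bonus in every woman's valuation, so that each $c_i$ ranks $a_i$ as its overall top choice, each $b_j$ ranks $e_j$ as its overall top choice, and each $e_j$ prefers every real man to $b_j$. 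One checks that the total dimension is $O(\log n) = c'\log n$ for a constant $c'$ depending only on $c$, and that the market can be written down in $O(n\log n)$ time.

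With the preferences in place I would verify the two directions. If no pair has $\langle u_i,v_j\rangle \ge l$: then $a_i$ and $c_i$ are mutual top choices, so $(a_i,c_i)$ lies in every stable matching; deleting these pairs leaves a sub-market on $\{b_j\}$ versus $\{e_j\}$ in which each $b_j$'s top remaining choice is $e_j$ and vice versa, so iterating the mutual-top-choice argument shows $\mu_0$ is the \emph{unique} stable matching, and in particular $\mu(b_j)=e_j$ for all $j$. Conversely, if $\langle u_{i^*},v_{j^*}\rangle \ge l$, then in any matching that pairs every $b_j$ with $e_j$ the man $a_{i^*}$ is matched to some $c_i$, which he ranks below $e_{j^*}$ (a high inner-product woman ranks above all dummy women for him), while $e_{j^*}$ prefers $a_{i^*}$ to its partner $b_{j^*}$; hence $(a_{i^*},e_{j^*})$ is a blocking pair, so no stable matching pairs every $b_j$ with $e_j$. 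Thus the $O(n)$ check decides the maximum inner product instance correctly, and an $O(N^{2-\varepsilon})$ algorithm for finding a stable matching on $N = 2n$ participants with $d' = O(\log N)$ attributes would give an $O(n^{2-\varepsilon}\,\mathrm{polylog}\, n)$ algorithm for boolean maximum inner product; choosing $c$ from the hardness lemma for an appropriately smaller error and taking $\bar c = c'$ yields the claimed contradiction.

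The step I expect to be the main obstacle is the gadget engineering over $\{0,1\}$: realizing ``$a_i$ prefers $e_j$ to $c_i$ iff $\langle u_i,v_j\rangle\ge l$'' together with the two index-equality comparisons and the real/dummy bonus simultaneously, so that the blocks do not interfere and the dimension stays $O(\log n)$ --- combined with the uniqueness argument in the ``no'' case, which is exactly what makes the single matching returned by the algorithm informative rather than merely certifying the presence or absence of some blocking pair.
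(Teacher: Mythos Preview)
Your approach is plausible and, with enough care in the gadgetry, can be made to work, but it is substantially more involved than the paper's argument. The paper exploits a one-line observation that lets it avoid dummy participants, index-equality gadgets, threshold blocks, and the uniqueness analysis entirely. Given $U,V\subseteq\{0,1\}^d$ and threshold $l$, it simply builds a market with $n$ men and $n$ women where man $m_u$ has $A(m_u)=\alpha(m_u)=u$ and woman $w_v$ has $A(w_v)=\alpha(w_v)=v$. This market is \emph{symmetric}: $\val_{m_u}(w_v)=\val_{w_v}(m_u)=\langle u,v\rangle$. The key point is that in any stable matching of a symmetric market, some matched pair must attain the global maximum value; otherwise the pair $(m_u,w_v)$ realising the maximum would strictly prefer each other to their partners and block. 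So one finds \emph{any} stable matching, scans the $n$ matched pairs in $O(dn)$ time, and outputs ``yes'' iff some pair has inner product at least $l$. No padding of the dimension, no structural control over which matching is returned.

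What your route buys is a reduction whose answer can be read off from the \emph{shape} of the matching rather than from recomputing inner products; this is closer in spirit to the paper's separate hardness proof for \emph{verification}, which does use dummy partners. The cost is exactly the obstacle you flagged: you must engineer several non-interfering boolean blocks and argue uniqueness in the ``no'' case. On that front, note a small gap in your sketch: your two index-equality gadgets are stated only for $c_i$'s ranking of real men and $e_j$'s ranking of dummy men, but your mutual-top-choice argument in the ``no'' case also needs $a_i$ to rank $c_i$ above every other $c_{i'}$ (otherwise ``$a_i$ and $c_i$ are mutual top choices'' fails). You would need the index block to act symmetrically on $a_i$'s valuation of the $c$'s as well, or add a third gadget. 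This is fixable, but the paper's symmetry trick sidesteps the whole issue.
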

\begin{proof}
  The proof is a reduction from maximum inner product to finding a
  stable matching. Given an instance of the maximum inner
  product problem with sets $U, V \subseteq \{0,1\}^d$ where
  $|U| = |V| = n$ and threshold $l$, we construct a matching market
  with $n$ men and $n$ women. For every $u \in U$ we have a man $m_u$
  with $A(m_u) = u$ and $\alpha(m_u) = u$. Similarly, for vectors
  $v \in V$ we have women $w_v$ with $A(w_v) = v$ and
  $\alpha(w_v) = v$. This matching market is symmetric in the sense
  that for $m_u$ and $w_v$,
  $\val_{m_u}(w_v) = \val_{w_v}(m_u) = \langle u, v \rangle$.

  We claim that any stable matching contains a pair $(m_u,w_v)$
  such that the inner product $\langle u, v \rangle$ is
  maximized. Indeed, suppose there are vectors $u \in U$, $v \in V$ with
  $\langle u, v \rangle \geq l$ but there exists a stable matching $\mu$
  with $\langle
  u', v' \rangle < l$ for all pairs $(m_{u'}, w_{v'}) \in \mu$. Then $(m_u, w_v)$ is clearly a blocking pair for $\mu$ which is a contradiction.
\end{proof}

\subsection{Verifying Stable Matchings}
In this section we give a reduction from the maximum inner product
problem to the problem of verifying a stable matching, showing that
this problem is also $\SETH$-hard.

\begin{theorem}
  \label{thm:hardnessverify}
  Assuming $\SETH$, for any $\varepsilon > 0$, there is a $c$ such that
  verifying a stable matching in the boolean $d$-attribute
  model with $d = c \log n$ dimensions requires time
  $\Omega(n^{2-\varepsilon})$.
\end{theorem}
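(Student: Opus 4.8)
The plan is to reduce from the boolean maximum inner product problem, reusing the symmetric matching market construction from the proof of Theorem~\ref{thm:hardnessfinding} but now exhibiting an explicit matching whose stability encodes the answer. Given sets $U, V \subseteq \{0,1\}^d$ with $|U|=|V|=n$ and threshold $l$, I would first set up men $m_u$ with $A(m_u)=\alpha(m_u)=u$ and women $w_v$ with $A(w_v)=\alpha(w_v)=v$, so that $\val_{m_u}(w_v)=\val_{w_v}(m_u)=\langle u,v\rangle$ as before. The difficulty compared to the finding problem is that I must supply a \emph{specific} matching $\mu$ to be verified, and its stability must be equivalent to the nonexistence of a high-inner-product pair. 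A naive attempt — match $m_u$ to some $w_v$ arbitrarily — fails because a blocking pair could arise from two vectors with small inner product relative to $l$; stability of the given matching is not obviously tied to the threshold $l$ at all.

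The key idea I would pursue is to introduce dummy/gadget participants that pin down the matching and calibrate the ``reference'' value each real participant is matched to. Concretely, I would add a set of auxiliary women whose attribute/weight vectors are chosen so that, in the matching $\mu$ to be verified, each real man $m_u$ is matched to an auxiliary woman giving him value exactly $l-1$ (using extra coordinates, e.g.\ a block of coordinates dedicated to the gadget so that $\langle\alpha(m_u),A(\text{dummy})\rangle$ is a controllable constant independent of $u$), and symmetrically each real woman $w_v$ is matched to an auxiliary man giving her value $l-1$. The auxiliary participants are matched among the real ones and each other in a way that is internally stable by construction (e.g.\ the dummies most prefer their assigned partners, or are ranked so low by everyone that they never block). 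Then a real pair $(m_u,w_v)$ blocks $\mu$ if and only if $\langle u,v\rangle \geq l$ on both sides, i.e.\ iff $\langle u,v\rangle \geq l$; and one checks no blocking pair can involve an auxiliary participant. Hence $\mu$ is stable iff the maximum inner product instance is a NO instance. I would need to verify that the total number of participants stays $O(n)$ (so a truly subquadratic verification algorithm would beat the conditional lower bound for max inner product), and that $d$ grows only by an additive constant number of gadget coordinates, keeping $d = c'\log n$ for a suitable constant $c'$.

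The main obstacle I anticipate is arranging the auxiliary structure so that (i) the given matching $\mu$ is genuinely specifiable in the $d$-attribute model with only $O(\log n)$ boolean coordinates — booleanity is the real constraint, since I cannot literally set a value to ``$l-1$'' by fiat but must realize it as an inner product of $0/1$ vectors — and (ii) no spurious blocking pair appears among gadget–gadget or gadget–real pairs. A clean way to handle (i) is to reserve $\lceil \log_2 n\rceil + O(1)$ fresh coordinates in which real men carry all-ones and pair them with dummy women encoding, in binary, a ``slack'' amount that brings their total reference value to precisely $l-1$; since $l \le d = O(\log n)$, a logarithmic number of extra coordinates suffices, and the asymptotics are unaffected. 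For (ii) I would give every auxiliary participant a distinguished coordinate that makes real participants strictly prefer any real partner to any dummy and vice versa for dummies' own rankings, so cross pairs can never block. Once these gadget details check out, the theorem — and, as the authors note in the introduction, the analogous hardness for the stable pair and stable roommates problems — follows in the same way as Theorem~\ref{thm:hardnessfinding}.
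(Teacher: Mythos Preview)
Your high-level plan---introduce dummy partners for every real participant, match real-to-dummy with value exactly $l-1$, and argue that only real--real pairs can block and they do so iff $\langle u,v\rangle\ge l$---is precisely the paper's approach. However, the concrete gadget you propose in point~(ii) is backwards and would break the reduction: if real participants strictly prefer every real partner to every dummy, while in $\mu$ each real participant is matched to a dummy, then \emph{every} real--real pair $(m_u,w_v)$ blocks $\mu$ regardless of $\langle u,v\rangle$, and $\mu$ is never stable. The goal is not to make real partners uniformly more attractive than dummies; it is to make each real participant's assigned dummy worth \emph{exactly} $l-1$, so that a real woman $w_v$ beats $m_u$'s dummy precisely when $\langle u,v\rangle\ge l$.

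The paper accomplishes this cleanly by appending $2(l-1)$ fresh boolean coordinates, split into two blocks of size $l-1$. Real men carry $1^{l-1}\circ 0^{l-1}$ there, dummy women the same $1^{l-1}\circ 0^{l-1}$, real women carry $0^{l-1}\circ 1^{l-1}$, and dummy men the same $0^{l-1}\circ 1^{l-1}$; all dummies have $0^d$ on the original $d$ coordinates. Then every matched real--dummy pair has value $l-1$, every unmatched real--dummy pair also has value $l-1$ (so no strict improvement), every dummy--dummy pair has value $0$, and every real--real pair has value exactly $\langle u,v\rangle$. Since $l\le d$, the dimension is $d'=d+2(l-1)<3d$. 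Note also that no ``binary encoding'' of the slack is needed or even meaningful here: in the boolean model the inner product counts overlapping ones, so $l-1$ shared ones already give value $l-1$.
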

\begin{proof}
  We give a reduction from maximum inner product with sets
  $U, V \subseteq \{0,1\}^d$ where $|U| = |V| = n$ and threshold
  $l$. We construct a matching market with $2n$ men and women in the
  $d'$-attribute model with $d' = d + 2(l-1)$. Since $d' < 3d$ the
  theorem then follows immediately from the $\SETH$-hardness of
  maximum inner product.

  For $u \in U$, let $m_u$ be a man in the matching market with
  attributes and weights $A(m_u) = \alpha(m_u) = u \circ 1^{l-1} \circ
  0^{l-1}$ where we use $\circ$ for concatenation. Similarly, for $v
  \in V$ we have a woman $w_v$ with $A(w_v) = \alpha(w_v) = v \circ
  0^{l-1} \circ 1^{l-1}$. We further introduce \emph{dummy women}
  $w'_u$ for $u \in U$ with $A(w'_u) = \alpha(w'_u) = 0^d \circ
  1^{l-1} \circ 0^{l-1}$ and \emph{dummy men} $m'_v$ for $v \in V$
  with $A(m'_v) = \alpha(m'_v) = 0^d \circ 0^{l-1} \circ 1^{l-1}$.

  We claim that the matching consisting of pairs $(m_u, w'_u)$ for all
  $u \in U$ and $(m'_v, w_v)$ for all $v \in V$ is stable if
  and only if there is no pair $u \in U$, $v \in V$ with
  $\langle u, v \rangle \geq l$. For $u, u' \in U$ we have
  $\val_{m_u}(w'_{u'}) = \val_{w'_{u'}}(m_u) = l - 1$, and for
  $v, v' \in V$ we have
  $\val_{w_v}(m'_{v'}) = \val_{m'_{v'}}(w_v) = l -1$. In particular,
  any pair in $\mu$ has (symmetric) value $l-1$. Hence there is a
  blocking pair with respect to $\mu$ if and only if there is a pair
  with value at least $l$. For $u \neq u'$ and $v \neq v'$ the pairs
  $(m_u, w'_{u'})$ and $(w_v, m'_{v'})$ can never be blocking pairs as
  their value is $l-1$. Furthermore for any pair of dummy nodes $w'_u$
  and $m'_v$ we have $\val_{m'_v}(w'_u) = \val_{w'_u}(m'_v) = 0$, thus
  no such pair can be a blocking pair either. This leaves pairs of
  real nodes as the only candidates for blocking pairs. For non-dummy
  nodes $m_u$ and $w_v$ we have
  $\val_{m_u}(w_v) = \val_{w_v}(m_u) = \langle u, v \rangle$ so
  $(m_u,w_v)$ is a blocking pair if and only if
  $\langle u, v \rangle \geq l$.
\end{proof}

\tikzstyle{agent} = [draw, thin, circle, minimum size=1.15cm]
\tikzstyle{man} = [agent, fill=red!20]
\tikzstyle{woman} = [agent, fill=blue!20]
\tikzstyle{list} = []
\tikzstyle{match} = [draw, thick]
\tikzstyle{blocking} = [match, red]

\begin{figure}[H]
\caption{A representation of the reduction from maximum inner product
  to verifying a stable matching}
\centering
\begin{tikzpicture}[node distance=0.5cm, auto,>=latex']
    % We need to set at bounding box first. Otherwise the diagram
    % will change position for each frame.
    \path node[man, xshift = 2cm] (m1) {$m_{u_1}$}
              node[man, below = 0.1cm of m1] (m2) {$m_{u_2}$}
              node[man, below = 0.1cm of m2] (m3) {$m_{u_3}$};
    \path node[woman, right = of m1, xshift = 4cm] (dw1) {$w'_{u_1}$}
              node[woman, below = 0.1cm of dw1] (dw2) {$w'_{u_2}$}
              node[woman, below = 0.1cm of dw2] (dw3) {$w'_{u_3}$};

    \path node[woman, below = 0.1cm of dw3] (w1) {$w_{v_1}$}
              node[woman, below = 0.1cm of w1] (w2) {$w_{v_2}$}
              node[woman, below = 0.1cm of w2] (w3) {$w_{v_3}$};
   \path node[list, left = of m1, xshift = 0.5cm] (pm1n)
                  {$u_1 \circ 1^{l-1} \circ 0^{l-1}$}
                  node[list, left = of m2, xshift = 0.5cm] (pm2n)
                  {$u_2 \circ 1^{l-1} \circ 0^{l-1}$}
                  node[list, left = of m3, xshift = 0.5cm] (pm3n)
                  {$u_3 \circ 1^{l-1} \circ 0^{l-1}$}
                  node[list, right = of w1, xshift = -0.5cm] (pw1n)
                  {$v_1 \circ 0^{l-1} \circ 1^{l-1}$}
                  node[list, right = of w2, xshift = -0.5cm] (pw2n)
                  {$v_2 \circ 0^{l-1} \circ 1^{l-1}$}
                  node[list, right = of w3, xshift = -0.5cm] (pw3n)
                  {$v_3 \circ 0^{l-1} \circ 1^{l-1}$};  
    \path     node[list, right = of dw1, xshift = -0.5cm] (pdw1)
                  {$0^d \circ 1^{l-1} \circ 0^{l-1}$}
                  node[list, right = of dw2, xshift = -0.5cm] (pdw2)
                  {$0^d \circ 1^{l-1} \circ 0^{l-1}$}
                  node[list, right = of dw3, xshift = -0.5cm] (pdw3)
                  {$0^d \circ 1^{l-1} \circ 0^{l-1}$};
    \path node[man, below = 0.1cm of m3] (dm1) {$m'_{v_1}$}
              node[man, below = 0.1cm of dm1] (dm2) {$m'_{v_2}$}
              node[man, below = 0.1cm of dm2] (dm3) {$m'_{v_3}$}
              node[list, left = of dm1, xshift = 0.5cm] (pdm1) {$0^{d}
                \circ 0^{l-1} \circ 1^{l-1}$}
              node[list, left = of dm2, xshift = 0.5cm] (pdm2) {$0^{d} \circ 0^{l-1} \circ 1^{l-1}$}
              node[list, left = of dm3, xshift = 0.5cm] (pdm3)
              {$0^{d} \circ 0^{l-1} \circ 1^{l-1}$};
     \path (m1) edge[match] (dw1)
               (m2) edge[match] (dw2)
               (m3) edge[match] (dw3)
               (dm1) edge[match] (w1)
               (dm2) edge[match] (w2)
               (dm3) edge[match] (w3);
     %\path (m1) edge[blocking] (w1);
\end{tikzpicture}
\end{figure}
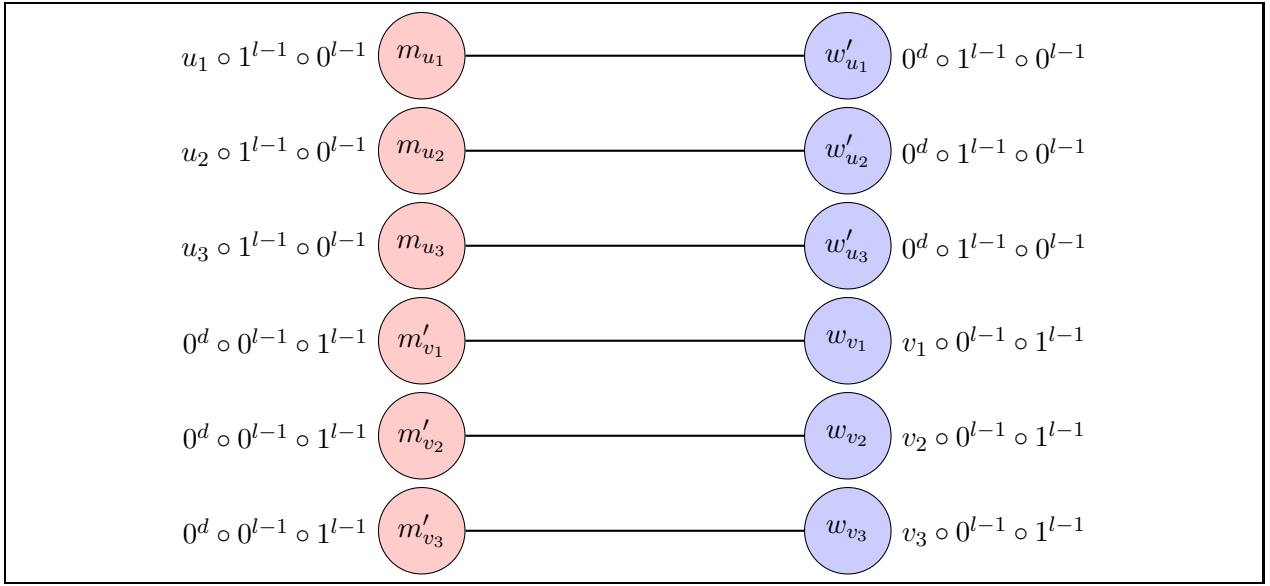

\subsection{Checking a Stable Pair}

In this section we give a reduction from the maximum inner product
problem to the problem of checking whether a given pair is part of any
or all stable matchings, showing that these questions are $\SETH$-hard
when $d = c \log n$ for some constant $c$. For general preferences,
both questions can be solved in time $O(n^2)$ \cite{IrvingLeather1986,Gusfield1987}  and are known
to require quadratic time \cite{ng1990lower,segal2007communication,gonczarowski2015stable}.

\begin{theorem}\label{thm:stablepair}
  Assuming $\SETH$, for any $\varepsilon > 0$, there is a $c$ such that
  determining whether a given pair is part of any or all stable matchings in the boolean $d$-attribute
  model with $d = c \log n$ dimensions requires time
  $\Omega(n^{2-\varepsilon})$.
\end{theorem}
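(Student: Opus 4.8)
The plan is to reduce from the maximum inner product problem, reusing essentially the same matching market constructed in the proof of Theorem~\ref{thm:hardnessverify}. Given sets $U, V \subseteq \{0,1\}^d$ with $|U|=|V|=n$ and threshold $l$, build the market with $2n$ men and $2n$ women in dimension $d' = d + 2(l-1)$: men $m_u$ and dummy women $w'_u$ for $u \in U$, women $w_v$ and dummy men $m'_v$ for $v \in V$, with the same attribute/weight vectors as before. Recall that in that construction every pair $(m_u,w'_u)$ and $(m'_v,w_v)$ has symmetric value $l-1$, every cross pair $(m_u, w_v)$ has value $\langle u,v\rangle$, dummy-dummy pairs have value $0$, and mismatched real-dummy pairs $(m_u,w'_{u'})$, $(m'_{v'},w_v)$ for $u\neq u'$, $v \neq v'$ have value $l-1$. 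So the ``canonical'' matching $\mu_0 = \{(m_u,w'_u)\} \cup \{(m'_v,w_v)\}$ is stable if and only if there is no pair with $\langle u,v\rangle \geq l$.

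The key idea is to augment the market with one extra man $\hat m$ and one extra woman $\hat w$ whose role is to make a designated pair's membership in a stable matching track the existence of a high inner-product pair. First I would handle the ``some stable matching'' question. Add $\hat m$ and $\hat w$ with attributes chosen so that $\hat m$ and $\hat w$ rank each other last (strictly below every real and dummy participant), and so that $\hat w$ is ranked last by every other man and $\hat m$ is ranked last by every other woman — this is easy with a fresh coordinate, e.g.\ giving $\hat m, \hat w$ zero weight on a new attribute on which everyone else scores positively, or symmetrically giving everyone else zero weight there. Then in any stable matching $\mu_0$ on the original $2n+2n$ participants extends to one where $\hat m$ is matched to $\hat w$, and conversely $(\hat m,\hat w)$ lies in a stable matching of the whole market precisely when the restriction to the original participants admits a stable completion. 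By the rural hospitals / decomposition theorem for stable matchings, $(\hat m,\hat w)$ is in some stable matching iff it is in all of them iff $\hat m$ and $\hat w$ are each other's only feasible partner, which happens exactly when the original market is solvable consistently with $\hat m,\hat w$ unmatched to real nodes — i.e.\ (after a short case analysis) iff the original market has a stable matching at all, equivalently iff there is no pair with $\langle u,v\rangle \geq l$. For the ``all stable matchings'' question a symmetric gadget works: one arranges that $(\hat m,\hat w)$ fails to be in \emph{every} stable matching exactly when a blocking pair among real nodes exists and can be ``swapped in.''

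The main obstacle is getting the gadget to behave correctly in both directions simultaneously while respecting the boolean, low-dimensional attribute constraint: I need the extra pair to be forced into (resp.\ excluded from) stable matchings in exactly the intended instances, and the cleanest route is to appeal to structural facts — the lattice structure of stable matchings and the fact that the set of matched participants is the same in every stable matching (so an unmatched-in-one is unmatched-in-all) — rather than re-running Gale--Shapley by hand. Concretely, I expect to argue: if every cross value $\langle u,v\rangle < l$ then $\mu_0$ is the unique stable matching on real nodes, so $(\hat m,\hat w)$ is in the unique stable matching of the full market; if some $\langle u,v\rangle \geq l$ then $(m_u,w_v)$ blocks every completion of $\mu_0$, forcing any stable matching to pair $m_u$ or $w_v$ elsewhere, and the gadget coordinates ensure this ripples to leave $\hat m$ (or $\hat w$) matched within the original node set in at least one stable matching, so $(\hat m,\hat w)$ is not in all (for the ``any'' variant one dualizes the inequality as in Lemma~\ref{lem:probpol}'s remark). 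Since $d' + O(1) < 3d + O(1)$ stays $O(\log n)$ and the construction has $O(n)$ nodes, a strongly subquadratic algorithm for either stable-pair question would yield one for boolean maximum inner product, contradicting $\SETH$.
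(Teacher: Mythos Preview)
Your proposal has a genuine gap: the gadget you describe does not create the needed dichotomy. If $\hat m$ is ranked last by every woman other than $\hat w$, and $\hat w$ is ranked last by every man other than $\hat m$, then $(\hat m,\hat w)$ is forced to appear in \emph{every} stable matching regardless of the rest of the market. Indeed, if in some stable matching $\hat m$ is matched to $w\neq\hat w$ and $\hat w$ to $m\neq\hat m$, then $m$ has his last choice and $w$ has hers, so $(m,w)$ is a blocking pair. Thus both stable-pair questions trivially answer ``yes'' on every instance, and nothing about $U,V$ is detected.

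The structural facts you invoke also do not say what you need. The rural hospitals theorem concerns which agents are unmatched, not who is matched to whom; with complete preferences and equal sides everyone is matched in every stable matching, so it is vacuous here. Your sentence ``\ldots iff the original market has a stable matching at all, equivalently iff there is no pair with $\langle u,v\rangle\geq l$'' is simply false: by Gale--Shapley a stable matching always exists, so this condition is always satisfied and cannot encode maximum inner product. Likewise ``$(\hat m,\hat w)$ is in some stable matching iff it is in all of them'' is not a general consequence of the lattice structure. The paper's construction is considerably more delicate: it introduces a special pair $(m^*,w^*)$ together with only $n-1$ (not $n$) dummies on each side, and places $m^*,w^*$ in carefully chosen intermediate positions in the various preference orders (not simply last). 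This off-by-one asymmetry is what causes the deferred acceptance dynamics to force $m^*$ and $w^*$ together in one case and apart in the other; you would need a comparably careful design, not a ``ranked last everywhere'' sink.
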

\begin{proof}
  We again give a reduction from maximum inner product with sets
  $U, V \subseteq \{0,1\}^d$ where $|U| = |V| = n$ and threshold
  $l$. We construct a matching market with $2n$ men and women in the
  $d'$-attribute model with $d' = 7d + 7(l-1)+ 18$. Since $d' < 15d$ the
  theorem then follows immediately from the $\SETH$-hardness of
  maximum inner product.

  For simplicity, we will first describe the preference scheme, then provide weight and attribute vectors that result in those preferences. For $u \in U$, let $m_u$ be a man in the matching market and for $v
  \in V$ we have a woman $w_v$. We also have $n-1$ \emph{dummy men} $m_i:i=1\dots n-1$ and $n-1$ \emph{dummy women} $w_j:j=1\dots n-1$. Finally, we have a \emph{special man} $m^*$ and \emph{special woman} $w^*$. This special pair is the one we will test for stability. Let the preferences be

\begin{align*}
  m_u: & \{w_v:\langle u, v\rangle \geq l\} \succ \{w_j\}_{j=1}^{n-1} \succ w^* \succ \{w_v:\langle u, v\rangle < l\} & \forall u\in U\\
  m_i: & \{w_v\} \succ \{w_j\}_{j=1}^{n-1} \succ w^* & \forall i\in\{1\dots n-1\}\\
  m^*: & w^* \succ \{w_v\} \succ \{w_j\}_{j=1}^{n-1}\\
  w_v: & \{m_u:\langle u, v\rangle \geq l\} \succ \{m_i\}_{i=1}^{n-1} \succ m^* \succ \{m_u:\langle u, v\rangle < l\} & \forall v\in V\\
  w_j: & \{m_u\} \succ \{m_i\}_{i=1}^{n-1} \succ m^* & \forall j\in\{1\dots n-1\}\\
  w^*: & \{m_i\}_{i=1}^{n-1} \succ \{m_u\} \succ m^*
\end{align*}

so that, for example, man $m_u$ corresponding to $u\in U$ will most prefer women $w_v$ for some $v\in V$ with $\langle u,v\rangle\geq l$ (in decreasing order of $\langle u,v\rangle$), then all of the dummy women (equally), then the special woman $w^*$, and finally the remaining women $w_v$ (in decreasing order of $\langle u,v\rangle$).

  First suppose for some $\hat{u}\in U$ and $\hat{v}\in V$ we have $\langle \hat{u},\hat{v} \rangle \geq l$ and let this be the pair with largest inner product. Now consider the deferred acceptance algorithm for finding the woman-optimal stable matching. First, $w_{\hat{v}}$ will propose to $m_{\hat{u}}$ and will be accepted. The dummy women will propose to the remaining men corresponding to $U$. Then any other woman $w_v$ will be accepted by either a dummy man or a man $m_u$, causing the dummy woman matched with him to move to a dummy man. In any case, all men besides $m^*$ are matched to a woman they prefer over $w^*$, so when she proposes to them, they will reject her. Thus $w^*$ will match with $m^*$. Since $w^*$ receives her least preferred choice in the woman optimal stable matching, $(m^*,w^*)$ is a pair in every stable matching.

Now suppose $\langle u,v \rangle < l$ for every $u\in U$,$v\in V$. Consider the deferred acceptance algorithm for finding the man-optimal stable matching. First, the dummy men will propose to the women corresponding to $V$ and will be accepted. Then every man $m_u$ will propose to the dummy women, but only $n-1$ of them can be accepted. The remaining one will propose to $w^*$. When $m^*$ proposes to $w^*$, she rejects him, causing him to eventually be accepted by the available woman $w_v$. Thus $m^*$ will not match with $w^*$ in any stable matching since she is his most preferred choice but he is not matched with her in the man-optimal stable matching, so $(m^*,w^*)$ is not a pair in any stable matching. Figure \ref{fig:stablepair} demonstrates each of these cases.

Since the stable pair questions for whether $(m^*,w^*)$ are a stable pair in any or all stable matchings are equivalent with these preferences, this reduction works for both.

Finally, we claim the following vectors realize the preferences above for the attribute model. We leave it to the reader to verify this. As in our other hardness reductions, the weight and attribute vectors are identical for each participant.

\[
\begin{array}{r l c c c c c c c c c c}
  m_u: & u^7   & \circ & 1^{7(l-1)} & \circ & 0^{7(l-1)} & \circ & 1^6            & \circ & 0^6            & \circ & 0^6\\
  m_i: & 0^{7d}& \circ & 1^{7(l-1)} & \circ & 1^{7(l-1)} & \circ & 0^6            & \circ & 1^6            & \circ & 0^6\\
  m^*: & 0^{7d}& \circ & 1^{7(l-1)} & \circ & 1^{7(l-1)} & \circ & 0^6            & \circ & 0^6            & \circ & 1^6\\
\hline
  w_v: & v^7   & \circ & 0^{7(l-1)} & \circ & 1^{7(l-1)} & \circ & 0^6            & \circ & 1^6            & \circ & (1\circ 0^5)\\
  w_j: & 0^{7d}& \circ & 1^{7(l-1)} & \circ & 0^{7(l-1)} & \circ & 1^6            & \circ & (1^5\circ 0)   & \circ & 0^6\\
  w^*: & 0^{7d}& \circ & 1^{7(l-1)} & \circ & 0^{7(l-1)} & \circ & (1^3\circ 0^3) & \circ & (1^4\circ 0^2) & \circ & (1^2\circ 0^4)\\
\end{array}
\]

%WAS:
%\begin{align*}
%  m_u: & u^7\circ 1^{7l-1}\circ 0^{7l-1}\circ 1^6\circ 0^6\circ 0^6\\
%  m_i: & 0^{7d}\circ 1^{7l-1}\circ 1^{7l-1}\circ 0^6\circ 1^6\circ 0^6\\
%  m^*: & 0^{7d}\circ 1^{7l-1}\circ 1^{7l-1}\circ 0^6\circ 0^6\circ 1^6\\
%  w_v: & v^7\circ 0^{7l-1}\circ 1^{7l-1}\circ 0^6\circ 1^6\circ 1\cdot 0^5\\
%  w_j: & 0^{7d}\circ 1^{7l-1}\circ 0^{7l-1}\circ 1^6\circ 1^5\circ 0\circ 0^6\\
%  w^*: & 0^{7d}\circ 1^{7l-1}\circ 0^{7l-1}\circ 1^3\circ 0^3\circ 1^4\circ 0^2\circ 1^2\circ 0^4\\
%\end{align*}

\end{proof}

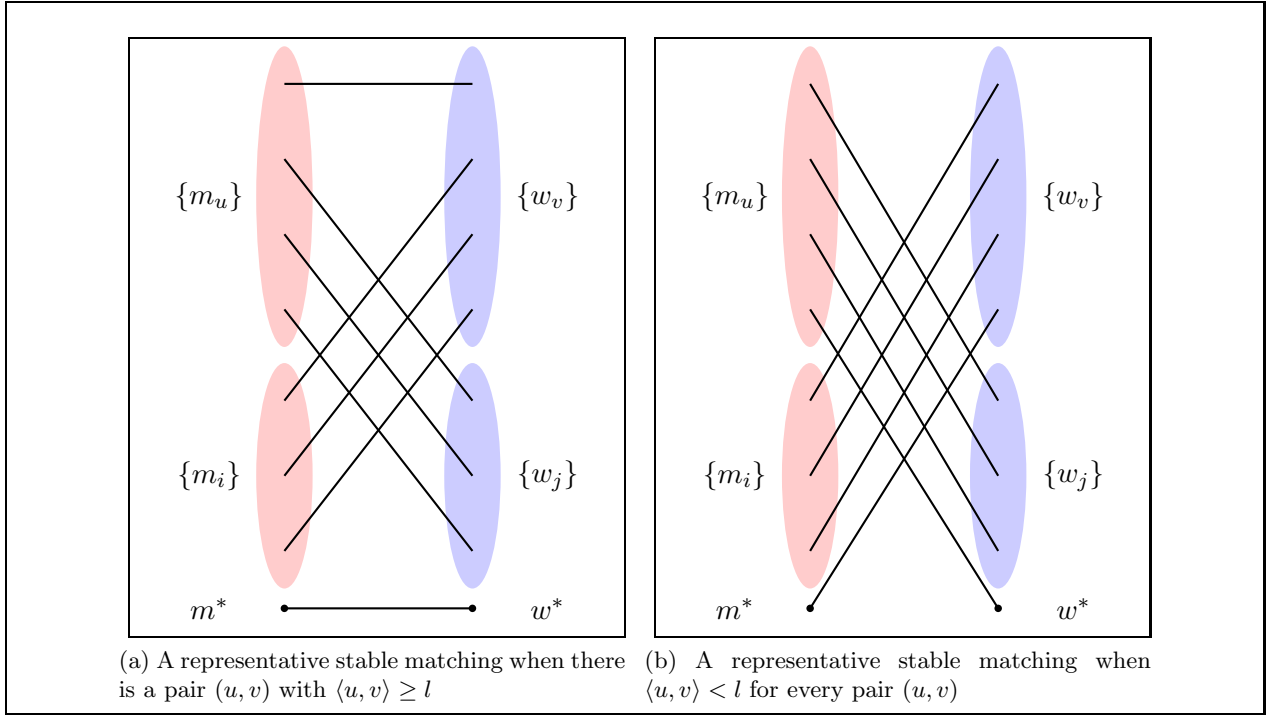
\begin{figure}[H]
\centering
\caption{A representation of the reduction from maximum inner product
  to checking a stable pair}
\label{fig:stablepair}
\subfloat[A representative stable matching when there is a pair $(u,v)$ with $\langle u,v \rangle \geq l$]{
\framebox[.4\textwidth]{
\begin{tikzpicture}[node distance=0.5cm]
    % We need to set at bounding box first. Otherwise the diagram
    % will change position for each frame.

    %men sets
    \path node[fill=red!20,ellipse,name=mu,minimum width=0.75cm,minimum height=4cm] at (0,0) {}
          node [fill=red!20,ellipse,name=mi,minimum width=0.75cm,minimum height=3cm,below = 0.2cm of mu] {}
          node [fill,circle,name=ms,radius=2pt,inner sep=1pt,below = 0.2cm of mi] {};
    %men labels
    \path node[list,left of=mu,xshift = -0.5cm] (lmu) {$\{m_u\}$}
          node[list,left of=mi,xshift = -0.5cm] (lmi) {$\{m_i\}$}
          node[list,left of=ms,xshift = -0.5cm] (lms) {$m^*$};
    %women sets
    \path node[fill=blue!20,ellipse,name=wv,right of=mu,xshift = 2cm,minimum width=0.75cm,minimum height=4cm] {}
          node [fill=blue!20,ellipse,name=wj,minimum width=0.75cm,minimum height=3cm,below = 0.2cm of wv] {}
          node [fill,circle,name=ws,radius=2pt,inner sep=1pt,below = 0.2cm of wj] {};
    %women lables
    \path node[list,right of=wv,xshift = 0.5cm] (lwv) {$\{w_v\}$}
          node[list,right of=wj,xshift = 0.5cm] (lwj) {$\{w_j\}$}
          node[list,right of=ws,xshift = 0.5cm] (lws) {$w^*$};
    %edges
    \draw[thick] ([yshift = 1.5cm]mu.center) edge ([yshift = 1.5cm]wv.center);
    \draw[thick] ([yshift = 0.5cm]mu.center) edge ([yshift = 1cm]wj.center);
    \draw[thick] ([yshift = -0.5cm]mu.center) edge (wj.center);
    \draw[thick] ([yshift = -1.5cm]mu.center) edge ([yshift = -1cm]wj.center);
    \draw[thick] ([yshift = 1cm]mi.center) edge ([yshift = 0.5cm]wv.center);
    \draw[thick] (mi.center) edge ([yshift = -0.5cm]wv.center);
    \draw[thick] ([yshift = -1cm]mi.center) edge ([yshift = -1.5cm]wv.center);
    \draw[thick] (ms.center) edge (ws.center);
\end{tikzpicture}
}}
~
\subfloat[A representative stable matching when $\langle u,v \rangle < l$ for every pair $(u,v)$]{
\framebox[.4\textwidth]{
\centering
\begin{tikzpicture}[node distance=0.5cm]
    % We need to set at bounding box first. Otherwise the diagram
    % will change position for each frame.

    %men sets
    \path node[fill=red!20,ellipse,name=mu,minimum width=0.75cm,minimum height=4cm] at (0,0) {}
          node [fill=red!20,ellipse,name=mi,minimum width=0.75cm,minimum height=3cm,below = 0.2cm of mu] {}
          node [fill,circle,name=ms,radius=2pt,inner sep=1pt,below = 0.2cm of mi] {};
    %men labels
    \path node[list,left of=mu,xshift = -0.5cm] (lmu) {$\{m_u\}$}
          node[list,left of=mi,xshift = -0.5cm] (lmi) {$\{m_i\}$}
          node[list,left of=ms,xshift = -0.5cm] (lms) {$m^*$};
    %women sets
    \path node[fill=blue!20,ellipse,name=wv,right of=mu,xshift = 2cm,minimum width=0.75cm,minimum height=4cm] {}
          node [fill=blue!20,ellipse,name=wj,minimum width=0.75cm,minimum height=3cm,below = 0.2cm of wv] {}
          node [fill,circle,name=ws,radius=2pt,inner sep=1pt,below = 0.2cm of wj] {};
    %women lables
    \path node[list,right of=wv,xshift = 0.5cm] (lwv) {$\{w_v\}$}
          node[list,right of=wj,xshift = 0.5cm] (lwj) {$\{w_j\}$}
          node[list,right of=ws,xshift = 0.5cm] (lws) {$w^*$};
    %edges
    \draw[thick] ([yshift = -1.5cm]mu.center) edge (ws.center);
    \draw[thick] ([yshift = 1.5cm]mu.center) edge ([yshift = 1cm]wj.center);
    \draw[thick] ([yshift = 0.5cm]mu.center) edge (wj.center);
    \draw[thick] ([yshift = -0.5cm]mu.center) edge ([yshift = -1cm]wj.center);
    \draw[thick] ([yshift = 1cm]mi.center) edge ([yshift = 1.5cm]wv.center);
    \draw[thick] (mi.center) edge ([yshift = 0.5cm]wv.center);
    \draw[thick] ([yshift = -1cm]mi.center) edge ([yshift = -0.5cm]wv.center);
    \draw[thick] (ms.center) edge ([yshift = -1.5cm]wv.center);
\end{tikzpicture}
}}
\end{figure}

This reduction also has consequences on the existence of nondeterministic algorithms for the stable pair problem assuming the \emph{Nondeterministic Strong Exponential Time Hypothesis}.

\begin{definition}[\cite{CarmosinoGaoImpagliazzoMikhailinPaturiSchneider2016}]
  The \emph{Nondeterministic Strong Exponential Time Hypothesis} ($\NSETH$) stipulates that
  for each $\varepsilon > 0$ there is a $k$ such that $\ksat$ requires
  co-nondeterministic time $\Omega(2^{(1-\varepsilon) n})$.
\end{definition}

In other words, the Nondeterministic Strong Exponential Time
Hypothesis stipulates that for $\cnfsat$ there is no proof of
unsatisfiability that can be checked deterministically in time
$\Omega(2^{(1-\varepsilon) n})$.

Assuming $\NSETH$, any problem that is $\SETH$-hard at time $T(n)$
under deterministic reductions either require $T(n)$ time
nondeterministically or co-nondeterministically, i.e. either there is
no proof that an instance is true or there is no proof that an
instance is false that can be checked in time faster than $T(n)$. Note
that all reductions in this paper are deterministic. In particular,
the maximum inner product problem does not have a
$O(N^{2-\varepsilon})$ co-nondeterministic time algorithm for any
$\varepsilon > 0$ assuming $\NSETH$, since it has a simple linear time
nondeterministic algorithm.

Since the reduction of Theorem \ref{thm:stablepair} is a simple
reduction that maps a true instance of maximum inner product to a true
instance of the stable pair problem, we can conclude that the stable
pair problem is also hard co-nondeterministically.

\begin{corollary}
  Assuming $\NSETH$, for any $\varepsilon > 0$, there is a $c$ such
  that determining whether a given pair is part of any or all stable
  matchings in the boolean $d$-attribute model with $d = c \log n$
  dimensions requires co-nondeterministic time
  $\Omega(n^{2-\varepsilon})$.
\end{corollary}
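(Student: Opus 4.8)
The plan is to reuse the deterministic reduction of Theorem~\ref{thm:stablepair} essentially verbatim, observing that it is ``co-nondeterminism preserving'' in the right sense. First I would isolate the two features of that reduction that matter here. It is \emph{deterministic} and runs in near-linear time: from a maximum inner product instance with $|U| = |V| = n$, threshold $l$, and dimension $d$, it writes down a boolean $d'$-attribute matching market with $2n$ men and $2n$ women and $d' = 7d + 7(l-1) + 18 < 15d$, in time $O(n \cdot \mathrm{poly}(d)) = \tilde O(n)$. Moreover it maps \emph{true} instances of maximum inner product to \emph{true} instances and \emph{false} instances to \emph{false} instances of the stable pair problem, for both the ``some stable matching'' and the ``every stable matching'' formulations --- this is exactly the case analysis in the proof of Theorem~\ref{thm:stablepair}, where $\langle \hat u, \hat v\rangle \ge l$ forces $(m^*, w^*)$ into every stable matching while $\langle u, v\rangle < l$ for all pairs keeps $(m^*, w^*)$ out of every stable matching.

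Next I would invoke the co-nondeterministic lower bound for maximum inner product established in the discussion preceding the corollary: since maximum inner product has a trivial linear-time \emph{non}deterministic algorithm (guess $u \in U$, $v \in V$ and verify $\langle u, v\rangle \ge l$), the $\NSETH$ dichotomy for deterministically $\SETH$-hard problems must resolve on the co-nondeterministic side, so, assuming $\NSETH$, for every $\varepsilon > 0$ there is a constant $c_0$ such that maximum inner product on $d = c_0 \log n$ dimensions admits no $O(n^{2-\varepsilon})$ co-nondeterministic algorithm.

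Finally I would compose the two. Fix $\varepsilon > 0$, let $c_0$ be the constant above, and set $c = 15 c_0$; suppose toward a contradiction that the stable pair problem in the boolean $d$-attribute model with $d = c \log n$ could be decided co-nondeterministically in time $O(n^{2-\varepsilon})$. Applying the reduction to a maximum inner product instance of size $n$ and dimension $c_0 \log n$ yields a stable pair instance of size $n' = 2n$ and dimension $d' < 15 c_0 \log n \le c \log n'$, and a co-nondeterministic certificate that $(m^*, w^*)$ lies in no stable matching of the constructed market --- together with the deterministic reduction that produced it --- is precisely a co-nondeterministic certificate that the original instance has no pair of inner product at least $l$; the total verification time is $O(n'^{2-\varepsilon}) + \tilde O(n) = O(n^{2-\varepsilon})$, contradicting the maximum inner product lower bound. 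The only point requiring care --- and the one I would treat as the main obstacle --- is getting the reduction directions straight: what we need is a co-nondeterministic (not nondeterministic) algorithm for the target, and this hinges on the reduction sending false instances to false instances; the remaining bookkeeping (the constant-factor dimension blow-up staying of the form $c \log n'$, and the $\tilde O(n)$ reduction cost being absorbed into $O(n^{2-\varepsilon})$) is routine.
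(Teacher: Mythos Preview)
Your proposal is correct and matches the paper's approach exactly: the paper does not give a separate proof but simply observes, in the paragraph preceding the corollary, that the deterministic reduction of Theorem~\ref{thm:stablepair} maps true instances of maximum inner product to true instances of the stable pair problem (and hence false to false), so the co-nondeterministic hardness of maximum inner product under $\NSETH$ transfers directly. Your write-up is a faithful expansion of that one-sentence justification, with the bookkeeping on dimension blow-up and reduction time made explicit.
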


We also have a reduction so that the given pair is stable in any or
all stable matchings if and only if there is not a pair of vectors
with large inner product. This shows that the stable pair problem is
also hard nondeterministically.

\begin{theorem}
  \label{thm:stablepairother}
  Assuming $\NSETH$, for any $\varepsilon > 0$, there is a $c$ such
  that determining whether a given pair is part of any or all stable
  matchings in the boolean $d$-attribute model with $d = c \log n$
  dimensions requires nondeterministic time
  $\Omega(n^{2-\varepsilon})$.
\end{theorem}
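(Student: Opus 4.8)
The plan is to combine the $\NSETH$ framework with an \emph{inverted} version of the reduction behind Theorem~\ref{thm:stablepair}. Recall from the discussion preceding this theorem that, assuming $\NSETH$, the maximum inner product problem has no $O(n^{2-\varepsilon})$ co-nondeterministic algorithm; equivalently, its complement --- deciding whether \emph{every} pair $u\in U$, $v\in V$ satisfies $\langle u,v\rangle < l$ --- has no $O(n^{2-\varepsilon})$ \emph{nondeterministic} algorithm. Hence it suffices to give a \emph{deterministic} fine-grained reduction from maximum inner product to the stable pair problem whose answer is flipped relative to Theorem~\ref{thm:stablepair}: on $d=c\log n$ boolean attributes it should produce a matching market of $\Theta(n)$ participants with a designated pair that lies in \emph{some} (equivalently, in \emph{every}) stable matching if and only if there is \emph{no} pair with $\langle u,v\rangle\ge l$. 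Composing such a reduction with a hypothetical $O(n^{2-\varepsilon})$ nondeterministic algorithm for the stable pair problem would give an $O(n^{2-\varepsilon})$ nondeterministic algorithm for the complement of maximum inner product, contradicting $\NSETH$; this is the final step.

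To obtain the inverted reduction I would reuse the instance $G$ of Theorem~\ref{thm:stablepair} essentially verbatim, exploiting the fact that there $(m^*,w^*)$ lies in every stable matching exactly when a high-inner-product pair exists and --- in the opposite case --- $m^*$ is provably pushed \emph{off} of $w^*$ in the man-optimal matching (the $n$ men $m_u$ then have no acceptable real partner and overflow onto $w^*$). I would augment $G$ by a single ``buffer'' pair: a woman $\beta$ and a man $m_0$ with $m_0$'s top choice $\beta$, with $\beta$ preferring $m^*$ first and $m_0$ second, and with $m^*$'s list changed to $w^*\succ\beta\succ\{w_v\}\succ\{w_j\}$, while every other participant ranks $\beta$ and $m_0$ near the bottom --- except that each $w_v$ places $m_0$ just above $m^*$, so that in the no-high-pair case $m_0$ can inherit the single real woman that $m^*$ would otherwise have absorbed. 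Then: if a high pair exists, $m^*$ stays matched to $w^*$ in every stable matching of $G$, never reaches $\beta$, and $(m^*,\beta)$ lies in no stable matching; if no high pair exists, $m^*$ loses $w^*$, immediately takes $\beta$ as his next choice, and since $\beta$ ranks $m^*$ first this pair cannot be dislodged (otherwise $m^*$ is matched to someone he likes less than $\beta$ and $(m^*,\beta)$ blocks), so $(m^*,\beta)$ lies in \emph{every} stable matching. Thus testing the pair $(m^*,\beta)$ realizes the inverted reduction, and, as in Theorem~\ref{thm:stablepair}, ``in some'' and ``in every'' stable matching collapse to the same predicate for it.

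The main obstacle is the correctness of this augmented gadget, which I expect to require two somewhat delicate checks. First, one must verify that adding $(\beta,m_0)$ does not perturb the behavior of $G$ in either case: tracing the man-proposing deferred-acceptance run in the no-high-pair case to confirm that after $\beta$ rejects $m_0$ he simply walks down his list and absorbs the unique freed real woman without triggering a cascade (this uses that every $w_v$ prefers its dummy man $m_i$ to $m_0$), and confirming that in the high-pair case $\beta$ and $m_0$ create no new blocking pair with any participant of $G$. Second, one must exhibit boolean attribute/weight vectors of dimension $O(d)$ realizing the modified preference lists, extending the explicit vectors given in the proof of Theorem~\ref{thm:stablepair}; this is routine but must be spelled out. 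An alternative to the buffer pair is to argue directly that a symmetric rearrangement of the preference scheme of Theorem~\ref{thm:stablepair} --- swapping the roles played by the existence versus non-existence of a high-inner-product pair --- already yields the inverted reduction; the buffer-pair construction is simply the cleanest way I see to bootstrap off Theorem~\ref{thm:stablepair} as a black box.
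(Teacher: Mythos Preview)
Your overall strategy is correct and matches the paper's: obtain an \emph{inverted} reduction from maximum inner product so that the designated pair lies in some (equivalently every) stable matching iff there is \emph{no} high-inner-product pair, and then invoke $\NSETH$ via the co-nondeterministic hardness of maximum inner product.

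The paper, however, does not introduce a buffer pair. It takes exactly the alternative you mention in your last sentence: a small direct modification of the Theorem~\ref{thm:stablepair} construction, keeping $(m^*,w^*)$ as the designated pair. Concretely, the paper adds one more dummy on each side (so $n$ dummy men and $n$ dummy women instead of $n-1$) and swaps $w^*$ and $\{w_j\}$ in each dummy man $m_i$'s preference list. With $n$ dummies per side, if no high pair exists then every $w_v$ is absorbed by a dummy man and every $m_u$ by a dummy woman, and $(m^*,w^*)$ is forced in the woman-optimal and hence every stable matching; if a high pair exists, at least one dummy man overflows onto $w^*$ (he now prefers her to the dummy women), displacing $m^*$ in the man-optimal run, so $(m^*,w^*)$ lies in no stable matching. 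The boolean vectors are obtained from those in Theorem~\ref{thm:stablepair} by changing a single block of $w_j$'s vector.

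Your buffer-pair construction looks workable, but the verification is a bit less clean than you anticipate: once $m_0$ sits below $m^*$ on $w^*$'s list, the one-line argument ``$w^*$ receives her least preferred choice in the woman-optimal matching, hence $(m^*,w^*)$ is in every stable matching'' no longer applies directly, and you would need an explicit blocking-pair case analysis to rule out $(m_0,w^*)$ in the high-pair case. The paper's modification sidesteps this by keeping $m^*$ at the very bottom of $w^*$'s list, so the same least-preferred-in-woman-optimal argument works verbatim in both directions.
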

\begin{proof}
  This reduction uses the same setup as the one in Theorem \ref{thm:stablepair} except that we now have $n$ dummy men and women instead of $n-1$ and we slightly change the preferences as follows:

\begin{align*}
  m_u: & \{w_v:\langle u, v\rangle \geq l\} \succ \{w_j\}_{j=1}^{n} \succ w^* \succ \{w_v:\langle u, v\rangle < l\} & \forall u\in U\\
  m_i: & \{w_v\} \succ \mathbf{w^* \succ \{w_j\}_{j=1}^{n}} & \forall i\in\{1\dots n\}\\
  m^*: & w^* \succ \{w_v\} \succ \{w_j\}_{j=1}^{n}\\
  w_v: & \{m_u:\langle u, v\rangle \geq l\} \succ \{m_i\}_{i=1}^{n} \succ m^* \succ \{m_u:\langle u, v\rangle < l\} & \forall v\in V\\
  w_j: & \{m_u\} \succ \{m_i\}_{i=1}^{n} \succ m^* & \forall j\in\{1\dots n\}\\
  w^*: & \{m_i\}_{i=1}^{n} \succ \{m_u\} \succ m^*
\end{align*}

  First suppose for some $\hat{u}\in U$ and $\hat{v}\in V$ we have $\langle \hat{u},\hat{v} \rangle \geq l$ and let this be the pair with largest inner product. Consider the deferred acceptance algorithm for finding the man-optimal stable matching. First, some of the men corresponding to $U$ will propose to the women corresponding to $V$ and at least $m_{\hat{u}}$ will be accepted by $w_{\hat{v}}$. The remaining men corresponding to $U$ will be accepted by dummy women. The dummy men will propose to the women corresponding to $V$ but not all can be accepted. These rejected dummy men will propose to $w^*$ who will accept one. Then when $m^*$ proposes to $w^*$ she will reject him, as will the women corresponding to $V$, so he will be matched with a dummy woman. Since $m^*$ and $w^*$ are not matched in the man optimal stable matching, $(m^*,w^*)$ is not a pair in any stable matching.

Now suppose $\langle u,v \rangle < l$ for every $u\in U$,$v\in V$ and consider the deferred acceptance algorithm for finding the woman-optimal stable matching. First, the dummy women will propose to the men corresponding to $U$ and will be accepted. Then every woman $w_v$ will propose to the dummy men and be accepted. Since every man besides $m^*$ is matched with a woman he prefers to $w^*$, when she proposes to them, she will be rejected, so she will pair with $m^*$. Since $w^*$ receives her least preferred choice in the woman optimal stable matching, $(m^*,w^*)$ is a pair in every stable matching. Figure \ref{fig:costablepair} demonstrates each of these cases.

We can amend the vectors from Theorem \ref{thm:stablepair} as follows so that they realize the changed preferences with the attribute model.

\[
\begin{array}{r l c c c c c c c c c c}
  m_u: & u^7   & \circ & 1^{7(l-1)} & \circ & 0^{7(l-1)} & \circ & 1^6            & \circ & 0^6            & \circ & 0^6\\
  m_i: & 0^{7d}& \circ & 1^{7(l-1)} & \circ & 1^{7(l-1)} & \circ & 0^6            & \circ & 1^6            & \circ & 0^6\\
  m^*: & 0^{7d}& \circ & 1^{7(l-1)} & \circ & 1^{7(l-1)} & \circ & 0^6            & \circ & 0^6            & \circ & 1^6\\
\hline
  w_v: & v^7   & \circ & 0^{7(l-1)} & \circ & 1^{7(l-1)} & \circ & 0^6            & \circ & 1^6            & \circ & (1\circ 0^5)\\
  w_j: & 0^{7d}& \circ & 1^{7(l-1)} & \circ & 0^{7(l-1)} & \circ & 1^6            & \circ & \mathbf{(1^3\circ 0^3)}   & \circ & 0^6\\
  w^*: & 0^{7d}& \circ & 1^{7(l-1)} & \circ & 0^{7(l-1)} & \circ & (1^3\circ 0^3) & \circ & (1^4\circ 0^2) & \circ & (1^2\circ 0^4)\\
\end{array}
\]

%\begin{align*}
%  m_u: & u^7\circ 1^{7l-1}\circ 0^{7l-1}\circ 1^6\circ 0^6\circ 0^6\\
%  m_i: & 0^{7d}\circ 1^{7l-1}\circ 1^{7l-1}\circ 0^6\circ 1^6\circ 0^6\\
%  m^*: & 0^{7d}\circ 1^{7l-1}\circ 1^{7l-1}\circ 0^6\circ 0^6\circ 1^6\\
%  w_v: & v^7\circ 0^{7l-1}\circ 1^{7l-1}\circ 0^6\circ 1^6\circ 1\cdot 0^5\\
%  w_j: & 0^{7d}\circ 1^{7l-1}\circ 0^{7l-1}\circ 1^6\circ \mathbf{1^3\circ 0^3}\circ 0^6\\
%  w^*: & 0^{7d}\circ 1^{7l-1}\circ 0^{7l-1}\circ 1^3\circ 0^3\circ 1^4\circ 0^2\circ 1^2\circ 0^4\\
%\end{align*}

\end{proof}

\begin{figure}[H]
\centering
\caption{A representation of the reduction from maximum inner product
  to checking a stable pair such that a true maximum inner product
  instance maps to a false stable pair instance}
\label{fig:costablepair}
\subfloat[A representative stable matching when there is a pair $(u,v)$ with $\langle u,v \rangle \geq l$]{
\framebox[.4\textwidth]{
\centering
\begin{tikzpicture}[node distance=0.5cm]
    % We need to set at bounding box first. Otherwise the diagram
    % will change position for each frame.

    %men sets
    \path node[fill=red!20,ellipse,name=mu,minimum width=0.75cm,minimum height=4cm] at (0,0) {}
          node [fill=red!20,ellipse,name=mi,minimum width=0.75cm,minimum height=4cm,below = 0.2cm of mu] {}
          node [fill,circle,name=ms,radius=2pt,inner sep=1pt,below = 0.2cm of mi] {};
    %men labels
    \path node[list,left of=mu,xshift = -0.5cm] (lmu) {$\{m_u\}$}
          node[list,left of=mi,xshift = -0.5cm] (lmi) {$\{m_i\}$}
          node[list,left of=ms,xshift = -0.5cm] (lms) {$m^*$};
    %women sets
    \path node[fill=blue!20,ellipse,name=wv,right of=mu,xshift = 2cm,minimum width=0.75cm,minimum height=4cm] {}
          node [fill=blue!20,ellipse,name=wj,minimum width=0.75cm,minimum height=4cm,below = 0.2cm of wv] {}
          node [fill,circle,name=ws,radius=2pt,inner sep=1pt,below = 0.2cm of wj] {};
    %women lables
    \path node[list,right of=wv,xshift = 0.5cm] (lwv) {$\{w_v\}$}
          node[list,right of=wj,xshift = 0.5cm] (lwj) {$\{w_j\}$}
          node[list,right of=ws,xshift = 0.5cm] (lws) {$w^*$};
    %edges
    \draw[thick] ([yshift = 1.5cm]mu.center) edge ([yshift = 1.5cm]wv.center);
    \draw[thick] ([yshift = 0.5cm]mu.center) edge ([yshift = 1.5cm]wj.center);
    \draw[thick] ([yshift = -0.5cm]mu.center) edge ([yshift = 0.5cm]wj.center);
    \draw[thick] ([yshift = -1.5cm]mu.center) edge ([yshift = -0.5cm]wj.center);
    \draw[thick] ([yshift = 1.5cm]mi.center) edge ([yshift = 0.5cm]wv.center);
    \draw[thick] ([yshift = 0.5cm]mi.center) edge ([yshift = -0.5cm]wv.center);
    \draw[thick] ([yshift = -0.5cm]mi.center) edge ([yshift = -1.5cm]wv.center);
    \draw[thick] ([yshift = -1.5cm]mi.center) edge (ws.center);
    \draw[thick] (ms.center) edge ([yshift = -1.5cm]wj.center);
\end{tikzpicture}
}}
~
\subfloat[A representative stable matching when $\langle u,v \rangle < l$ for every pair $(u,v)$]{
\framebox[.4\textwidth]{
\centering
\begin{tikzpicture}[node distance=0.5cm]
    % We need to set at bounding box first. Otherwise the diagram
    % will change position for each frame.

    %men sets
    \path node[fill=red!20,ellipse,name=mu,minimum width=0.75cm,minimum height=4cm] at (0,0) {}
          node [fill=red!20,ellipse,name=mi,minimum width=0.75cm,minimum height=4cm,below = 0.2cm of mu] {}
          node [fill,circle,name=ms,radius=2pt,inner sep=1pt,below = 0.2cm of mi] {};
    %men labels
    \path node[list,left of=mu,xshift = -0.5cm] (lmu) {$\{m_u\}$}
          node[list,left of=mi,xshift = -0.5cm] (lmi) {$\{m_i\}$}
          node[list,left of=ms,xshift = -0.5cm] (lms) {$m^*$};
    %women sets
    \path node[fill=blue!20,ellipse,name=wv,right of=mu,xshift = 2cm,minimum width=0.75cm,minimum height=4cm] {}
          node [fill=blue!20,ellipse,name=wj,minimum width=0.75cm,minimum height=4cm,below = 0.2cm of wv] {}
          node [fill,circle,name=ws,radius=2pt,inner sep=1pt,below = 0.2cm of wj] {};
    %women lables
    \path node[list,right of=wv,xshift = 0.5cm] (lwv) {$\{w_v\}$}
          node[list,right of=wj,xshift = 0.5cm] (lwj) {$\{w_j\}$}
          node[list,right of=ws,xshift = 0.5cm] (lws) {$w^*$};
    %edges
    \draw[thick] ([yshift = 1.5cm]mu.center) edge ([yshift = 1.5cm]wj.center);
    \draw[thick] ([yshift = 0.5cm]mu.center) edge ([yshift = 0.5cm]wj.center);
    \draw[thick] ([yshift = -0.5cm]mu.center) edge ([yshift = -0.5cm]wj.center);
    \draw[thick] ([yshift = -1.5cm]mu.center) edge ([yshift = -1.5cm]wj.center);
    \draw[thick] ([yshift = 1.5cm]mi.center) edge ([yshift = 1.5cm]wv.center);
    \draw[thick] ([yshift = 0.5cm]mi.center) edge ([yshift = 0.5cm]wv.center);
    \draw[thick] ([yshift = -0.5cm]mi.center) edge ([yshift = -0.5cm]wv.center);
    \draw[thick] ([yshift = -1.5cm]mi.center) edge ([yshift = -1.5cm]wv.center);
    \draw[thick] (ms.center) edge (ws.center);
\end{tikzpicture}
}}
\end{figure}
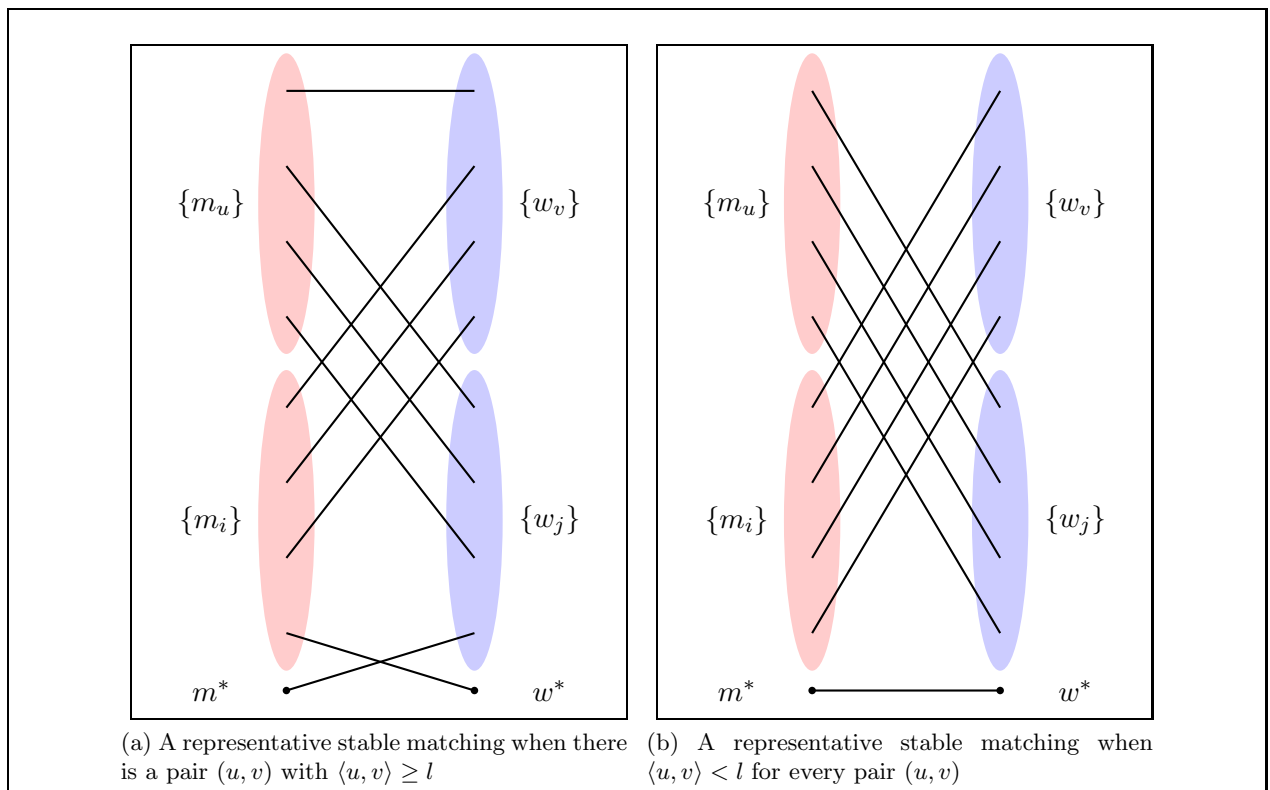

We would like to point out that the results on the hardness for
(co-)nondeterministic algorithms do not apply to Merlin-Arthur ($\MA$)
algorithms, i.e. algorithms with access to both nondeterministic bits
and randomness. Williams \cite{Williams2016} gives fast $\MA$ algorithms for a
number of $\SETH$-hard problems, and the same techniques also yield a
$O(dn)$ time $\MA$ algorithm for the verification of a stable matching
in the boolean attribute model with $d$ attributes. We can obtain
$\MA$ algorithms with time $O(dn)$ for finding stable matchings and
certifying that a pair is in at least one stable matching by first
nondeterministically guessing a stable matching.

\section{Other Succinct Preference Models}
\label{sec:otherpref}
In this section, we provide subquadratic algorithms for other succinct preference models, single-peaked and geometric, which are motivated by economics.

\subsection{One Dimensional Single-Peaked Preferences}
Formally, we say the men's preferences over the women in a matching market are \emph{single-peaked} if the women can be ordered as points along a line ($p(w_1) < p(w_2) < \dots < p(w_n)$) and for each man $m$ there is a point $q(m)$ and a binary preference relation $\succ_m$ such that if $p(w_i) 
\leq q(m)$ then $p(w_i) \succ_m p(w_j)$ for $j < i$ and if $p(w_i) \geq q(m)$ then $p(w_i) \succ_m p(w_j)$ for $j > i$. Essentially, each man prefers the women that are ``closest'' to his ideal point $q(m)$. One example of a preference relation for $m$ would be the distance from $q(m)$. If the women's preferences are also single-peaked then we say the matching market has single peaked preferences. Since these preferences only consist of the $p$ and $q$ values and the preference relations for the participants, they can be represented succinctly as long as the relations require subquadratic space.

\subsubsection{Verifying a Stable Matching for Single-Peaked Preferences}

Here we demonstrate a subquadratic algorithm for verifying if a given matching is stable when the preferences of the matching market are single-peaked. We assume that the preference relations can be computed in constant time.

\begin{theorem}
  There is an algorithm to verify a stable matching in the single-peaked
  preference model in $O(n\log n)$ time.
\end{theorem}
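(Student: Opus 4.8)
The plan is to exploit the combinatorial structure of single-peaked preferences: for a participant $p$ with ideal point $q(p)$ and a fixed alternative $a$, the set of alternatives that $p$ strictly prefers to $a$ forms a contiguous interval in the single-peaked order. I would first establish this as a lemma. Fix a man $m$, let $\mu(m) = w_k$ be his partner at position $k$ in the order $p(w_1) < \cdots < p(w_n)$, and suppose $p(w_k) \le q(m)$ (the case $p(w_k) \ge q(m)$ is symmetric). For ``left'' women (those with $p \le q(m)$) the single-peaked condition forces larger index to beat smaller index, so every $w_i$ with $i > k$ and $p(w_i) \le q(m)$ beats $w_k$, while every $w_i$ with $i < k$ loses to $w_k$. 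Among ``right'' women the condition forces smaller index to beat larger, and since $\succ_m$ is transitive, the right women beating $w_k$ form a prefix of the right part. Hence the predicate ``$w_i \succ_m w_k$'' restricted to $i \ge k+1$ is monotone --- true on an initial stretch and false afterward --- so the preferred set is an interval $I_m$ in the women's order (possibly empty). The same argument, applied to the order on the men underlying the women's preferences, gives each woman $w$ an interval $J_w$ of preferred men.

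Computing these intervals is cheap. After sorting the women by $p$-value and the men by the order used for the women's preferences, and recording each participant's position, $I_m$ is obtained by locating $w_k$, comparing $p(w_k)$ with $q(m)$ to decide the search direction, and then binary-searching for the far endpoint using $O(1)$-time evaluations of $\succ_m$; symmetrically for each $J_w$. This costs $O(n \log n)$ in total, and one must only be slightly careful with the degenerate cases where $\mu(m)$ sits at the peak or at an end of the order, where $I_m$ is empty.

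Next I would reduce stability verification to a geometric incidence problem. Identify each man $m$ with the position $x_m$ he occupies in the men's order and each woman $w$ with the position $y_w$ she occupies in the women's order. Then $(m,w)$ is a blocking pair if and only if $y_w \in I_m$ and $x_m \in J_w$; this automatically excludes pairs already in $\mu$, since $\mu(m) \notin I_m$. Viewing man $m$ as the vertical segment $\{x_m\} \times I_m$ and woman $w$ as the horizontal segment $J_w \times \{y_w\}$ in the plane with axes (men-position, women-position), a blocking pair is exactly an intersection between a man-segment and a woman-segment. I would detect whether any such intersection exists with a standard left-to-right sweep over the coordinates $1, \dots, n$: maintain a balanced search tree (or Fenwick tree) over women-positions; when the sweep coordinate reaches the left endpoint of $J_w$ insert $y_w$, when it passes the right endpoint delete $y_w$, and when it reaches $x_m$ query whether any currently active $y_w$ lies in $I_m$. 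Processing inserts before queries before deletions at each coordinate makes $w$ active exactly while $x_m \in J_w$, so the query returns a positive count precisely when a blocking pair involving $m$ exists. There are $O(n)$ events, each handled in $O(\log n)$ time, so the sweep --- and hence the whole algorithm --- runs in $O(n \log n)$.

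I expect the main obstacle to be the structural lemma: verifying carefully that the ``left block'' of preferred alternatives always stretches from the neighbor of $\mu(m)$ up to the peak boundary, so that it merges with the (possibly empty) prefix of preferred alternatives on the far side of the peak into a single interval, together with handling the boundary cases where the interval degenerates or is empty. Once the interval characterization of preferred sets is pinned down, both the reduction to orthogonal segment-intersection detection and its $O(n \log n)$ sweep are routine.
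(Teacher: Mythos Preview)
Your proposal is correct and follows essentially the same approach as the paper: both compute, via binary search, the interval of preferred alternatives for each participant, then detect a blocking pair by sweeping one axis while maintaining a balanced search tree of ``active'' participants on the other, querying for membership in the current interval. The only cosmetic differences are that you sweep along the men's order and the paper along the women's (dual roles), and that you spell out the interval lemma and the orthogonal-segment-intersection viewpoint more explicitly than the paper does.
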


\begin{proof}
  Let $p(m_i)$ be the point associated with man $m_i$, $q(m_i)$ be $m_i$'s preference point, and $\succ_{m_i}$ be $m_i$'s preference relation. The women's points are denoted analogously. We assume that $p(m_i) < p(m_j)$ if and only if $i<j$ and the same for the women. Let $\mu$ be the given matching we are to check for stability.

First, for each man $m$, we compute the intervals along the line of women which includes all women $m$ strictly prefers to $\mu(m)$. If this interval is empty, $m$ is with his most preferred woman and cannot be involved in any blocking pairs so we can ignore him. For all nonempty intervals each endpoint is $p(w)$ for some woman $w$. We also compute these intervals for the women. Note that for any man $m$ and woman $w$, $(m,w)$ is a blocking pair for $\mu$ if and only if $m$ is in $w$'s interval and $w$ is in $m$'s interval.

We will process each of the women in order from $w_1$ to $w_n$ maintaining a balanced binary search tree of the men who prefer that woman to their partners. This will allow us to easily check if she prefers any of them by seeing if any elements in the tree are between the endpoints of her interval. Initially this tree is empty. When processing a woman $w$, we first add any man $m$ whose interval begins with $w$ to the search tree. Then we check to see if $w$ prefers any men in the tree. If so, we know the matching is not stable. Otherwise, we remove any man $m$ from the tree whose interval ends with $w$ and proceed to the next woman. Algorithm \ref{alg:singlepeakverification} provides pseudocode for this algorithm.

Computing the intervals requires $O(n\log n)$. Since we only insert each man into the tree at most once, maintaining the tree requires $O(n\log n)$. The queries also require $O(\log n)$ for each woman so the total time is $O(n\log n)$.
\end{proof}

\begin{algorithm}[H]
\For{each woman $w$}{
   Create two empty lists $w.begin$ and $w.end$.\\
   Use binary search to find the leftmost man $m$ and rightmost man $m'$ that $w$ prefers to $\mu(w)$ if any. (Otherwise remove $w$.)\\
   Let $w.s = p(m)$ and $w.t = p(m')$.
}
\For{each man $m$}{
	Use binary search to find the leftmost woman $w$ and rightmost woman $w'$ that $m$ prefers to $\mu(m)$ if any. (Otherwise ignore $m$.)\\
   Add $m$ to $w.begin$ and $w'.end$.
}
Initialize an empty balanced binary search tree $T$.\\
\For{$i=1$ to $n$}{
   \For{$m\in w_i.begin$}{
      $T.\insrt(p(m))$
   }
   \If{there are any points $p(m)$ in $T$ between $w_i.s$ and $w_i.t$}{
      \Return{$(m,w_i)$ is a blocking pair.}
   }
   \For{$m\in w_i.end$}{
      $T.\delete(p(m))$
	}
}
\Return{$\mu$ is stable.}
\caption{Single-Peaked Stable Matching Verification}
\label{alg:singlepeakverification}
\end{algorithm}

\subsubsection{Remarks on Finding a Stable Matching for Single-Peaked Preferences}

The algorithm in \cite{BartholdiTrick1986} relies on the observation that there will always be a pair or participants who are each other's first choice with narcissistic single-peaked preferences. Thus a greedy approach where one such pair is selected and then removed works well. However, this is not the case when we remove the narcissistic assumption. In fact, as with the two-list case, Table \ref{tab:singlepeakpreferences} presents an example where no participant is matched with their top choice in the unique stable matching. Note that the preferences for the men and women are symmetric. The reader can verify that these preferences can be realized in the single-peaked preference model using the orderings $p(m_1) < p(m_2) < p(m_3) < p(m_4)$ and $p(w_1) < p(w_2) < p(w_3) < p(w_4)$ and that the unique stable matching is $\{(m_1,w_4),(m_2,w_2),(m_3,w_3),(m_4,w_1)\}$ where no participant receives their first choice.

\begin{table}[H]
  \caption{Single-peaked preferences where no participant receives their top choice in the stable matching}
  \label{tab:singlepeakpreferences}
  \begin{center}
    \begin{tabular}{c|c}
      Man & Preference List\\
      \hline
      $m_1$ & $w_3\succ w_2\succ w_4\succ w_1$\\
      $m_2$ & $w_3\succ w_2\succ w_4\succ w_1$\\
      $m_3$ & $w_4\succ w_3\succ w_2\succ w_1$\\
      $m_4$ & $w_2\succ w_1\succ w_3\succ w_4$\\
    \end{tabular}
    \quad
    \begin{tabular}{c|c}
      Woman & Preference List\\
      \hline
      $w_1$ & $m_3\succ m_2\succ m_4\succ m_1$\\
      $w_2$ & $m_3\succ m_2\succ m_4\succ m_1$\\
      $w_3$ & $m_4\succ m_3\succ m_2\succ m_1$\\
      $w_4$ & $m_2\succ m_1\succ m_3\succ m_4$\\
    \end{tabular}
  \end{center}
\end{table}

Also no greedy algorithm following the model inspired by
\cite{DavisImpagliazzo2009} will succeed for single-peak preferences
because the preferences in Table \ref{tab:greedylist} can be realized
in the single-peaked preference model using the orderings
$p(m_1) < p(m_2) < p(m_3)$ and $p(w_1) < p(w_2) < p(w_3)$.

\subsection{Geometric Preferences}
We say the men's preferences over the women in a matching market are
\emph{geometric} in $d$ dimensions if each women $w$ is defined by a
\emph{location} $p(w)$ and for each man $m$ there is an \emph{ideal}
$q(m)$ such that $m$ prefers woman $w_1$ to $w_2$ if and only if
$\|p(m) - q(w_1)\|_2^2 < \|p(m) - q(w_2)\|_2^2$, i.e. $p(w_1)$ has smaller
euclidean distance from the man's ideal than $p(w_2)$. If the women's
preferences are also geometric we call the matching market
geometric. We further call the preferences \emph{narcissistic} if
$p(x) = q(x)$ for every participant $x$. Our results for the attribute
model extend to geometric preferences. 

Note that one-dimensional geometric preferences are a special case of
single-peaked preferences. As such, geometric preferences might be used to model preferences over
political candidates who are given a score on several (linear) policy
areas, e.g. protectionist vs. free trade and hawkish vs. dovish foreign
policy.

Arkin et al.~\cite{ArkinBaeEfrat2009} also consider geometric
preferences, but restrict themselves to the narcissistic case. Our
algorithms do not require the preferences to be narcissistic, hence
our model is more general. On the other hand, our lower bounds for
large dimensions also apply to the narcissistic special case. While
Arkin et al.~take special care of different notions of stability in
the presence of ties, we concentrate on weakly stable matchings.  Although we restrict ourselves to the stable matching problem for the
sake of presentation, all lower bounds and verification algorithms
naturally extend to the stable roommate problem. Since all proofs in this section are closely related those for the attribute
model, we restrict ourselves to proof sketches highlighting the main
differences.

Theorem \ref{thm:smallstablematching} extends immediately to the
geometric case without any changes in the proof.

\begin{corollary}[Geometric version of Theorem \ref{thm:smallstablematching}]
\label{cor:smallstablematchinggeom}
  There is an algorithm to find a stable matching in the
  $d$-dimensional geometric model with at most a constant $C$ distinct 
  values in time $O(C^{2d}n(d + \log n))$.
\end{corollary}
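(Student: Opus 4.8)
The plan is to observe that Corollary~\ref{cor:smallstablematchinggeom} requires essentially no new work beyond reinterpreting the objects in the proof of Theorem~\ref{thm:smallstablematching}. In the $d$-attribute model, the proof of Theorem~\ref{thm:smallstablematching} relied on only two structural facts: (i) a man is indifferent among all women that share the same attribute vector, and (ii) two women with the same weight vector rank all men identically. The algorithm then grouped the women into $O(C^{2d})$ types (attribute vector $\times$ weight vector), treated each type as a single capacitated agent, and ran many-to-one deferred acceptance. Nothing in that argument used the bilinear form $\val_m(w)=\langle\alpha(m),A(w)\rangle$ beyond these two facts.

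First I would state precisely what plays the role of ``type'' in the geometric model. A participant $x$ is now described by a location $p(x)\in\reals^d$ and an ideal $q(x)\in\reals^d$; the preference of $m$ over women is determined by $\|q(m)-p(w)\|_2^2$, so $m$ is indifferent between any two women with the same location $p(w)$, and two women with the same ideal $q(w)$ induce the same order on the men (both facts are immediate from the definition of geometric preferences). Thus if there are at most $C$ distinct coordinate values, each of $p(w)$ and $q(w)$ takes one of $C^d$ values, giving at most $C'=O(C^{2d})$ types of women, exactly as before. I would then group the women into sets $S_i$ by type, associate a min-heap with capacity $|S_i|$ to each, and run Algorithm~\ref{alg:smallstablematching} verbatim, with $\val_{S_i}(m)$ replaced by the (common) geometric preference of the women in $S_i$ over $m$ — concretely, sorting by $-\|q(w)-p(m)\|_2^2$ for the representative $w\in S_i$. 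Correctness follows, as in Theorem~\ref{thm:smallstablematching}, from the correctness of many-to-one deferred acceptance once we justify collapsing each type to a single capacitated agent; stability of the resulting matching is unaffected by the change of preference representation because no blocking pair can exist between a man and a type he did not end up (partially) matched to.

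For the running time I would repeat the accounting line by line: forming the groups and placing each woman costs $O(C^{2d}+dn)$; building each man's sorted list of the $O(C^{2d})$ types costs $O(dC^{2d}n)$ since each distance evaluation is $O(d)$; and the while loop performs at most $C^{2d}$ proposals per man, each costing $O(d+\log n)$ for the distance evaluation plus heap update. This sums to $O(C^{2d}n(d+\log n))$, matching the claimed bound. I do not anticipate a genuine obstacle here — the only thing to be careful about is confirming that squared Euclidean distance (rather than some more exotic preference relation) is still evaluable in $O(d)$ time and that ties among equal-distance women within a type are harmless, which is exactly the weakly-stable convention already adopted in the preliminaries. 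Hence the corollary follows by the same argument as Theorem~\ref{thm:smallstablematching} with ``attribute vector'' and ``weight vector'' replaced by ``location'' and ``ideal''.
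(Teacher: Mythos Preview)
Your proposal is correct and matches the paper's own treatment: the paper simply states that Theorem~\ref{thm:smallstablematching} ``extends immediately to the geometric case without any changes in the proof,'' and your write-up spells out exactly why --- locations play the role of attribute vectors and ideals play the role of weight vectors, yielding the same $O(C^{2d})$ type bound and the same deferred-acceptance argument.
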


For the verification of a stable matching with real-valued vectors we
use a standard lifting argument.

\begin{corollary}[Geometric version of Theorem \ref{thm:verifyreal}]
\label{cor:verifyrealgeom}
  There is an algorithm to verify a stable matching in the
  $d$-dimensional geometric model with real-valued locations and ideals in time
  $\tilde{O}(n^{2-1/2(d+1)})$
\end{corollary}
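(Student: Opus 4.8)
The plan is to reduce the $d$-dimensional geometric model to the $(d+1)$-attribute model in a way that preserves every participant's preference order, and then invoke Theorem \ref{thm:verifyreal} with $d+1$ in place of $d$. The reduction is the standard lifting that linearizes squared Euclidean distance.

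First I would expand $\|p(m) - q(w)\|_2^2 = \|p(m)\|_2^2 - 2\langle p(m), q(w)\rangle + \|q(w)\|_2^2$ and observe that, from man $m$'s point of view, the term $\|p(m)\|_2^2$ is a constant shared by all women. Hence $m$ prefers $w_1$ to $w_2$ if and only if $2\langle p(m), q(w_1)\rangle - \|q(w_1)\|_2^2 > 2\langle p(m), q(w_2)\rangle - \|q(w_2)\|_2^2$. This suggests assigning to each participant $x$ the attribute vector $A(x) = (q_1(x), \ldots, q_d(x), \|q(x)\|_2^2) \in \reals^{d+1}$ and the weight vector $\alpha(x) = (2 p_1(x), \ldots, 2 p_d(x), -1) \in \reals^{d+1}$. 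Then $\val_m(w) = \langle \alpha(m), A(w)\rangle = 2\langle p(m), q(w)\rangle - \|q(w)\|_2^2$, so $m$'s induced ranking of the women under the attribute model is exactly his geometric ranking; the symmetric computation shows the same for every woman. Consequently a perfect matching is stable in the geometric instance precisely when it is stable in the constructed $(d+1)$-attribute instance. Note that we do use a negative coordinate in the weights (the $-1$) and potentially negative attributes, which the attribute model explicitly permits.

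With the reduction in hand, I would simply run the verification algorithm of Theorem \ref{thm:verifyreal} on the lifted $(d+1)$-attribute instance, which costs $\tilde{O}(n^{2 - 1/2(d+1)})$; building the lifted instance takes only $O(dn)$ time and is dominated by this. I do not expect any real obstacle: the only things to check are that the offset $\|p(m)\|_2^2$ genuinely cancels from every pairwise comparison made by $m$ (and symmetrically for each woman), and that the extra coordinate $\|q(x)\|_2^2$ together with the constant $-1$ entry are legal in the real-valued attribute model — both immediate. The same lifting, applied to Lemma \ref{lem:onesided}, likewise transfers the one-sided finding algorithm to the geometric setting, while Corollary \ref{cor:smallstablematchinggeom} needs no lifting at all since Theorem \ref{thm:smallstablematching} only uses that each side has finitely many distinct preference lists.
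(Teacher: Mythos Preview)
Your proposal is correct and follows essentially the same approach as the paper: both use the standard lifting that turns squared-Euclidean comparisons into inner-product comparisons in one extra dimension, then invoke Theorem~\ref{thm:verifyreal} with $d+1$ in place of $d$. The only cosmetic difference is a harmless scaling---the paper appends $-1/2$ to the weight vector and $\|\cdot\|_2^2$ to the attribute vector, whereas you append $-1$ and scale the first $d$ weight coordinates by $2$; the induced orders are identical.
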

\begin{proof}
  Let $q \in \reals^d$ be an ideal and let $a, b \in \reals^d$ be two locations. Define
  $q', a', b' \in \reals^{d+1}$ as $q' = (q_1, \ldots, q_d, -1/2)$,
  $a' = (a_1, \ldots, a_d, \sum_{i=1}^d a_i^2)$ and $b' = (b_1, \ldots,
  b_d, \sum_{i=1}^d b_i^2)$.

  We have $\langle a', q \rangle = 1/2 \sum_{i=1}^d q_i - 1/2
  \|q-a\|_2^2$. Hence we get $\|q-a\|_2^2 < \|q-b\|_2^2$ if and only if
  $\langle q', a'\rangle > \langle q', b'\rangle$, so we can reduce
  the stable matching problem in the $d$-dimensional geometric model
  to the $d+1$-attribute model.
\end{proof}

For the boolean case, we can adjust the proof of
Theorem \ref{thm:verifyboolean} by using a threshold of parities
instead of a threshold of conjunctions. The degree of the resulting
polynomial remains the same.

\begin{corollary}[Geometric version of Theorem \ref{thm:verifyboolean}]
  \label{thm:verifybooleangeom}
  In the geometric model with $n$ men and women, with locations and
  ideals in $\{0,1\}^d$ with $d = c \log n$, there is a randomized
  algorithm to decide if a given matching is stable in time
  $\tilde{O}(n^{2-1/O(c \log^2(c))})$ with error probability at most
  $1/3$.
\end{corollary}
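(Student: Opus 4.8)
The plan is to follow the proof of Theorem~\ref{thm:verifyboolean} almost verbatim, replacing the ``threshold of conjunctions'' used in Lemma~\ref{lem:probpol} by a ``threshold of parities''. The single new ingredient is that for boolean vectors the squared Euclidean distance is exactly the Hamming distance: for $x,y\in\{0,1\}^d$ we have $(x_i-y_i)^2 = x_i\oplus y_i$, so $\|x-y\|_2^2 = \sum_{i=1}^d (x_i\oplus y_i)$. Consequently a pair $(m_i,w_j)$ is a blocking pair with respect to $\mu$ if and only if $\sum_i\big(p_i(m_i)\oplus q_i(w_j)\big) < \sum_i\big(p_i(m_i)\oplus q_i(\mu(m_i))\big)$ and, symmetrically, $\sum_i\big(p_i(w_j)\oplus q_i(m_i)\big) < \sum_i\big(p_i(w_j)\oplus q_i(\mu(w_j))\big)$.

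Next I would replace the predicate $F$ of equation~\eqref{eq:2} by its parity analogue. Since $\sum_i(x_i\oplus y_i) < \sum_i(a_i\oplus b_i)$ holds exactly when at most $d-1$ of the $2d$ bits $\{x_i\oplus y_i\}\cup\{\neg(a_i\oplus b_i)\}$ are $1$, i.e.\ when at least $d+1$ of their complements are $1$, we may set
\begin{equation*}
  G(x,y,a,b) := \big(\|x-y\|_2^2 < \|a-b\|_2^2\big) = \maj_{d+1}\big(\neg(x_1\oplus y_1),\ldots,\neg(x_d\oplus y_d),\, a_1\oplus b_1,\ldots,a_d\oplus b_d\big).
\end{equation*}
Plugging $G$ into the disjunction over the $s^2$ pairs exactly as in~\eqref{eq:1} --- with $x=p(m_i)$, $y=q(w_j)$, $a=p(m_i)$, $b=q(\mu(m_i))$ in the first factor and the symmetric substitution in the second --- yields a probabilistic polynomial that outputs $1$ iff there is a blocking pair among the $s^2$ pairs, using Lemma~\ref{lem:maj} with $\varepsilon=1/s^3$ and Lemma~\ref{lem:razsmol} with $\varepsilon=1/4$, just as in Lemma~\ref{lem:probpol}.

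The degree and monomial-count analysis of Lemma~\ref{lem:probpol} then carries over without change, because each argument of $\maj_{d+1}$ remains a bilinear degree-$2$ monomial in the original boolean variables: $\neg(x_i\oplus y_i) = 1 - x_i - y_i + 2x_iy_i$ and $a_i\oplus b_i = a_i + b_i - 2a_ib_i$, each with one factor from the man's block of variables and one from the woman's block (the $a_i\oplus b_i$ terms with $a=p(m_i)$, $b=q(\mu(m_i))$ lie entirely in the man's block, exactly as the corresponding conjunctive terms do in Theorem~\ref{thm:verifyboolean}, so the polynomial still splits into a man-side and a woman-side). Hence the polynomial still has degree $O(\sqrt{d\log s})$ and, after multilinearizing, at most $O\big((s^2\binom{4d}{a\sqrt{d\log s}})^2\big)\le n^{0.17}$ monomials for $s = n^{1/(uc\log^2 c)}$ and a sufficiently large constant $u$, by the same Stirling estimate used there.

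Finally I would invoke the rest of the proof of Theorem~\ref{thm:verifyboolean} unchanged: partition the men and the women into $\lceil n/s\rceil$ groups of size at most $s$, form the corresponding input vectors (now listing the relevant locations and ideals in place of attributes and weights), evaluate the polynomial on all pairs of groups via Lemma~\ref{lem:eval} in time $\tilde{O}\big((n/s)^2 + sd\,(n/s)^{1.17}\big) = \tilde{O}\big((n/s)^2\big) = \tilde{O}(n^{2-1/O(c\log^2 c)})$, and repeat $O(\log n)$ times, taking the majority output per pair so that a union bound gives total error at most $1/3$. The only point that genuinely needs verification is that swapping $\oplus$ for $\wedge$ in the predicate does not increase the degree, and hence keeps the $n^{0.17}$ monomial bound; since $\oplus$ on two bits --- like $\wedge$ --- is computed by a bilinear degree-$2$ polynomial, this is immediate, so there is no real obstacle beyond carefully re-checking that bookkeeping.
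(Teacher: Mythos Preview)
Your proposal is correct and follows exactly the approach the paper indicates: the paper's entire proof of this corollary is the one-line remark that one should ``adjust the proof of Theorem~\ref{thm:verifyboolean} by using a threshold of parities instead of a threshold of conjunctions,'' noting that ``the degree of the resulting polynomial remains the same.'' Your elaboration of this---rewriting the blocking-pair predicate via $\maj_{d+1}$ on $2d$ parity bits, observing that each $\oplus$ is a degree-$2$ bilinear form just like $\wedge$, and then reusing the monomial-count estimate and the rectangular matrix-multiplication evaluation verbatim---is precisely what the paper intends.
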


For lower bounds we reduce from the minimum Hamming distance problem
which is $\SETH$-hard with the same parameters as the maximum inner
product problem \cite{AlmanWilliams2015}. The Hamming distance of two
boolean vectors is exactly their squared euclidean distance, hence a
matching market where the preferences are defined by Hamming distances
is geometric.

\begin{definition}
  For any $d$ and input $l$, the minimum Hamming distance problem
  is to decide if two input sets $U, V
  \subseteq \{0,1\}^d$ with $|U| = |V| = n$ have a pair $u \in U$, $v
  \in V$ such that $\|u - v\|_2^2 < l$.
\end{definition}

\begin{lemma}[\cite{AlmanWilliams2015}]
  Assuming $\SETH$, for any $\varepsilon > 0$, there is a $c$ such
  that solving the minimum Hamming distance problem on $d = c \log n$
  dimensions requires time $\Omega(n^{2-\varepsilon})$.
\end{lemma}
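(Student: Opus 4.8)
The plan is to reduce the (already $\SETH$-hard) maximum inner product problem to the minimum Hamming distance problem, essentially running the reduction from boolean vector orthogonality backwards. First I would recall the final step of the earlier lemma showing maximum inner product is $\SETH$-hard: there, one partitions $U$ by Hamming weight into sets $U_i$, and observes that $v$ is orthogonal to $u \in U_i$ iff $\langle u, \neg v\rangle = i$, so orthogonality reduces to $d+1$ instances of maximum inner product. The symmetric observation is what I want here: for fixed Hamming weights $|u| = i$ and $|v| = j$, we have $\|u-v\|_2^2 = |u| + |v| - 2\langle u,v\rangle = i + j - 2\langle u,v\rangle$, so $\|u-v\|_2^2 < l$ is equivalent to $\langle u,v\rangle > (i+j-l)/2$, i.e. $\langle u,v\rangle \ge \lceil (i+j-l+1)/2 \rceil$.

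The key steps, in order: (1) Start from a maximum inner product instance $U, V \subseteq \{0,1\}^d$ with $|U|=|V|=n$ and threshold $l$, which by the earlier lemma (with $d = c\log n$) has no strongly subquadratic algorithm under $\SETH$. (2) Partition $U$ into $U_0, \ldots, U_d$ by Hamming weight and $V$ into $V_0, \ldots, V_d$ similarly; this is $O(dn)$ preprocessing and produces $(d+1)^2$ pairs of subsets. (3) For each pair $(U_i, V_j)$, the condition ``$\exists u\in U_i, v\in V_j$ with $\langle u,v\rangle \ge l$'' is equivalent to ``$\exists u\in U_i, v\in V_j$ with $\|u-v\|_2^2 < i+j-2l+1 =: l_{ij}$'', since on these subsets $\|u-v\|_2^2 = i+j-2\langle u,v\rangle$ and $\langle u,v\rangle \ge l \iff i+j-2\langle u,v\rangle \le i+j-2l < l_{ij}$ (and conversely, using integrality of $\langle u,v\rangle$). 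So this is exactly a minimum Hamming distance query on $(U_i, V_j)$ with threshold $l_{ij}$. (4) The original instance is a ``yes'' iff at least one of the $(d+1)^2$ Hamming-distance instances is a ``yes''. A strongly subquadratic algorithm for minimum Hamming distance at dimension $O(\log n)$ would solve all $(d+1)^2 = O(\log^2 n)$ subinstances in total time $O(\log^2 n)\cdot n^{2-\varepsilon} = n^{2-\varepsilon'}$, contradicting $\SETH$-hardness of maximum inner product. (5) Finally, note $\|u-v\|_2^2$ over boolean vectors is the Hamming distance, so a matching market whose preferences rank by Hamming distance is literally a $d$-dimensional geometric market (with narcissistic locations), which is the statement's intended use downstream; but for this lemma in isolation we only need steps (1)--(4).

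The one subtlety — which I'd call the main obstacle, though it's minor — is getting the threshold arithmetic and the strict-versus-nonstrict inequalities exactly right, and handling the parity: $\langle u,v\rangle \ge l$ with integer inner product corresponds to $i+j-2\langle u,v\rangle \le i+j-2l$, and we need this to be the same as being strictly less than some integer threshold $l_{ij}$; taking $l_{ij} = i+j-2l+1$ works because $i+j-2\langle u,v\rangle$ and $i+j-2l$ have the same parity, so ``$\le i+j-2l$'' and ``$< i+j-2l+1$'' coincide on the relevant integer values. One should also check $l_{ij}$ can be negative or exceed $d$, in which case the subinstance is trivially ``no'' or handled directly; this is harmless. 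Everything else is bookkeeping: the reduction is deterministic and blows up the instance size by only a $\mathrm{polylog}$ factor in the number of subinstances and no factor in $n$ or $d$ per subinstance, so all the $\SETH$-hardness parameters are preserved.
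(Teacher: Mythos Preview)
Your reduction is correct: partitioning both $U$ and $V$ by Hamming weight and using the identity $\|u-v\|_2^2 = |u| + |v| - 2\langle u,v\rangle$ turns each of the $(d+1)^2$ bucketed subinstances into an equivalent minimum Hamming distance query, and the polylogarithmic blowup in the number of subinstances preserves the $\SETH$ parameters. One small remark: your parity observation is not actually needed. All you use is that the Hamming distance is an integer, so strictly less than $l_{ij}=i+j-2l+1$ is the same as at most $i+j-2l$; the argument goes through without appealing to parity.

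As for comparison with the paper: the paper does not prove this lemma at all. It is stated with a citation to Alman and Williams and used as a black box, exactly as the earlier maximum inner product hardness lemma is first cited and only then given a self-contained proof for completeness. So your write-up is not a different route to the paper's proof but rather a genuine addition, supplying a proof where the paper gives none. Your choice to reduce from maximum inner product (whose $\SETH$-hardness the paper does prove) rather than redoing the whole Split-and-List chain from $\cnfsat$ is the natural one in this context and keeps the argument short.
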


For the hardness of finding a stable matching, the construction from
Theorem \ref{thm:hardnessfinding} works without adjustments.

\begin{corollary}[Geometric version of Theorem \ref{thm:hardnessfinding}]
  \label{thm:hardnessfindinggeom}
  Assuming $\SETH$, for any $\varepsilon > 0$, there is a $c$ such
  that finding a stable matching in the (boolean) $d$-dimensional
  geometric model with $d = c \log n$ dimensions requires time
  $\Omega(n^{2-\varepsilon})$.
\end{corollary}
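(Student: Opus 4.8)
The plan is to mirror the proof of Theorem~\ref{thm:hardnessfinding} exactly, observing that the reduction there is already ``geometric'' in disguise. Recall that in Theorem~\ref{thm:hardnessfinding} we start from a maximum inner product instance $U,V\subseteq\{0,1\}^d$ with threshold $l$, and build a symmetric matching market with $A(m_u)=\alpha(m_u)=u$ and $A(w_v)=\alpha(w_v)=v$, so that $\val_{m_u}(w_v)=\val_{w_v}(m_u)=\langle u,v\rangle$; any stable matching must then contain a pair achieving the maximum inner product. To obtain the geometric statement, I would instead reduce from the minimum Hamming distance problem, which is $\SETH$-hard with the same $d=c\log n$ parametrization by the cited lemma of Alman and Williams.

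First I would set $p(m_u)=q(m_u)=u$ for each $u\in U$ and $p(w_v)=q(w_v)=v$ for each $v\in V$, so the market is narcissistic and $d$-dimensional geometric. Since the squared Euclidean distance between two boolean vectors equals their Hamming distance, we have $\|p(m_u)-q(w_v)\|_2^2 = \|u-v\|_2^2 = \operatorname{dist}_H(u,v)$, and by symmetry the same value governs $w_v$'s ranking of $m_u$. Thus $m_u$ prefers $w_v$ to $w_{v'}$ exactly when $\operatorname{dist}_H(u,v)<\operatorname{dist}_H(u,v')$, and likewise on the women's side --- the market is genuinely symmetric in the sense that both parties of a pair rank each other by the same scalar.

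Next I would argue, just as in Theorem~\ref{thm:hardnessfinding}, that any stable matching $\mu$ must match some $m_u$ with a $w_v$ minimizing $\operatorname{dist}_H(u,v)$ over all pairs: if $(\hat u,\hat v)$ attains the global minimum but $\mu$ pairs everyone at strictly larger distance, then $(m_{\hat u},w_{\hat v})$ is a blocking pair, a contradiction. Hence, given any stable matching, we can read off in linear time whether $\min_{u,v}\operatorname{dist}_H(u,v)<l$, so a strongly subquadratic algorithm for finding a stable matching in the $d$-dimensional geometric model with $d=c\log n$ would yield one for minimum Hamming distance, contradicting $\SETH$. Since $d$ is unchanged by the reduction, the constant $c$ carries over directly.

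I do not expect any serious obstacle here --- this is precisely why the excerpt says ``the construction from Theorem~\ref{thm:hardnessfinding} works without adjustments.'' The only point requiring a line of care is the identity ``Hamming distance $=$ squared Euclidean distance for boolean vectors'' and the resulting monotone correspondence between the two preference orders, which is exactly the observation already recorded just before the minimum Hamming distance definition. Everything else --- the blocking-pair argument and the fine-grained bookkeeping of $d=c\log n$ --- is verbatim the earlier proof, so the write-up can legitimately be compressed to a proof sketch as the paper does for the rest of this section.
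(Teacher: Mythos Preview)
Your proposal is correct and matches the paper's intended argument exactly: the paper only says ``the construction from Theorem~\ref{thm:hardnessfinding} works without adjustments,'' and your write-up spells out precisely that construction in the geometric setting, reducing from minimum Hamming distance, using narcissistic locations/ideals equal to the input vectors, and invoking the Hamming-equals-squared-Euclidean observation already recorded in the paper. The blocking-pair argument and the $d=c\log n$ bookkeeping are verbatim from Theorem~\ref{thm:hardnessfinding}, so there is nothing to add.
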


For the hardness of verifying a stable matching, the construction is
as follows.

\begin{corollary}[Geometric version of Theorem \ref{thm:hardnessverify}]
  \label{thm:hardnessverifygeom}
  Assuming $\SETH$, for any $\varepsilon > 0$, there is a $c$ such
  that verifying a stable matching in the (boolean) $d$-dimensional
  geometric model with $d = c \log n$ dimensions requires time
  $\Omega(n^{2-\varepsilon})$.
\end{corollary}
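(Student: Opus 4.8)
The plan is to mimic the reduction of Theorem~\ref{thm:hardnessverify}, but to start from the minimum Hamming distance problem (which is $\SETH$-hard for $d = c\log n$ by the lemma above) and to use that, for boolean vectors, the Hamming distance equals the squared Euclidean distance, so that every matching market we construct is a genuine (boolean) geometric instance. The goal is: from a minimum Hamming distance instance $U,V\subseteq\{0,1\}^d$ with threshold $l$, build a geometric market of dimension $d' = O(d)$ together with a candidate matching $\mu$ such that $\mu$ is stable if and only if there is no pair $u\in U$, $v\in V$ with $\|u-v\|_2^2 < l$; since $d' = O(c\log n)$, a subquadratic verification algorithm would then refute $\SETH$.

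The one genuinely new obstacle is that the attribute reduction relied on $\langle u, \mathbf{0}\rangle = 0$ for every $u$, while $\|u - \mathbf{0}\|_2^2 = \|u\|_2^2$ is not constant, so the dummy nodes can no longer simply zero out the data coordinates. First I would repair this with a weight-balancing gadget: replace each $u\in U$ by $\hat u = u\circ \neg u\in\{0,1\}^{2d}$ (and each $v$ by $\hat v$), where $\neg u$ is the bitwise complement. Then $\|\hat u - \hat v\|_2^2 = 2\|u-v\|_2^2$, whereas the fixed anchor $\zeta = 1^d\circ 0^d$ satisfies $\|\hat u - \zeta\|_2^2 = d$ for every $u$. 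Hence, giving all dummy nodes first $2d$ coordinates equal to $\zeta$ makes every dummy node equidistant from every real node, recovering the property the attribute proof obtained for free.

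With this in hand I would rebuild the market of Theorem~\ref{thm:hardnessverify}: a man $m_u$ located at $\hat u$ followed by a short padding $P_m$, a woman $w_v$ at $\hat v$ followed by $P_w$, a dummy woman $w'_u$ and a dummy man $m'_v$ both with first $2d$ coordinates $\zeta$ followed by paddings $P_{w'}$, $P_{m'}$, and the candidate matching $\mu = \{(m_u,w'_u): u\in U\}\cup\{(m'_v,w_v): v\in V\}$. The $O(d)$ padding coordinates are chosen so that: (i) every pair of $\mu$ is at one common squared distance $D$; (ii) $\|m_u - w_v\|_2^2 = 2\|u-v\|_2^2 + (D - 2l)$, so that $(m_u,w_v)$ is blocking --- both strictly prefer each other, since the distance is symmetric and both of their partners sit at distance exactly $D$ --- precisely when $\|u-v\|_2^2 < l$; and (iii) each of the remaining non-matched pairs, namely $(m_u,w'_{u'})$ with $u'\neq u$, $(m'_v,w_{v'})$ with $v'\neq v$, and $(m'_v,w'_u)$, is at squared distance $\ge D$ and hence never blocking. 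Assuming (as we may, after adding at most one always-zero coordinate) that $d$ is even and $1\le l\le d$, a short calculation produces an explicit $\{0,1\}$-padding of length $O(d)$ realizing (i)--(iii): the four padding distances these conditions dictate are mutually consistent, and the remaining, irrelevant inter-group distances can be taken arbitrarily large, which is exactly what makes the system solvable over the hypercube.

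It then follows that $\mu$ is stable iff there is no pair with $\|u-v\|_2^2 < l$, so verifying stability of $\mu$ decides the (complement of the) minimum Hamming distance instance. As $d' = 2d + O(d) = O(c\log n)$, an $O(n^{2-\varepsilon})$ verification algorithm would contradict the $\SETH$-hardness of minimum Hamming distance, proving the $\Omega(n^{2-\varepsilon})$ lower bound. I expect the only real work to be the bookkeeping behind (i)--(iii): choosing the $O(d)$ padding so that all matched pairs share the distance $D$, the $(m_u,w_v)$ distances carry exactly the right affine offset, and every spurious pair stays at distance $\ge D$. Once the $u\mapsto u\circ\neg u$ balancing is in place, this mirrors the $1^{l-1}/0^{l-1}$ gadgetry of Theorem~\ref{thm:hardnessverify} and is routine.
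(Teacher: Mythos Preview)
Your proposal is correct and follows the same high-level reduction as the paper (minimum Hamming distance $\to$ verifying a geometric stable matching, via a market of real and dummy participants and a fixed candidate matching $\mu$). The difference is in the gadget. You identify the obstacle that dummy nodes cannot simply zero out the data coordinates, and you repair it with the weight-balancing map $u\mapsto u\circ\neg u$, a universal anchor $\zeta=1^d0^d$ for all dummies, and an $O(d)$ padding block that you then have to tune to realize conditions (i)--(iii). The paper sidesteps this obstacle entirely by a simpler device: instead of giving all dummies the same data coordinates, it lets each dummy copy the data coordinates of its matched partner. Concretely, $m_u$ sits at $u\circ 0^l$ and its dummy partner $w'_u$ at $u\circ 1^l$ (symmetrically $w_v$ at $v\circ 0^l$, $m'_v$ at $v\circ 1^l$). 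This immediately gives every matched pair squared distance exactly $l$, every real--real pair $(m_u,w_v)$ squared distance $\|u-v\|_2^2$, and every cross pair $(m_u,w'_{u'})$ with $u\ne u'$ squared distance $\ge l+1$; no weight-balancing and no padding bookkeeping are needed. Your route works and yields the same $d'=O(d)$ blowup, but the paper's trick of making the dummy depend on its partner rather than on a fixed anchor eliminates the ``routine'' calculation you anticipate.
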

\begin{proof}
  Let $U, V \subseteq \{0,1\}^d$ be the inputs to the minimum Hamming
  distance problem and let $l$ be the threshold.

  For every $u \in U$, define a real man $m_u$ with both ideal and
  location as $u \circ 0^l$ and a dummy woman $w'_u$ with ideal and
  location $u \circ 1^l$. Symmetrically for $v \in V$ define $w_v$
  with $v \circ 0^l$ and $m'_v$ with $v \circ 1^l$. The matching
  $(m_u, w'_u)$ for all $u \in U$ and $(w_v, m'_v)$ for all $v \in V$
  is stable if and only if there is there is no pair $u,v$ with
  Hamming distance less than $l$.
\end{proof}

The hardness results for checking a stable pair also
translate to the geometric model. In particular, since both variants
of the proof extend to the geometric model we have the same
consequences for nondeterministic algorithms.

\begin{corollary}[Geometric version of Theorem \ref{thm:stablepair}]
  \label{thm:stablepairgeom}
  Assuming $\SETH$, for any $\varepsilon > 0$, there is a $c$ such
  that determining whether a given pair is part of any or all stable
  matchings in the (boolean) $d$-dimensional geometric model with
  $d = c \log n$ dimensions requires time $\Omega(n^{2-\varepsilon})$.
\end{corollary}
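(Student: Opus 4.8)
The plan is to reduce the minimum Hamming distance problem to the stable pair problem in the geometric model, mirroring the reduction of Theorem~\ref{thm:stablepair} but expressing every preference constraint through squared Euclidean distances rather than inner products. Since the minimum Hamming distance problem is $\SETH$-hard for $d = c\log n$ dimensions by the lemma above, and since Hamming distance of boolean vectors equals squared Euclidean distance, it suffices to build, from an instance $U,V\subseteq\{0,1\}^d$ with threshold $l$, a matching market on $O(n)$ men and women in $d' = O(d)$ dimensions together with a designated pair $(m^*,w^*)$ such that $(m^*,w^*)$ lies in some (equivalently, every) stable matching if and only if there is no pair $u\in U$, $v\in V$ with $\|u-v\|_2^2 < l$.

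The structure I would follow is exactly the preference scheme of Theorem~\ref{thm:stablepair}: real men $m_u$ for $u\in U$, real women $w_v$ for $v\in V$, $n-1$ dummy men and $n-1$ dummy women, and the special pair $m^*,w^*$, with the preference lists arranged so that $m_u$'s ``good'' women are those $w_v$ with $\|u-v\|_2^2 < l$, and symmetrically for $w_v$; the dummies sit between the good and bad real partners, and $w^*$ (resp.\ $m^*$) is ranked just below all dummies for the real participants and just above the bad real participants. The two case analyses carry over verbatim: if some pair has small Hamming distance, the woman-optimal deferred-acceptance run forces $w^*$ down to $m^*$, so $(m^*,w^*)$ is in every stable matching; if no pair does, the man-optimal run leaves $m^*$ matched to a dummy, so $(m^*,w^*)$ is in no stable matching. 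The only genuinely new work is to exhibit location/ideal vectors (identical for each participant, since the model is narcissistic-friendly) realizing these preferences via squared distances, analogous to the block-coded attribute vectors at the end of the proof of Theorem~\ref{thm:stablepair}.

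The main obstacle — and the reason the paper only gives a proof sketch — is engineering these distance-based gadgets. With inner products, one can separate the ``good'' region ($\langle u,v\rangle\ge l$) from the ``bad'' region by a single threshold on a monotone quantity; with squared Euclidean distance the analogous comparison is $\|u-v\|_2^2 < l$, which behaves like an inner product up to the additive terms $\|u\|_2^2 + \|v\|_2^2$, so one must either use the lifting trick of Corollary~\ref{cor:verifyrealgeom} (append a coordinate $\sum_i x_i^2$ and a $-1/2$ coordinate) or, since we are in the boolean narcissistic setting, work directly with Hamming distances. The padding blocks $0^l$ versus $1^l$ must be chosen so that: a real man and his dummy partner are at a fixed distance; two distinct real men, or two dummies, are far apart in the relevant ordering; $m^*$ and $w^*$ sit at exactly the right distance from every class of participant; and a real man and a real woman are at distance governed purely by $\|u-v\|_2^2$. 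Verifying that one consistent choice of coordinate blocks simultaneously satisfies all of these (several dozen) inequalities is the tedious part; once such vectors are written down, correctness is an immediate consequence of the case analysis already carried out for Theorem~\ref{thm:stablepair}.

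I would therefore present the proof as: (i) recall that it suffices to reduce from minimum Hamming distance and that Hamming distance equals squared Euclidean distance on $\{0,1\}^{d'}$; (ii) state the same preference scheme and designated pair as in Theorem~\ref{thm:stablepair}, observing that the two deferred-acceptance case analyses are unchanged; (iii) give explicit narcissistic location vectors in $d' = O(d + l) = O(d)$ dimensions — built by concatenating a copy of the data vector with fixed padding blocks, in the same spirit as the attribute vectors displayed in Theorem~\ref{thm:stablepair} — and assert (leaving routine verification to the reader, as the paper does elsewhere) that they realize the desired preferences; (iv) conclude that a strongly subquadratic algorithm for the stable pair problem would yield one for minimum Hamming distance, contradicting $\SETH$. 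Because both variants of the Theorem~\ref{thm:stablepair} construction (the ``any'' and ``all'' versions coincide here, and the $\NSETH$ variant of Theorem~\ref{thm:stablepairother}) are purely combinatorial on top of the vector gadget, the same geometric encoding also yields the nondeterministic hardness statements, which is why the corollary can be stated for both ``any or all.''
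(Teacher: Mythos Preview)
Your plan is essentially the paper's proof: reduce from minimum Hamming distance, reuse the preference scheme and deferred-acceptance case analysis of Theorem~\ref{thm:stablepair} verbatim, and supply narcissistic boolean vectors realizing those preferences under squared Euclidean (i.e.\ Hamming) distance. Two points worth knowing about the paper's actual construction, since you flag the gadget engineering as the nontrivial step: first, the paper normalizes the threshold to $l = d/2+1$ by padding the input vectors, so it need not carry $l$ as a separate parameter; second, it encodes each data vector $u$ as $(u\circ\overline{u}\circ u\circ\overline{u})^3$, which forces every real participant's vector to have identical Hamming weight and makes the pairwise distances depend only on $\|u-v\|_2^2$ rather than on $\|u\|_2^2+\|v\|_2^2$---this is what replaces the lifting trick you mention and keeps everything boolean. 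With these two devices in hand the remaining nine ``tag'' coordinates are straightforward, and the paper does in fact write out the full vectors (for both the Theorem~\ref{thm:stablepair} and Theorem~\ref{thm:stablepairother} variants) rather than leaving them as a sketch.
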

\begin{proof}
  We again reduce from the minimum Hamming distance problem. We assume
  without loss of generality that $d$ is even and the threshold $l$ is
  exactly $d/2 + 1$, i.e. the instance is true if and only if there
  are vectors $u,v$ with Hamming distance at most $d/2$. We can reduce
  to this case from any other threshold by padding the vectors.

  We use the same preference orders as in the $d$-attribute model. The
  following narcissistic instance realizes the preference order from
  Theorem \ref{thm:stablepair}. For a vector $u \in \{0,1\}^d$,
  $\overline{u}$ denotes its component-wise complement.

  \begin{alignat*}{2}
    m_u: & (u \circ \overline{u} \circ u \circ \overline{u})^3 & \circ 000000000\\
    m_i: & 0^{12d} & \circ 100000000\\
    m^*: & 0^{12d} & \circ 001111111\\
    w_v: & (v \circ \overline{v} \circ v \circ \overline{v})^3 & \circ 000000000\\
    w_j: & (0^{2d} \circ 1^{2d})^3 & \circ 010000000\\
    w^*: & (0^{2d} \circ 1^{2d})^3 & \circ 101110000\\
  \end{alignat*}

  Likewise the preference orders for Theorem \ref{thm:stablepairother}
  are achieved by the following vectors.

  \begin{alignat*}{2}
    m_u: & (u \circ \overline{u} \circ u \circ \overline{u})^3 & \circ 000000000\mathbf{00}\\
    m_i: & 0^{12d} & \circ 100000000\mathbf{00}\\
    m^*: & 0^{12d} & \circ 001111111\mathbf{00}\\
    w_v: & (v \circ \overline{v} \circ v \circ \overline{v})^3 & \circ 000000000\mathbf{00}\\
    w_j: & (0^{2d} \circ 1^{2d})^3 & \circ 010000000\mathbf{11}\\
    w^*: & (0^{2d} \circ 1^{2d})^3 & \circ 101110000\mathbf{00}\\
  \end{alignat*}
\end{proof}

\subsection{Strategic Behavior}

With geometric and single-peaked preferences, we assume that the participants are not allowed to misrepresent their location points. Rather they may only misrepresent their preference ideal. As such, the results of this section do not apply when preferences are narcissistic.

\begin{theorem}\label{thm:geometricstrategy}
  There is no strategy proof algorithm to find a stable matching in the geometric preference model.
\end{theorem}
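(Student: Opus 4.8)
The plan is to follow the proof of Theorem~\ref{thm:liststrategy} and realize its $2$-list instance (Table~\ref{tab:liststrategy}) inside the geometric model, arranging things so that the deviation used there --- woman $w_2$ switching from list $\sigma_1$ to $\sigma_2$ --- becomes a deviation available to $w_2$ in the geometric model, namely a change of her \emph{ideal} with her \emph{location} held fixed. I would place all locations and ideals on a single line (so $d=1$; any larger $d$ then follows by padding all points with zero coordinates, which changes no Euclidean distance). For the men's rankings of the women I set $p(w_4)=0$, $p(w_3)=1$, $p(w_1)=2$, $p(w_2)=3$ and ideals $q(m_1)=q(m_2)=2.2$, $q(m_3)=q(m_4)=1.2$; comparing distances shows $m_1,m_2$ rank the women $w_1\succ w_2\succ w_3\succ w_4$ and $m_3,m_4$ rank them $w_3\succ w_1\succ w_2\succ w_4$, that is, the lists $\pi_1,\pi_2$ of Table~\ref{tab:liststrategy}. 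Symmetrically I set $p(m_4)=10$, $p(m_3)=11$, $p(m_1)=12$, $p(m_2)=13$ and ideals $q(w_2)=q(w_3)=q(w_4)=12.2$, $q(w_1)=11.2$, which gives $w_2,w_3,w_4$ the list $\sigma_1\colon m_1\succ m_2\succ m_3\succ m_4$ and $w_1$ the list $\sigma_2\colon m_3\succ m_1\succ m_4\succ m_2$. This is exactly the market of Theorem~\ref{thm:liststrategy}, so it has exactly two stable matchings, $\mu_M=\{(m_1,w_1),(m_2,w_2),(m_3,w_3),(m_4,w_4)\}$ and $\mu_W=\{(m_1,w_2),(m_2,w_3),(m_3,w_1),(m_4,w_4)\}$, which are distinct.

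Next I would check that the manipulation of Theorem~\ref{thm:liststrategy} is legal here: if $w_2$ changes only her ideal from $q(w_2)=12.2$ to $q(w_2)=11.2$, keeping $p(w_2)=3$, then her ranking of the men becomes $m_3\succ m_1\succ m_4\succ m_2=\sigma_2$, and because each man ranks the women solely by the women's locations and each other woman ranks the men solely by her own ideal, no other preference changes. Hence, by Theorem~\ref{thm:liststrategy}, after this deviation the market has the unique stable matching $\mu_W$, in which $w_2$ is matched with $m_1$, whom she truly prefers (under $\sigma_1$) to her partner $m_2$ in $\mu_M$.

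The conclusion is then identical to that of Theorem~\ref{thm:liststrategy}: a stable mechanism that does not always return the woman-optimal stable matching can be manipulated by the women (as $w_2$ does above whenever the mechanism returns $\mu_M$ on this market), and by the symmetry of the geometric model between the two sides the analogous statement holds with the roles of men and women exchanged for the man-optimal matching; since $\mu_M\neq\mu_W$ no mechanism can always output both, so none is strategy proof. The step I expect to require the actual work is the geometric realizability in the first paragraph: in contrast to the $d$-list model, the profile and the deviation are not ours to choose freely, so one must produce explicit points on the line for which (i) $\pi_1$ and $\pi_2$ are single-peaked and induced by the men's ideals through Euclidean distance, (ii) the same holds for $\sigma_1$ and $\sigma_2$ and the women's ideals, and (iii) $\sigma_2$ is obtained from $\sigma_1$ by moving $w_2$'s ideal alone. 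The coordinates above are chosen to meet all three, and the verifications needed are the elementary distance comparisons just listed.
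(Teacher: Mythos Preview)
Your approach is the same in spirit as the paper's (build a small one-dimensional geometric instance with two stable matchings in which one woman can force the woman-optimal outcome by shifting only her ideal, then appeal to symmetry), but your concrete construction contains an error that breaks the argument, not just the arithmetic.

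With your locations $p(w_4)=0$, $p(w_3)=1$, $p(w_1)=2$, $p(w_2)=3$ and ideal $q(m_3)=q(m_4)=1.2$, the distances are $0.2$ to $w_3$, $0.8$ to $w_1$, $1.2$ to $w_4$, and $1.8$ to $w_2$, so $m_3,m_4$ actually rank the women $w_3\succ w_1\succ w_4\succ w_2$, not $\pi_2=w_3\succ w_1\succ w_2\succ w_4$. In fact, with your location choice $\pi_2$ is \emph{not} Euclidean-realizable at all: $w_3\succ w_1$ forces the ideal to be below $1.5$, while $w_2\succ w_4$ forces it to be above $1.5$. This matters for the manipulation step. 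In the market you have actually built, run man-proposing deferred acceptance after $w_2$ switches her ideal to $11.2$: when $m_4$ is rejected by $w_3$ and $w_1$ he next proposes to $w_4$ (not $w_2$), the cascade that displaces $m_2$ and $m_3$ never happens, and the man-optimal matching remains $\{(m_1,w_1),(m_2,w_2),(m_3,w_3),(m_4,w_4)\}$. Hence after $w_2$'s deviation there are still two stable matchings, and you cannot invoke Theorem~\ref{thm:liststrategy}.

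The fix is easy: relocate $w_4$ far to the right so that $w_4$ is last for every ideal in the relevant range. For example, with $p(w_3)=0$, $p(w_1)=1$, $p(w_2)=2$, $p(w_4)=10$, the ideals $q(m_1)=q(m_2)=1.2$ and $q(m_3)=q(m_4)=0.2$ give exactly $\pi_1$ and $\pi_2$, and the rest of your argument then goes through verbatim. (The paper itself avoids this issue by building a fresh $3\times 3$ geometric instance rather than embedding the $4\times 4$ list instance of Table~\ref{tab:liststrategy}.)
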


\begin{proof}
We consider one-dimensional geometric preferences. Let the preference points and ideals be as given in Table \ref{tab:geometricstrategy} which yield the depicted preference lists. As in the proof for \ref{thm:liststrategy}, there are two stable matchings: the man-optimal matching $\{(m_1,w_3),(m_2,w_1),(m_3,w_2)\}$ and the woman-optimal matching $\{(m_1,w_3),(m_2,w_2),(m_3,w_1)\}$. However, if $w_2$ changes her ideal to $5/3$ then her preference list is $m_2\succ m_1\succ m_3$. Now there is a unique stable matching which is $\{(m_1,w_3),(m_2,w_2),(m_3,w_1)\}$, the woman-optimal stable matching from the original preferences. Therefore, any mechanism that does not always output the woman optimal stable matching can be manipulated by the women to their advantage. Similarly, any mechanism that does not always output the man-optimal matching could be manipulated by the men in some instances. Thus there is no strategy-proof mechanism for geometric preferences.
\end{proof}

\begin{table}[H]
  \caption{Geometric preferences that can be manipulated}
  \label{tab:geometricstrategy}
  \begin{center}
    \begin{tabular}{c|c|c}
      Man & Location $(p)$ & Ideal $(q)$\\
      \hline
      $m_1$ & 1 & $7/3$\\
      $m_2$ & 2 & 1\\
      $m_3$ & 3 & $5/3$\\
    \end{tabular}
    \quad
    \begin{tabular}{c|c|c}
      Woman & Location $(p)$ & Ideal $(q)$\\
      \hline
      $w_1$ & 1 & 3\\
      $w_2$ & 2 & $7/3$\\
      $w_3$ & 3 & 3\\
    \end{tabular}
  \end{center}
  
  \begin{center}
    \begin{tabular}{c|c}
      Man & Preference List\\
      \hline
      $m_1$ & $w_2\succ w_3\succ w_1$\\
      $m_2$ & $w_1\succ w_2\succ w_3$\\
      $m_3$ & $w_2\succ w_1\succ w_3$\\
    \end{tabular}
    \quad
    \begin{tabular}{c|c}
      Woman & Preference List\\
      \hline
      $w_1$ & $m_3\succ m_2\succ m_1$\\
      $w_2$ & $m_2\succ m_3\succ m_1$\\
      $w_3$ & $m_3\succ m_2\succ m_1$\\
    \end{tabular}
  \end{center}
\end{table}

Since one-dimensional geometric preferences are a special case of single-peaked preferences the following corollary results directly from Theorem \ref{thm:geometricstrategy}.

\begin{corollary}
  There is no strategy proof algorithm to find a stable matching in the single-peaked preference model.
\end{corollary}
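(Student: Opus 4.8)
\emph{Proof proposal.} The plan is to argue by containment: one-dimensional geometric preferences form a subclass of single-peaked preferences, so the hard instance already constructed for Theorem~\ref{thm:geometricstrategy} is simultaneously a single-peaked instance, and the manipulation exhibited there uses only the operations permitted in the single-peaked model. First I would recall the reduction between the models: given a one-dimensional geometric profile with locations $p(\cdot)$ and ideals $q(\cdot)$, order the women along the line by their locations $p(w_1)<\cdots<p(w_n)$ and set each man's single-peaked ``peak'' to be $q(m)$; the relation ``$w$ is preferred to $w'$ iff $|p(w)-q(m)|<|p(w')-q(m)|$'' is exactly the distance-from-peak relation, which satisfies the single-peakedness condition in the Preliminaries (if $p(w_i)\le q(m)$ then $w_i\succ_m w_j$ for $j<i$, and symmetrically on the other side). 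Do the same for the women. Hence the profile of Table~\ref{tab:geometricstrategy}, together with its displayed preference lists, is a legal single-peaked profile.

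Next I would transfer the manipulation. In the single-peaked model the allowed strategic move is misrepresenting one's peak (the value $q(\cdot)$), keeping the location ordering fixed; this is precisely the move used in Theorem~\ref{thm:geometricstrategy}, where $w_2$ reports the peak $5/3$ instead of $7/3$, which changes her induced list to $m_2\succ m_1\succ m_3$. Since the set of stable matchings of a profile depends only on the induced preference lists, the same two facts hold in the single-peaked instance: the original profile has exactly the two stable matchings $\{(m_1,w_3),(m_2,w_1),(m_3,w_2)\}$ and $\{(m_1,w_3),(m_2,w_2),(m_3,w_1)\}$, while after $w_2$'s misreport the woman-optimal one becomes the unique stable matching. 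Therefore any mechanism that does not always return the woman-optimal stable matching is manipulable by a woman, and by the symmetry of the table any mechanism that does not always return the man-optimal one is manipulable by a man; no single mechanism can do both, so none is strategy proof. I would then simply invoke that the corollary follows.

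There is essentially no hard step here: the only thing to check carefully is that the embedding of one-dimensional geometric profiles into single-peaked profiles preserves both the induced preference lists \emph{and} the class of permitted manipulations (peak/ideal perturbations), so that the argument of Theorem~\ref{thm:geometricstrategy} applies \emph{verbatim}. Since the excerpt already remarks that ``one-dimensional geometric preferences are a special case of single-peaked preferences,'' this observation is immediate, and the corollary is a direct consequence.
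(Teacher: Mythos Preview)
Your proposal is correct and follows exactly the paper's approach: the paper simply notes that one-dimensional geometric preferences are a special case of single-peaked preferences and states the corollary as an immediate consequence of Theorem~\ref{thm:geometricstrategy}. You have spelled out in more detail the two points the paper leaves implicit---that the embedding preserves the induced preference lists and that misreporting an ideal is the same operation as misreporting a peak---but the argument is identical.
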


\section{Conclusion and Open Problems}
We give subquadratic algorithms for finding and verifying stable
matchings in the $d$-attribute model and $d$-list model. We also show
that, assuming $\SETH$, one can only hope to find such algorithms if
the number of attributes $d$ is bounded by $O(\log n)$.

For a number of cases there is a gap between the conditional lower
bound and the upper bound. Our algorithms with real attributes and
weights are only subquadratic if the dimension is constant. Even for small
constants our algorithm to find a stable matching is not tight, as it
is not subquadratic for any $d = O(\log n)$. The techniques we use
when the attributes and weights are small constants do not readily
apply to the more general case.

There is also a gap between the time complexity of our algorithms for finding
a stable matching and verifying a stable matching. It would
be interesting to either close or explain this gap. On the one hand, subquadratic algorithms for finding a stable matching would demonstrate that the attribute and list models are computationally simpler than the general preference model. On the other hand, proving that there are no subquadratic algorithms would show a distinction between the problems of finding and verifying a stable matching in these settings which does not exist for the general preference model. Currently, we do not have a subquadratic
algorithm for finding a stable matching even in the $2$-list case, while we have an optimal algorithm for verifying a stable matching for $d$ lists. This $2$-list case seems to be a good starting place for
further research.

Additionally it is worth considering succinct preference models for other computational problems that involve preferences to see if we can also develop improved algorithms for these problems. For example, the Top Trading Cycles algorithm \cite{ShapleyScarf1974} can be made to run in subquadratic time for $d$-attribute preferences (when $d$ is constant) using the ray shooting techniques applied in this paper to find participants' top choices.

\noindent
{\bf Acknowledgment:} We would like to thank Russell Impagliazzo, Vijay Vazirani, and the anonymous reviewers for helpful discussions and comments.

\bibliographystyle{plain}
\bibliography{refs}

\end{document}